\def\llncs{0}  
\def\draft{1}  
\def\anon{0}   
\def\fullversion{1} 
\def\quantinfo{0}
\ifnum\anon=1
  \def\effectivellncs{1}
\else
  \def\effectivellncs{\llncs}
\fi

\ifnum\effectivellncs=1
  \documentclass[runningheads]{llncs}
\else
  \documentclass[11pt,letterpaper]{article}
  \ifnum\fullversion=1
    \usepackage{fullpage}
  \else
    \usepackage[margin=1in]{geometry}
  \fi
\fi

\usepackage[T1]{fontenc}
\usepackage{silence}
\usepackage{mathtools,amssymb}
\mathtoolsset{showonlyrefs=false,showmanualtags}
\usepackage{braket,comment}
\usepackage{physics}
\usepackage{xcolor}
\usepackage{array}
\usepackage{colortbl}
\usepackage{multirow}
\makeatletter
\let\prismorignewcounter\newcounter
\newcommand{\prismpreparecounter}[1]{%
  \ifnum\pdfstrcmp{#1}{Hpclinenumber}=0 \global\let\theHpclinenumber\relax\fi%
  \ifnum\pdfstrcmp{#1}{H@pclinenumber}=0 \global\expandafter\let\csname theH@pclinenumber\endcsname\relax\fi%
  \ifnum\pdfstrcmp{#1}{Hpcgamecounter}=0 \global\let\theHpcgamecounter\relax\fi%
  \ifnum\pdfstrcmp{#1}{Hpcrlinenumber}=0 \global\let\theHpcrlinenumber\relax\fi%
}
\renewcommand{\newcounter}[1]{\@ifnextchar[{\prismnewcounter@i{#1}}{\prismnewcounter@ii{#1}}}
\newcommand{\prismnewcounter@i}[2]{\prismpreparecounter{#1}\prismorignewcounter{#1}[#2]}
\newcommand{\prismnewcounter@ii}[1]{\prismpreparecounter{#1}\prismorignewcounter{#1}}
\makeatother
\usepackage{cryptocode}
\makeatletter
\let\newcounter\prismorignewcounter
\makeatother
\usepackage[hypertexnames=false]{hyperref}
\usepackage[capitalize,nameinlink,noabbrev]{cleveref}
\ifnum\effectivellncs=0
  \usepackage{amsthm}
  \newtheorem{theorem}{Theorem}
  \newtheorem{lemma}{Lemma}
  \newtheorem{claim}{Claim}
  \newtheorem{proposition}{Proposition}
  \newtheorem{corollary}{Corollary}
  \theoremstyle{definition}
  \newtheorem{definition}{Definition}
  \theoremstyle{remark}
  \newtheorem{remark}{Remark}
  \theoremstyle{plain}
  
\else
  \spnewtheorem{openproblem}{Open Problem}{\bfseries}{\itshape}
\fi
\crefname{openproblem}{Open Problem}{Open Problems}
\Crefname{openproblem}{Open Problem}{Open Problems}
\ifnum\draft=1
  \newcommand{\authornote}[3]{\textcolor{#2}{[#1: #3]}}
  \newcommand{\amit}[1]{\authornote{Amit}{blue}{#1}}
  \newcommand{\vipul}[1]{\authornote{Vipul}{red}{#1}}
\else
  \newcommand{\authornote}[3]{}
  \newcommand{\amit}[1]{}
  \newcommand{\vipul}[1]{}
\fi
\newcommand{\Esq}{E_{\mathrm{sq}}}

\newcommand{\qpvb}{\ensuremath{\mathsf{QPV\text{-}base}}}
\newcommand{\qpva}{\ensuremath{\mathsf{QPV\text{-}amplified}}}
\newcommand{\TD}[1]{\operatorname{TD}\!\left[#1\right]}
\newcommand{\RR}{\mathbb{R}}

\newcommand{\EE}{\mathbb{E}}

\newcommand{\cvhull}{\mathrm{Convex\text{-}Hull}}
\newcommand{\relocate}{\mathrm{Relocate}}

\renewcommand{\secparam}{\ensuremath{\lambda}}

\renewcommand{\secpar}{\lambda}
\newcommand{\bit}{\{0,1\}}

\begin{document}

\title{Breaking the Bounded Entanglement Barrier for Quantum Position Verification}
\ifnum\effectivellncs=1
  \titlerunning{Quantum Position Verification}
  \ifnum\anon=1
    \author{}
    \authorrunning{}
    \institute{}
  \else
    \author{Amit Behera \and Vipul Goyal}
    \authorrunning{A. Behera and V. Goyal}
    \institute{NTT Research\\
    \email{amitbehera1767@gmail.com, }\email{vipul@vipulgoyal.org}}
  \fi
\else
  \ifnum\anon=1
    \author{}
  \else
    \author{Amit Behera\\NTT Research\\\texttt{amitbehera1767@gmail.com}
    \and
    Vipul Goyal\\NTT Research\\\texttt{vipul@vipulgoyal.org}}
  \fi
  \date{}
\fi
\maketitle

\begin{abstract}
Position verification, introduced by Chandran et al. (SIAM J. Computing 2014), allows verifiers to test a prover's claimed position by an interactive protocol. Classical position verification is impossible. Even for Quantum Position Verification (QPV), there always exists an LOCC (Local Operations and Classical Communication) attack if the adversary can hold an exponentially large amount of preshared entanglement. 

Somewhat surprisingly, we show that we can circumvent this barrier in the idealized continuous-time model, where time is represented by a real-valued parameter and challenge messages can be sent at a time sampled uniformly from a real interval. In this model, we give a BB84-based protocol that remains secure against any finite coalition of LOCC adversaries with arbitrary finite (possibly exponential) quantum storage and entanglement.

Our construction can also be instantiated in the discrete-time model. Even though the previous impossibility results apply, we are able to obtain an information-theoretic QPV protocol in which the honest parties’ total resources (communication and storage) can be significantly smaller than the adversarial resource bound. In fact, we can achieve any desired polynomial gap between honest parties and adversarial resources. To our knowledge, all previous protocols in the literature required resources of the honest parties to be at least as large as the adversarial entanglement.

Interestingly, the resource gap between the honest and adversarial party resources in our protocol depends on how precisely time can be measured. As the precision of the best-known clock improves with further research, the resource gap in our protocol keeps increasing. In particular, the adversarial resource bound keeps increasing while the honest parties' resources remain largely the same. We call this the "I sleep, you work" paradigm.  
\end{abstract}
    
\makeatletter
\let\l@title\@gobbletwo
\let\l@author\@gobbletwo
\makeatother
\setcounter{tocdepth}{2}
\tableofcontents
\clearpage

\section{Introduction}
Position verification, introduced by Chandran et al.~\cite{CGM+14}, allows verifiers to check a prover's claimed location by an interactive protocol subjected to timing constraints. Position verification forms the foundation for position-based cryptography, where a party's location acts as its credential. However, Chandran et al.~\cite{CGM+14} showed that position verification is impossible using only classical information, based on the fact that adversaries surrounding the claimed location but not at it can copy, relay, and process the verifiers' messages to mimic an honest prover's actions.
In the quantum setting, Quantum position verification (QPV) introduced by Buhrman et al.~\cite{BCF+14}, attempts to circumvent such a generic attack by using quantum information, which cannot be copied in general, thanks to the no-cloning principle of quantum mechanics~\cite{WZ82,Die82}. Unfortunately, even QPV cannot be unconditionally secure, as Ref.~\cite{BCF+14} shows that any QPV protocol is insecure against a coalition of adversaries with LOCC strategies and a sufficiently large amount of preshared entanglement. Thus, QPV protocols can only exist under some restriction of bounded adversarial entanglement for LOCC adversaries. From now on, by adversaries we only mean LOCC adversaries (see \Cref{sec:adversary-model} for more details). 

\subsection{Our Results}

Somewhat surprisingly, we show that in the idealized continuous-time model, where (similar to prior work) all computations are assumed to be instantaneous, the previous impossibility result of Buhrman et al. can be bypassed. We provide an unconditional, information-theoretically secure construction that achieves security even against an adversary holding any finite, possibly exponential, amount of entanglement. To understand why we can bypass the previous impossibility result, consider the following construction idea.

\paragraph{Construction} Consider the following simple protocol with two verifiers on opposite sides of the claimed location on a 1-D line, equidistant from the claimed location.
\begin{itemize}
    \item Suppose the two verifiers prepare a random BB84 state and send it to the honest prover at the beginning of the protocol.
    \item At a fixed time during the protocol, one of the verifiers, selected uniformly at random, asks the prover to return it to them.
    \item The honest prover only needs to store the state and send it to the selected verifier.
    \item The selected verifier first checks if the returned qubit arrived on time, and, if so, checks if the returned qubit is indeed the BB84 state they prepared, and accepts if both checks pass.
\end{itemize}

  There is a natural LOCC attack for the above construction. Consider two adversaries $A_0$ and $A_1$ located between the two verifiers but on opposite sides of the claimed location. Before the protocol, $A_0$ receives the qubit sent by the verifiers. During the protocol, $A_0$ first waits for any return instruction sent from its nearby verifier, and if so, returns the qubit. If no instruction arrives (meaning the verifier closer to $A_1$ has asked to return), $A_0$ sends the qubit to $A_1$ using quantum teleportation, who applies the required correction on its half of the Bell pair to recover the BB84 state and then sends it to its nearby verifier. The teleportation process consumes exactly one pre-shared Bell pair, but the attack succeeds with probability $1$. Buhrman et al.~\cite{BCF+14} show that this attack idea can be extended against any general QPV protocol with multiple rounds of communication. In particular, adversaries can use (potentially an exponential amount of) preshared entanglement to perform teleportation heavily in order to break the security of any protocol.

A crucial assumption in the impossibility result of Buhrman et al.~\cite{BCF+14} is that there is only a finite number of time slots where the messages or instructions can possibly be sent by the verifier. In their generic attack, in every such time slot, the adversaries use quantum teleportation to send quantum states among themselves, which consumes (possibly exponential) amount of quantum entanglement. 

\paragraph{Randomizing the Challenge Message Time} 

Our key idea to bypass the impossibility result of Buhrman et al. in the continuous-time model is to randomize the time at which the selected verifier sends the challenge to the prover in the above protocol. Since we are in the continuous-time model, the number of possibilities for the time when the challenge message can appear is no longer finite. In particular, the upgraded protocol is as follows.

\begin{enumerate}
    \item As before, the verifiers prepare a BB84 qubit and send it to the prover.
    \item Since we are in the continuous-time model, we view a protocol session as a fixed time interval of, say, one second, given by $[0,1]$. The verifiers select a time $t\in [0,1]$ uniformly at random, along with one of the two verifiers who will issue a return request to the prover.
    \item The selected verifier sends a return request to the prover \emph{at the selected time $t$}.
    \item The prover on receiving the return request, sends the state immediately to the selected verifier, who first checks if the returned qubit arrived on time, and if so checks if the returned qubit is indeed the BB84 state they prepared.
\end{enumerate}

To perform the attack given by the impossibility result of Buhrman et al.~\cite{BCF+14}, in every possible time slot where the challenge message might arrive, the adversaries would have to perform quantum teleportation (thus consuming a constant amount of entanglement) in order to be prepared to respond in time. However, now the number of such possible time slots is not finite and hence any finite amount of preshared entanglement will not be sufficient to perform the attack. 

Note that in the above protocol, the adversary can always guess with probability half which verifier will send the challenge message and win if the guess is correct. To remedy this, we show how to amplify the security using sequential repetition. Overall, we get the following result in the continuous-time model. 
\begin{theorem}[Informal result in the ideal simplified continuous-time model]
\label{thm:continuous-main-result-informal}
In the ideal simplified continuous-time model, there exists a QPV protocol in 1-D such that every finite LOCC adversarial coalition, with an arbitrary finite quantum-register size and any finite amount of preshared entanglement, has a negligible success probability. 
\end{theorem}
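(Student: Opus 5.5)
We prove the statement by analyzing one round of the randomized-time BB84 protocol, showing that any finite coalition succeeds in that round with probability at most some constant $c<1$, and then amplifying by sequential repetition. We fix the two verifiers $V_0,V_1$ at $\pm 1$ and the claimed position at $0$. We also fix a finite coalition of LOCC adversaries placed at points other than $0$, with finite registers and finite preshared entanglement. In the idealized continuous-time model, the coalition's strategy is a sequence of instantaneous local operations and classical/quantum messages exchanged at times the adversaries choose.

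The first step is a ``finitely many action times'' lemma. Any adversarial strategy with finite resources can perform only finitely many teleportations, or more generally finitely many nonlocal transfers of quantum information across $0$, during $[0,1]$. Each teleportation consumes at least one ebit, and quantum messages travel at bounded speed. So, conditioned on everything except the challenge time $t$, there is a finite set $S$ of times at which quantum information can be moved from the left side of $0$ to the right side, or vice versa.

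The key geometric step treats the challenge as a function of $t$ and of the choice $b$ of verifier. To answer $V_b$ on time, the coalition must have the BB84 qubit, or a state from which it can be recovered locally, on side $b$ within a window of width $\epsilon$ around $t$. Here $\epsilon$ is the timing slack, which is $0$ in the idealized model. Since $t$ is uniform on $[0,1]$, with probability $1$ the time $t$ avoids the measure-zero set $S$, or more precisely the $\epsilon$-neighborhood of $S$, whose measure tends to $0$. Conditioned on this, the location of the ``recoverable copy'' of the qubit at time $t$ is determined before the challenge could influence it. By a standard monogamy or no-cloning argument, the qubit cannot be simultaneously recoverable on both sides. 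We will formalize this with a BB84 monogamy-of-entanglement game bound, $\cos^2(\pi/8)$, for an adversary who splits the information between two sides without knowing the basis. The coalition therefore wins with probability at most $\tfrac12+\tfrac12\cos^2(\pi/8)$ or so, giving $c<1$.

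The second step is amplification. We run $n$ sequential rounds with independent BB84 states, independent times $t_i$ and independent choices $b_i$. The verifiers accept only if all rounds pass. The per-round bound holds conditioned on the entire history, since the argument only uses that $t_i$ is fresh and that the finite resources left after earlier rounds are still finite. Chaining conditional probabilities then gives success probability at most $c^n$, which is negligible.

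The main obstacle is the first step: making rigorous the claim that finite entanglement yields only finitely many, or measure-zero, ``switch times.'' A clever adversary might try to delocalize the qubit via a single large entangled operation rather than repeated teleportations. I would first handle this by defining, for each time $s$, the side on which the qubit is locally recoverable. I would then show that the map from $s$ to this side is piecewise constant with finitely many jumps, each jump requiring a communication event. This uses the fact that a finite coalition sends finitely many messages. Failing that, I would fall back on a simplified model where the adversaries act at a fixed finite grid of times plus reactive responses, which is what the ``ideal simplified'' qualifier appears to permit.
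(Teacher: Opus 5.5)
Your route differs from the paper's and can be completed, but not for the reason you give first. Drop the ebit-counting justification of your first step. A general LOCC strategy need not move quantum information across $0$ by discrete teleportations. Bounding the entanglement cost of an \emph{approximate} transfer is exactly the hard part for which the paper needs \Cref{cor:k-extraction-entanglement-consumption-distributed}, where locking forces the restriction to classically flagged mixtures of pure states. The finiteness you need instead comes directly from the adversary model. The coalition is one finite circuit, so, given the history before a round, the all-idle continuation of that round contains only finitely many adversarial events, and their space-time positions do not depend on $t$.

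Likewise, ``the side on which the qubit is locally recoverable at time $s$'' is the wrong object. As the discussion in \Cref{sec:invoking-entaglement-consumption-theorem} shows, recoverability on one side can depend on classical messages that the other side sends only upon seeing no instruction. The right object is the light-like cut $\{(y,t-|y|)\}$. Both possible responses to a challenge at time $t$ are determined by two things: the state on this cut, with classical messages already sent copied to their receivers, and the operations on the responding side after the cut. On the other side, the cut is exactly the boundary of the causal past of the deadline, and a far-side response can meet the deadline only if it is launched from the cut.

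For all but finitely many $t$, no adversarial event lies on the cut. The two responses are then local post-processings of disjoint quantum parts of one common state. Applying both at once and measuring in the basis $\theta$ gives a strategy for the BB84 monogamy game of~\cite{TFKW13}, so $P_0(t)+P_1(t)\leq 1+\cos^2(\pi/8)$. The per-round bound $\tfrac12(1+\cos^2(\pi/8))$ then holds conditioned on every history, since the BB84 state, $t$ and $b$ are fresh. Your fallback to a grid-restricted model is unnecessary, and it would not prove the stated theorem.

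The paper never analyses a continuous round directly. It writes the uniform slot time as $a_k+U_k+J_k/s$ and averages to fix offsets $u_k$ for which the same circuit attacks an $s$-slot protocol with at least the same success probability. It then invokes \Cref{thm:amplified_security-generalized}, whose proof combines squashed-entanglement consumption across good slots, reordered evolutions and post-selected blocks. This gives $Q>B_s$ for every $s\geq 9$, which is impossible for a fixed finite $Q$ because $B_s\to\infty$.

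Your repaired argument is much more elementary. It gives the explicit bound $\bigl(\tfrac12(1+\cos^2(\pi/8))\bigr)^m$ for every $\lambda$, needs only $m=\omega(\log\lambda)$, and uses neither the register size nor any entanglement accounting. It identifies the finiteness of the circuit, set against a continuum of challenge times, as the only source of security. The paper's route instead obtains the continuous-time statement as the $s\to\infty$ limit of its quantitative register bound, which is also what yields the resource gap with finitely many slots. Your argument gives nothing in that setting, since each of finitely many cuts can then carry an adversarial event, exactly as in the teleportation attack.
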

 The formal statement and proofs are given in \Cref{sec:continuous-time-model}.

\paragraph{Moving to the Discrete-time Model.}

Even though treating time as continuous is a standard practice in various areas of physics, including classical and quantum mechanics, electrodynamics, and general relativity see Refs.~\cite{GoldsteinClassicalMechanics,JacksonClassicalElectrodynamics,WaldGeneralRelativity,SakuraiModernQuantumMechanics}, implementing the continuous-time model in the real world is impossible. Indeed, the Margolus--Levitin bound~\cite{ML98} limits how quickly a physical system with bounded average energy can pass through distinguishable states.

Fortunately, our ideas for the continuous-time model carry over seamlessly to the discrete-time model. In the discrete-time model, the previous impossibility of Buhrman et al. does apply, and hence the best one can hope for is to obtain a gap between honest versus adversarial resources, where by resources we mean total communication and storage combined. To our knowledge, all prior results in the literature require the honest party resources to be at least as large as the assumed adversarial resource bound. We break this barrier and provide a construction in which adversarial resources can be significantly larger than the honest party resources, and security still holds. In fact, we can achieve any desired polynomial gap between adversarial and honest party resources by choosing the protocol parameters appropriately.

We start by noting the main changes in our protocol in the discrete-time model.
\begin{enumerate}
    \item In each round (which are sequentially repeated) there are \emph{finitely many} possible points in time referred to as slots, in which the verifiers might possibly send the return instruction (but don't have to). Concretely, we work with $m$ sequentially repeated rounds, and each round has $s$ slots.
    \item As in the continuous-time model, the verifiers send a random BB84 state before the round, and then one of the verifiers is selected at random. But now the selected verifier samples a random slot $i\in [s]$ and sends the return instruction in this slot. 
\end{enumerate}

The key conceptual insight is that, among all possible slots in a round, there is a single real challenge slot when the verifiers send the challenge, and the rest are dummy slots. The honest parties operate only during the real challenge slot. Hence, the honest parties’ resources remain independent of the number of slots $s$. On the contrary, since the real challenge slot is hidden from the adversaries, any adversarial coalition must remain prepared, and intuitively should keep teleporting the BB84 state for the majority of the slots to
ensure a sufficiently large constant success probability.
We call this the "I sleep you work" paradigm because, in an overwhelming majority of the slots, the honest parties are simply sleeping while the adversaries need to actively work, thereby consuming entanglement proportional to the number of slots.

\paragraph{Another Source of Resource Gap} We also generalize our protocol to multiple BB84 states, rather than a single BB84 state, which provides an even stronger resource gap. In particular, the final protocol with $m$ rounds, each consisting of $s$ slots, has an additional parameter $R_H$, representing the total number of qubits an honest prover can store, and the protocol is as follows.  
 \begin{enumerate}
     \item The verifiers send $R_H$ qubits at the beginning of the first round, which the honest prover stores as an ordered list of qubits.
     \item Each round consists of $s$ possible challenge slots, and in each round, the selected verifier, along with sampling a random slot $i\in [s]$, also samples a \emph{random index} out of the $R_H$ options and sends a return request for the chosen index at slot $i$. Upon receiving the returned qubit, the selected verifier checks if the qubit was returned on time and if it is indeed the BB84 qubit at the chosen index. 
     \item After each round ends, the verifiers provide the prover with a fresh, random BB84 state to replenish the qubit used up in the last round.
 \end{enumerate}

Note that the extension to $R_H$ qubits adds an additive factor of $\widetilde{O}(R_H)$ for the honest parties' resources, but note that now the adversaries would need to keep teleporting $R_H$ qubits as opposed to a single qubit in every slot, which should increase the bound on the adversarial resources by an additive factor of $\tilde{\Omega}(mR_Hs)$. We show that this is indeed the case; thus, for large $R_H$, we get an even stronger resource gap. In particular, setting the parameters $m,s,R_H$ as polynomial functions of the security parameter $\secparam$, we get the following result.
\begin{theorem}[Informal main result in the discrete-time model]\label{thm:main-result-informal}
Let $s(\secparam)\geq 9, R_H(\secparam)\geq 1$ and $m(\secparam)\in \omega(\log^2(\secparam))$ be polynomially bounded parameters. Then there exists an efficient QPV protocol in 1-D based on BB84 states\footnote{The protocol only requires very simple quantum states called BB84 states~\cite{BB84}, whose preparation and transmission have been studied extensively in quantum key distribution; see Refs.~\cite{BBB+92,BCD+09,LMP+20,ABS+25}. }, requiring total communication and storage of $\widetilde{O}(R_H+m)$ for the honest prover and verifiers such that any LOCC coalition of finite adversaries, with non-negligible success probability against our protocol, must have total quantum-register size $\Omega(sR_Hm/\log_2^2(\secparam))=\widetilde{\Omega}(sR_Hm)$. 
\end{theorem}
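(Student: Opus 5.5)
The plan has two parts: check the honest-party resources directly, then reduce adversarial success to a per-slot, per-round communication requirement that we convert into a quantum-register lower bound.

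\textbf{Honest resources.} The verifiers send $R_H$ BB84 qubits once. Each of the $m$ rounds costs one challenge message of $O(\log R_H+\log s)$ bits, one returned qubit, and one replenishing qubit. The total is $\widetilde{O}(R_H+m)$, with no dependence on $s$, because the honest prover is idle in dummy slots.

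\textbf{Single round, single slot.} Fix a round and a slot $i$, and condition on the event that the challenge falls in slot $i$ for index $j$, with the selected verifier uniform in $\{0,1\}$. By no-cloning, I would apply a monogamy-of-entanglement bound for BB84: two spacelike-separated responders who both answer correctly, without communicating within the slot's light-cone window, succeed with probability at most $\cos^2(\pi/8)\approx 0.85$ per qubit. The identity of the verifier is revealed only locally, so a single adversary cannot wait and then fetch. The goal of this step is a dichotomy. Either (a) within that slot's window the coalition forwards a quantum system carrying the BB84 qubit $j$ to the other side, or (b) the success probability conditioned on $(i,j)$ is at most some constant $c<1$. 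To make (a) quantitative, I would use a dimension or entropy argument. Consider the message register $M_i$ that crosses the claimed location during slot $i$; a finite coalition has a well-defined cut at the prover position, and LOCC teleportation consumes ebits. Then $H(\text{BB84 bits of the } R_H \text{ qubits} \mid \text{far side})$ must drop by about $\Omega(\#\{j: \text{slot-}(i,j)\text{ success} > c'\})$. This charges $\log\dim M_i$, which in turn bounds the ebits consumed, by the number of indices $j$ "covered" at slot $i$.

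\textbf{Aggregating over slots and rounds.} Let $p_{r}$ be the success probability in round $r$, averaged over uniform $(i,j)$. If $p_r\ge 1-\delta$ with $\delta$ a small constant, then by Markov a constant fraction of the $sR_H$ pairs $(i,j)$ succeed with probability $>c'$. The previous step then forces $\Omega(sR_H)$ qubits of pre-shared entanglement or storage to be consumed in round $r$. Because rounds are sequential and the state is replenished, the consumed entanglement cannot be reused across rounds. A sequential composition argument, conditioning on the transcript of earlier rounds and using a martingale or Azuma bound, shows the following. Non-negligible overall success requires $p_r\ge 1-\delta$ in all but about $O(\log^2\secparam)$ rounds. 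Otherwise the success probability is at most $(1-\delta)^{\#\text{bad rounds}}$, which is negligible because $m\in\omega(\log^2\secparam)$. The loss in the bound reflects the $\log^2$ slack.

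Summing over good rounds gives total register size $\Omega(sR_Hm/\log^2\secparam)$. The condition $s\ge 9$ is used to make the constant-fraction Markov step nontrivial: guessing the slot gives only advantage $1/s$, and we need this below the monogamy gap.

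\textbf{Main obstacle.} The crux is the per-slot charging argument: showing that each covered $(i,j)$ forces \emph{fresh} consumption of register dimension, rather than letting one large teleported register serve many slots. The adversaries could teleport a compressed joint state once and hold it on both sides. No-cloning prevents holding usable copies on both sides, so each slot's readiness must be re-established after the previous slot's possible measurement. I would first formalize this through a hybrid in which slot $i$'s crossing register is replaced by the maximally mixed state. I would then bound the distinguishing advantage by its dimension, using the chain rule across slots so that the total $\sum_i\log\dim M_i\ge \Omega(sR_H)$ per good round.
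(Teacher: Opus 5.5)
Your honest-resource count and your intuition (the coalition must re-teleport $\Omega(R_H)$ qubits in most slots) match the paper. However, the security argument has a genuine gap exactly at the step you flag as the crux, and the proposed hybrid does not close it.

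In this model nothing quantum crosses the claimed location: inter-adversary messages are classical, free and unbounded. Consequently:
\begin{itemize}
\item $\log\dim M_i$ of a crossing register bounds nothing.
\item A drop in the conditional entropy of the BB84 data given the far side can be produced by measure-and-send with no entanglement at all.
\item ``Ebits consumed'' is not even defined for a general LOCC strategy acting on an arbitrary preshared state.
\item The $\cos^2(\pi/8)$ monogamy game does not fit here either. Only one verifier challenges per round and the two sides may communicate classically, so this is not a simultaneous-guessing game.
\end{itemize}
What the argument needs is a potential with four properties. It must be at most the quantum-register size, ignoring classical records. It must never increase under the coalition's LOCC. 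It must be monogamous. And it may change by only $O(R_H)$ per round under the verifiers' test and replenishment. The paper uses $E_{\rm sq}(A_0:A_1)$ together with \Cref{lem:quantum_dim_bound}. A ``chain rule across slots'' is precisely what such measures lack. Measures that do have one, such as mutual information, are raised for free by classical communication.

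Even with squashed entanglement, two further steps fail as you have them.
\begin{itemize}
\item \textbf{Locking.} The claim that adjoining the reference qubits to the recovering side adds almost nothing fails for general mixed states. \Cref{cor:k-extraction-entanglement-consumption-distributed} obtains a dimension-independent bound only by keeping every state a classically flagged mixture of pure states. It then combines \Cref{lemma:pure-state-flow}, \Cref{lemma:superadditivity} and \Cref{lemma:direct-sum}, with recovery guaranteed only on average over the index.
\item \textbf{No dichotomy inside a single slot.} One side may be able to recover only after a classical message whose generation destroys the other side's recoverability. In a dummy slot, one side may be unable to recover at any physical time.
\end{itemize}
The paper handles the second problem as follows:
\begin{itemize}
\item It reduces to nonoverlapping slots (\Cref{prop:overlap-to-nonoverlap}).
\item It splits each slot into $\mathcal C_j$, commuting local maps $\mathcal M_{0,j},\mathcal M_{1,j}$, one classical exchange, and final maps $\mathcal U_{0,j},\mathcal U_{1,j}$.
\item It defines reordered auxiliary states $\tau_{j,b}$.
\item It shows the near side's local response map succeeds with probability at least $1-2\gamma$. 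The reason is that a far-side response timed for slot $t_j$ would be unwarranted if the challenge were at $t_{j+1}$.
\item It charges $\Omega(R_H)$ to each transition between \emph{consecutive} good slots, with the recovering side alternating. Each charge is weighted by the probability that the challenge comes later.
\end{itemize}

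The aggregation over rounds also fails as written.
\begin{itemize}
\item \textbf{Unconditional $p_r$.} The claim that non-negligible success forces $p_r\geq1-\delta$ in all but $O(\log^2\lambda)$ rounds is false. Take a coalition that holds, with probability $1/\mathrm{poly}(\lambda)$, a state enabling a perfect attack, and a product state otherwise. It wins with non-negligible probability, yet no round has unconditional success near $1$.
\item \textbf{Conditional $p_r$.} If $p_r$ is conditioned on earlier wins, the per-round drops no longer add. Conditioning on a win can raise the squashed entanglement of the resulting state, and the direct-sum identity controls it only up to the inverse win probability.
\item \textbf{Inconsistent bound.} Summing $\Omega(sR_H)$ over $m-O(\log^2\lambda)$ rounds would give $\Omega(sR_Hm)$. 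So the $1/\log^2\lambda$ loss you state is not actually derived by your argument.
\end{itemize}
The paper avoids all of this with a single post-selection. It partitions the rounds into $\lceil\log_2^2\lambda\rceil$ blocks and finds one whose joint success, conditioned on the preceding rounds, is at least $1-10^{-7}$. It post-selects once at the start of that block; this keeps a classically flagged mixture of pure states and changes no register dimension. It then runs the block without further conditioning, so every round in it is high-success. Finally, it sums the per-round decrease $R_H(2\kappa(s-2.5)-1)$ over that block only. This decrease is already net of the $R_H+1$ paid for the verifier's test and the replenishment qubit (\Cref{lem:verifier-entanglement-accounting}). That is the source of the $\lfloor m/\lceil\log_2^2\lambda\rceil\rfloor$ factor.
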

By 
The formal result is given in \Cref{thm:amplified_security-generalized} and the higher-dimension generalization in \Cref{thm:generic-reduction}.

\paragraph{How Many Slots Can We Have?} Note that the number of slots is mainly limited by how precisely one can measure time. Our construction allows overlapping slots (i.e., a slot does not need to be long enough for the protocol in that slot to finish before the next slot starts). Hence, by using each tick of the best-known clocks as a slot, our protocol can be implemented such that the security of this implementation is directly tied to the precision of the best available clock, as follows.

\begin{theorem}[Concrete result in the discrete-time model]
\label{thm:clock-main-result-informal}
Let $m,s,R_H,k$ with $1\leq k<m,s\geq 9, R_H\geq 1$ be integer protocol parameters. Suppose there is a clock with $s$ reliably distinguishable ticks in a unit interval of time. Then there exists a QPV protocol in 1-D such that every finite LOCC adversarial coalition with success probability at least $\nu^k$ must have total quantum-register size at least $csR_Hm/k$, for some universal constants $c,\nu$ where $c>0,\nu=(1-10^{-7})$; see \Cref{rem:concrete-security} for more details.
\end{theorem}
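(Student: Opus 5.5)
The plan is to derive \Cref{thm:clock-main-result-informal} from the generalized amplified-security theorem (\Cref{thm:amplified_security-generalized}), instantiated with explicit constants instead of asymptotic notation. There are three steps.

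\textbf{Step 1 (protocol instantiation).} Identify each of the $s$ reliably distinguishable clock ticks in a unit interval with one challenge slot of a round. Because the construction tolerates overlapping slots, no guard time between ticks is needed. The honest prover and verifiers act only in the single real slot of each round, so the protocol is exactly the $m$-round, $s$-slot, $R_H$-qubit protocol of the discrete-time construction.

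\textbf{Step 2 (single-round trade-off).} I would first prove a per-round statement for $s\ge 9$. Consider a coalition whose quantum registers, including preshared entanglement, have total size $Q$ in a round. The plan is a counting argument over slots. To answer a request for index $j$ in slot $i$ at the non-nearby side, the coalition must have moved the relevant BB84 qubit across the claimed position before the challenge could be seen. Moving $R_H$ qubits per slot under LOCC costs about $R_H$ ebits (a teleportation or entanglement-consumption bound). So at most roughly $Q/R_H$ of the $s$ slots can be ``covered''.

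In an uncovered slot, the coalition must guess the selected verifier, or it must answer a BB84 state it does not hold. Monogamy-of-entanglement bounds for BB84 give this a success probability at most $\cos^2(\pi/8)$ or $1/2$ when averaged. The per-round success is therefore at most $1-\delta(1-Q/(c' sR_H))$ for some constant $\delta$. I would choose $c'$ so that whenever $Q\le c' sR_H$ the round succeeds with probability at most $\nu=1-10^{-7}$. The tiny gap $10^{-7}$ is expected to come from losing constants in the $s\ge 9$ slot-counting argument.

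\textbf{Step 3 (amplification over $m$ sequential rounds).} The total budget $Q_{\mathrm{tot}}$ is shared across rounds. If $Q_{\mathrm{tot}}<c\,sR_Hm/k$ with $c=c'$, then by averaging (Markov), more than $m-k$ rounds have per-round budget below $c' sR_H$. Write $G$ for this set of ``good'' rounds.

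Conditioned on any history, each good round succeeds with probability at most $\nu$, and the entanglement used in a round cannot be replenished by LOCC. A sequential composition argument then bounds the overall success by $\nu^{|G|}\le\nu^{k}$; this follows by a chain rule over rounds, with the verifiers' fresh randomness making each round's bound hold conditionally. Taking the contrapositive gives the theorem: success at least $\nu^k$ forces register size at least $csR_Hm/k$. (Strictly, Markov yields success below $\nu^{m-k}$ or better. Since $k<m$, I would restate it with $k$ good rounds by applying the averaging threshold at level $m/k$.)

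\textbf{Main obstacle.} The hard part is Step 3's conditional composition. Adversaries may adaptively reallocate entanglement across rounds, and later rounds could use correlations built up earlier. I would handle this by charging resources to the rounds in which entanglement is consumed. I would then apply the per-round bound to the adversary's residual state, treating everything from earlier rounds as part of that round's initial shared state. This works because Step 2's bound depends only on the size of the register present in that round, not on its origin.
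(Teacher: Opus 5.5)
Step~1 matches the paper: clock ticks become slots, and overlapping slots are handled by \Cref{prop:overlap-to-nonoverlap}. The final constant bookkeeping is also easy. \Cref{rem:concrete-security} gives $Q>\lfloor m/k\rfloor R_H(0.18(s-2.5)-1)$, and $\lfloor m/k\rfloor\ge m/(2k)$ for $k<m$, together with $0.18(s-2.5)-1\ge 0.0188s$ for $s\ge 9$, gives a universal $c$.

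The genuine gap is in Step~3. Quantum-register size is not a budget split among rounds: the coalition's registers are fixed and reused in every round, so no ``per-round budget'' sums to $Q_{\rm tot}$. A statement of the form ``$Q\le c'sR_H$ implies round success at most $\nu$ given any history'' only yields $Q>c'sR_H$, with no factor $m/k$. To get $m/k$ you must track a resource that is depleted across rounds. The paper uses squashed entanglement, which is LOCC-monotone and bounded by register size via \Cref{lem:quantum_dim_bound}. But entanglement is an average-case quantity, and conditioning on earlier wins can concentrate it. Take an initial mixture of $N$ Bell pairs with probability $1/N$ and a product state otherwise. Then no round need be high-success unconditionally, yet the state conditioned on early wins can be highly entangled. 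So small average consumption in round $r$ says nothing about round $r$ given the history. Your last paragraph is inconsistent on exactly this point. If Step~2 depends only on register size, nothing can be charged per round; if it depends on entanglement, it cannot be applied ``conditioned on any history''.

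The missing idea is the paper's block argument:
\begin{enumerate}
\item Partition the rounds into $k$ blocks and use the chain rule to find a block $\mathcal B$ with $\Pr[\text{all of }\mathcal B\text{ won}\mid\text{all earlier rounds won}]\ge 1-10^{-7}$.
\item Post-select \emph{once} on the earlier wins, then run $\mathcal B$ without further conditioning, so every round in it is unconditionally high-success.
\item Each such round lowers $E_{\rm sq}(A_0:A_1)$ by more than $R_H(2\kappa(s-2.5)-1)$. Summing over at least $\lfloor m/k\rfloor$ rounds and using $E_{\rm sq}\ge 0$ lower-bounds the entanglement of the post-selected state.
\item Post-selection does not change register dimensions, so this worst-case bound transfers to the original coalition.
\end{enumerate}
Note that the per-round statement runs in the direction ``high-success round implies a large drop'', not ``small budget implies success at most $\nu$''.

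Step~2 is also only a heuristic. Counting ``covered'' slots assumes that in each slot the qubit sits on a definite side at some physical time and is then moved across. General LOCC strategies need not behave this way. A measurement on one side that enables the other side's recovery can destroy its own, so the two sides need not both be recoverable within one slot. In a dummy slot, a side may be unable to recover at any physical time.

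The paper's replacement is substantial:
\begin{enumerate}
\item Markov's inequality gives at least $0.98s$ slots that are good for both verifiers, with $\gamma=10^{-5}$.
\item Each slot's operations are decomposed into $\mathcal C_j$, then commuting local maps $\mathcal M_{0,j},\mathcal M_{1,j}$, then communication and $\mathcal U$-maps.
\item This yields reordered auxiliary states $\tau_{j,b}$, in which one side has processed its idle instruction while the other has not yet received its own.
\item The recovering side alternates across consecutive good slots, and \Cref{cor:k-extraction-entanglement-consumption-distributed} is applied between them.
\end{enumerate}
That corollary needs the classically flagged mixture-of-pure-states structure. For general mixed states, locking and the dimension-dependent Fannes error would make the initial-endpoint bound circular. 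Averaging over the challenge slot then gives the $\Omega(sR_H)$ drop per round, after paying $R_H+1$ for the verifier test and replenishment.

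Finally, the threshold $1-10^{-7}$ is not an artifact of slot counting with $s\ge 9$. It comes from the Markov step, where $10^{-7}/\gamma$ controls the $0.02$ fraction of bad challenge choices. It also comes from needing $\gamma$ small enough that $1-2g(2\sqrt\gamma)-f(2\sqrt\gamma)>0.74$.
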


Therefore, our construction has the distinctive feature that future improvements in clocks and timing systems automatically yield better protocol instantiations with a stronger resource gap. Modern optical atomic clocks already provide very precise measurements of time. For example, the single-ion optical clock reported in 2025 uses an optical reference oscillating at approximately $1.12\times10^{15}$ optical ticks (cycles) per second~\cite{MarshallClock25}.\footnote{The number of challenge slots that the complete implementation can reliably distinguish also depends on its synchronization, control, transmission, and detection systems.} Hence, even the current clock-based implementation of our scheme already provides a strong and concrete resource gap.

\subsection{Discussion on Our Model} 

In our adversarial model, adversaries may start with any preshared quantum state, but during the protocol, they are allowed to exchange only classical messages with one another. Their joint operations are therefore LOCC, and for simplicity, we assume that their local quantum operations are instantaneous. This restriction applies only to communication among the adversaries; the adversaries (and the honest prover) are allowed to return quantum responses to the verifiers. It is easy to see that the model includes the standard teleportation and port-based teleportation attacks used in generic impossibility results for QPV; see Refs.~\cite{BCF+14,BK11}.

We measure security for adversaries by the total size of their quantum registers. A protocol is called $n(\lambda)$-qubit secure if every coalition with no member at the claimed location and at most $n(\lambda)$ qubits of quantum registers convinces the verifiers only with negligible probability. Classical storage is free and unrestricted, so transcripts, measurement results, and verifier information revealed in earlier rounds do not count toward this bound. Thus the model limits the adversaries' quantum storage, and not their classical computation or communication.

\paragraph{Why Bounded Quantum Storage Instead of Bounded Preshared Entanglement}\label{pg:bounded-register-size} In order to obtain the polynomial resource gap in \Cref{thm:main-result-informal}, we parametrize security in terms of total quantum-register size, whereas a natural alternative would be to formulate security, using a suitable entanglement measure, by saying that a protocol is secure whenever the adversaries initially share at most a specified amount of entanglement. To see the problem with the alternative, consider the following example.
\begin{enumerate}
    \item Suppose a QPV protocol has an attack $\mathcal A$ that uses $N:=\secparam^cR_H$ Bell pairs and succeeds with non-negligible probability $p$, where $R_H$ is the honest prover's quantum storage.
    \item The adversaries can instead start with a mixture of $N$ Bell pairs with probability $1/N$ and a product state with the rest of the probability. They run $\mathcal A$ and if the initial state is indeed the entangled state, they would succeed with probability $p$. 
    \item Their overall success probability is therefore at least $p/N$, which is still non-negligible because $N$ is polynomial in $\secparam$. For a convex entanglement measure, however, the entanglement of the mixture that the adversaries start with is only $O(1)$.
\end{enumerate}

In contrast, the adversaries' total quantum-register size still has to be at least $N$, as their fixed quantum registers must be large enough to hold all $N$ Bell pairs whenever the entangled state occurs. The discrepancy arises because quantum-register size is a worst-case resource measure whereas convex entanglement measures are average-case measures. This is why we formulate security in terms of total quantum-register size. Entanglement will be used only as an intermediate quantity for lower-bounding the adversaries' quantum registers.

\paragraph{Non-instantaneous Computation} The assumption that the computation is instantaneous is made in all prior works on position-based cryptography. We can account for computation time by giving additional time to the prover to respond to the verifiers. However, if the adversary can compute and respond faster than the verifier, this would introduce a precision error. That is, the difference between the allowed time (for an honest prover) and the actual time taken by an adversary for an operation will determine the precision with which positioning can be achieved. This argument applies to our constructions as well. However, in our protocol, this difference will also dictate the number of slots. As the adversary gets an additional time buffer, it can respond to messages received in a slot even in a later time slot. Thus, each slot must be correspondingly longer to account for this extra buffer. 

Fortunately, the concrete protocol that we build in this paper requires almost no computation. In our basic protocol, the prover stores a single qubit and has to simply send it back to the selected verifier without performing any operation on it. In the amplified protocol, the prover stores multiple qubits and has to select and return the relevant one.

\paragraph{Moving to Higher Dimensions}

Beyond 1-D, we also show that our protocol remains secure in higher dimensions, provided the verifiers are mobile, i.e., they are allowed to move according to the claimed location. We show a reduction that, in both continuous-time and discrete-time models, upgrades the security of our protocol to security with respect to mobile verifiers in higher dimensions, without any loss in the security bounds. Thus, we get the following results in higher dimensions. The formal result is given in \Cref{thm:generic-reduction}.

\subsection{Open Questions}\label{sec:open-questions}
In the discrete-time model, we provide a family of QPV protocols, such that for every polynomial $p(\secparam)$ there is a protocol in the family that achieves that resource gap $p(\secparam)$. However, this is not tight, and in particular, does not match the impossibility result of Buhrman et al.~\cite{BCF+14} only rules out arbitrary (exponential) resource gap. This leaves open the question of constructing a single QPV protocol that achieves an arbitrary polynomial resource gap, or even a fixed exponential resource gap.

\subsection{Related Works}\label{sec:related-works}

Chandran et al.~\cite{CGM+14} introduced position verification and proved that it is impossible classically. They also constructed position-verification protocols in the bounded-retrieval model, showing position verification is possible when the adversaries' storage bound exceeds the honest storage. Early quantum variants, called \emph{quantum tagging}, were proposed by Kent, Munro, and Spiller~\cite{KMS11}; teleportation-based attacks on these and related protocols appear in Refs.~\cite{KMS11,LL11,BCF+14,BK11}.

Buhrman et al.~\cite{BCF+14} formally introduced Quantum Position Verification (QPV) and gave the BB84-based protocol $\mathsf{QPV}_{\mathrm{BB84}}$, which is secure against adversaries without preshared entanglement. Repeating the protocol in parallel remains secure when the adversarial state contains at most $\alpha n$ qubits for a sufficiently small constant $\alpha>0$~\cite{TFKW13}. When adversaries may communicate only classically, Ribeiro and Grosshans~\cite{RG15} obtain a nearly tight local-register lower bound of $n-O(\log n)$ qubits.\footnote{Their theorem is stated as a lower bound on the max-relative entropy of entanglement of the adversarial preshared state. Since this quantity is at most the logarithm of either local Hilbert-space dimension, their result also implies the same asymptotic lower bound on the local quantum-register size.} The latter two results give adversarial lower bounds linear in $n$, of the same asymptotic order as the honest parties' total communication and storage.

Two important families of one-dimensional protocols are $f$-routing~\cite{BFS+13,KMS11} and $f$-BB84~\cite{BCS22}, both defined using a Boolean function $f:\bit^n\times\bit^n\to\bit$. Garden-hose complexity provides attacks and lower bounds for $f$-routing~\cite{BFS+13}, and later work relates attacks on $f$-routing and $f$-BB84 and improves several constructions~\cite{CM23,ABM+24,BHM+25}. For a random $f$, Bluhm, Christandl, and Speelman~\cite{BCS22} show that adversaries with fewer than $n/2-O(1)$ local qubits on each side fail with constant probability. Ref.~\cite{ACM25} gives function-dependent rank lower bounds for a restricted class of attacks,\footnote{Namely, it lower bounds the logarithm of the Schmidt rank for one-sided-perfect $f$-routing attacks and perfect $f$-BB84 attacks.} including linear bounds for several explicit functions. The honest communication in these protocols is linear in $n$, while their total storage also includes the space needed to represent and evaluate $f$.

Another line of work studies robustness and practical implementation. BB84-based protocols have been analyzed with channel loss, weak-laser sources, and arbitrary transmission loss~\cite{QS15,LXS+16,ES23,ABB+25}. Recent experiments and proposals study QPV based on single-photon sources and two-photon interference, as well as device-independent QPV~\cite{KPB+25,KGZ+25}.

Beyond BB84 protocols, Junge et al.~\cite{JKP+22} propose a protocol whose honest total communication and storage are $O(n^2)$. They obtain polynomial lower bounds on adversarial quantum resources for a regular class of attacks and extend these bounds to general attacks under a conjecture from Banach-space theory. In the quantum random-oracle model (QROM), Unruh~\cite{Unr14} constructs QPV secure against polynomially many oracle queries. Liu, Liu, and Qian~\cite{LLQ22} introduce classically verifiable QPV (CVPV), with QROM constructions that are secure against polynomial-time adversaries and a construction based on subexponentially hard Learning with Errors (LWE) that fixes in advance a bound on the adversaries' quantum communication during preprocessing.

We summarize these results in \Cref{fig:related-work-comparison}.
    
\begin{figure}[p]
\centering
\normalsize
\linespread{0.92}\selectfont
\setlength{\tabcolsep}{1.4pt}
\setlength{\arrayrulewidth}{0.6pt}
\renewcommand{\arraystretch}{1.02}
\setlength{\extrarowheight}{0.5pt}
\makebox[\linewidth][c]{%
\begin{tabular}{|>{\raggedright\arraybackslash}p{0.14\linewidth}|>{\columncolor{cyan!25}\raggedright\arraybackslash}p{0.09212\linewidth}|>{\raggedright\arraybackslash}p{0.1034\linewidth}|>{\raggedright\arraybackslash}p{0.10152\linewidth}|>{\columncolor{cyan!25}\raggedright\arraybackslash}p{0.10152\linewidth}|>{\raggedright\arraybackslash}p{0.14664\linewidth}|>{\columncolor{cyan!25}\raggedright\arraybackslash}p{0.17672\linewidth}|>{\raggedright\arraybackslash}p{0.10528\linewidth}|}
\hline
\textbf{Work} & \textbf{Assum-\newline ption} & \textbf{Commu\-nication} & \textbf{Storage} & \textbf{Total} & \textbf{Adv. model} & \textbf{Bound} & \textbf{Success} \\
\hline
$\mathsf{QPV}_{BB84}$\newline\cite{BCF+14} & None & $\Theta(n)$ & $\Theta(n)$ & $\Theta(n)$ & Preshared entanglement & $0$~\cite{TFKW13} & $\mathrm{negl}$ \\
\arrayrulecolor{cyan!25}\cline{2-2}\cline{5-5}\arrayrulecolor{black}\cline{6-8}\noalign{\vskip\arrayrulewidth}
& & & & & Preshared state size & Total register size\newline $\leq \alpha n$~\cite{TFKW13} & $\mathrm{negl}$ \\
\arrayrulecolor{cyan!25}\cline{2-2}\cline{5-5}\arrayrulecolor{black}\cline{6-8}\noalign{\vskip\arrayrulewidth}
& & & & & LOCC; preshared entanglement & $E_{\max}(\widetilde\Phi)$\newline $\leq n-\omega(\log^2 n)$~\cite{RG15} & $\mathrm{negl}$ \\
\hline
\mbox{$f$-routing}\newline\cite{BFS+13,KMS11} & None & $\Theta(n)$ & $\Theta(n+{}$\newline$\mathrm{Space}(f))$ & $\Theta(n+{}$\newline$\mathrm{Space}(f))$ & Quantum registers & Per side:\newline $\leq n/2-O(1)$, random $f$~\cite{BCS22} & $\leq1-\Omega(1)$ \\
\arrayrulecolor{cyan!25}\cline{2-2}\cline{5-5}\arrayrulecolor{black}\cline{6-8}\noalign{\vskip\arrayrulewidth}
& & & & & $\log_2$ Schmidt rank (OSP) & $\Omega(n)$ required for specified $f$~\cite{ACM25} & No OSP attack \\
\hline
$f$-BB84\newline\cite{BCS22} & None & $\Theta(n)$ & $\Theta(n+{}$\newline$\mathrm{Space}(f))$ & $\Theta(n+{}$\newline$\mathrm{Space}(f))$ & Quantum registers & Per side:\newline $\leq n/2-O(1)$, random $f$~\cite{BCS22} & $\leq1-\Omega(1)$ \\
\arrayrulecolor{cyan!25}\cline{2-2}\cline{5-5}\arrayrulecolor{black}\cline{6-8}\noalign{\vskip\arrayrulewidth}
& & & & & $\log_2$ Schmidt rank (perfect) & $\Omega(n)$ required for specified $f$~\cite{ACM25} & No perfect attack \\
\hline
\cite{JKP+22} & Banach-space conjecture & $\Theta(n^2)$ & $\Theta(n^2)$ & $\Theta(n^2)$ & Quantum registers; regular/general & Per side:\newline $\leq n^\alpha$, constant $\alpha>0$ & $\leq1-\Omega(1)$ \\
\hline
QROM QPV\newline\cite{Unr14} & QROM & $O(\lambda)$ & $O(\lambda)$ & $O(\lambda)$ & QROM queries & $\operatorname{poly}(\lambda)$ queries & $\mathrm{negl}$ \\
\hline
QROM CVPV\newline\cite{LLQ22} & QROM + LWE & $\operatorname{poly}(\lambda)$ & $\operatorname{poly}(\lambda)$ & $\operatorname{poly}(\lambda)$ & Poly-time & Unbounded preshared entanglement & $\mathrm{negl}$ \\
\hline
CVPV\newline\cite{LLQ22} & Subexp. LWE & $\operatorname{poly}$\newline$(\lambda,L)$ & $\operatorname{poly}$\newline$(\lambda,L)$ & $\operatorname{poly}$\newline$(\lambda,L)$ & Subexp.-time & Preprocessing qubits sent $\leq L$ & $\mathrm{negl}$ \\
\hline
\rowcolor{pink!35}
This work & None & $\widetilde O(m+{}$\newline$R_H)$ & $\widetilde O(m+{}$\newline$R_H)$ & $\widetilde O(m+{}$\newline$R_H)$ & Finite LOCC; quantum registers & Total size:\newline $\leq c\,\frac{sR_Hm}{\log_2^2\lambda}$ & $\mathrm{negl}$ \\
\hline
\end{tabular}%
}
\par\smallskip
\begin{minipage}{\linewidth}
\normalsize
\textbf{Notation.} Honest communication and storage count bits and qubits; ``Total'' sums both. $\mathrm{Space}(f)$ is classical space to represent and evaluate $f$; OSP means one-sided-perfect. ``None'' excludes computational, oracle, and conjectural assumptions. $\mathsf{QPV}_{BB84}$ uses parallel repetition (state-size bound: $0<\alpha<-\log_2(\cos^2(\pi/8))$); $f$-protocols show one execution. In~\cite{BCS22}, sequential repetition gives negligible error with proportionally more communication; in~\cite{Unr14}, both input lengths are $\Theta(\lambda)$. ``This work'' satisfies the parameter conditions of Theorem~\ref{thm:amplified_security-generalized}, with universal constant $c>0$. $\widetilde O$ absorbs logarithmic classical factors. The preprocessing bound~\cite{LLQ22} does not bound total quantum-register size.
\end{minipage}
\setlength{\abovecaptionskip}{4pt}
\caption{Prior QPV results and our finite-slot instantiation.}
\label{fig:related-work-comparison}
\end{figure}
    
\paragraph{AI Acknowledgment} All the main ideas of the paper were produced by the human authors, but ChatGPT 5.5 Pro, 5.6 Sol, and 6 Astra were used for proofreading, editorial changes, as well as literature review for both quantum position verification and the standard quantum information tools used in the work.

\clearpage

\section{Technical Overview}
\label{sec:technical_overview}

In this section, we will recall the protocol and state the main ideas behind the security proof, i.e., the proof of \Cref{thm:main-result-informal}.

\paragraph{Protocol Reminder} We next recall the complete protocol before continuing with the proof. The protocol has $m$ rounds, and each round is divided into $s$ slots. 
\begin{itemize}
\item \textbf{Initialization} Verifier $V_0$ sends the prover an ordered list of $R_H$ independently sampled BB84 states. The verifiers share the labels and the classical description of these states among themselves.
\item \textbf{Choosing the challenge} In each of the $m$ rounds, the verifiers uniformly choose a challenge slot $i\in[s]$, a verifier $V_b$, and a state in the current list. Both verifiers remain idle in every slot other than $i$. In slot $i$, verifier $V_b$ asks the prover to return the selected state, while $V_{1-b}$ remains idle.
\item \textbf{Returning and testing the state} The honest prover removes the selected state from its list and immediately returns it to $V_b$. If the state at that index was $H^\theta\ket{x}$, the verifier, using the information $\theta,x$, measures the returned qubit in the basis $\theta$ and accepts the round if and only if the response arrives at the prescribed time and the measurement outcome is $x$.
\item \textbf{Replenishing the list} After each round, $V_0$ sends one fresh BB84 state and the verifiers add its classical description to their respective lists. Before the next round begins, the prover receives and appends this state, restoring a list of $R_H$ live qubits. The verifiers accept the protocol if and only if all $m$ rounds are accepted.
\end{itemize}

Clearly, the total quantum communication in an honest execution of the protocol is $R_H+2m$ quantum messages ($R_H+m$ qubits sent by the verifiers and $m$ qubits sent in response by the honest prover). Including classical communication and storage, the honest parties overall use $\widetilde{O}(R_H+m)$ total resources.

\paragraph{A Natural Attack} Consider two adversaries $A_0$ and $A_1$ on opposite sides of the claimed location. At the beginning of a slot, one side may hold all the stored states. That side waits until it either receives an instruction from its nearby verifier or can determine that this verifier is idle in the slot. In the idle case, the adversaries teleport these states to the other side so that they can still answer a possible instruction from the opposite verifier. Moving all $R_H$ states in this way consumes $R_H$ Bell pairs. However, since the adversaries do not know the selected slot in advance, the adversaries would have to do the teleportation in most of the slots if not all slots. Hence, they would need to start out with at least $\Omega(sR_H)$ bell-pairs before each round, which would overall require $\Omega(msR_H)$ bell-pairs when accounted for all $m$ rounds. Our proof will show that this is the best the adversaries can do in terms of total quantum register size, up to a polylogarithmic factor.

In contrast, during the idle slots, the honest parties sleep and do nothing, consistent with the ``I sleep, you work'' paradigm. The adversaries must keep moving the stored states only because they do not know which slot and verifier were selected.

\paragraph{Purified Execution for the Analysis} In our security proof, we use the following purified formulation of the protocol, which does not change the adversaries' view or success probability.
\begin{itemize}
\item \textbf{Initialization} Verifier $V_0$ prepares $R_H$ labeled Bell pairs, keeps one qubit from each pair as a reference qubit, and sends the other qubits toward the prover in the same order.
\item \textbf{Choosing the challenge} The verifiers choose the challenge slot, verifier, and index in the list as before. They remove the selected label $r$ from their list, and the selected verifier asks the prover to return the qubit labeled $r$. Both verifiers remain idle in every other slot.
\item \textbf{Returning and testing the state} The prover removes the requested qubit and returns it to the selected verifier. At the prescribed deadline, the verifiers measure the returned qubit and its reference qubit in the same basis, chosen uniformly from $\{X,Z\}$, and accept if and only if the outcomes agree. 
If no response arrives, they discard the reference qubit without measuring it and reject.
\item \textbf{Replenishing the list} After each round, $V_0$ prepares a fresh labeled Bell pair, keeps its reference qubit, and sends the other qubit toward the prover. The new pair is appended to the lists before the next round begins.
\end{itemize}
\paragraph{Proof Idea for the Special Case of One Qubit and Perfect Adversaries} 
 For simplicity, set $R_H=1$, let $O$ denote the verifier's reference qubit, and suppose the adversaries succeed with probability $1$. We group the adversarial registers to the left and right of the claimed location into $A_0$ and $A_1$, respectively.

Since the adversaries have perfect success probability, when either verifier asks for the qubit inside a slot, the adversaries on the corresponding side should be able to recover a qubit whose joint state with $O$ is a Bell state. Note that local recoverability on one side may crucially depend on the classical messages from the other side. We therefore use the notion of \emph{extended states} of the sides. The extended state of side $A_b$ at some $t$ is the local state of the corresponding side at time $t$, together with the unread classical messages already generated by the other side $A_{1-b}$ at time $t$, that would arrive before $A_b$'s deadline to send a timely response.

For this intuitive argument, suppose that every slot, regardless of the verifiers' instructions, contains two points in time, one at which $A_0$ and the other at which $A_1$ can recover a qubit maximally entangled with $O$ from its extended states. In \Cref{sec:invoking-entaglement-consumption-theorem}, we explain how to circumvent this local recoverability assumption. Relabel the sides if necessary, so that $A_0$'s local recoverability precedes $A_1$'s local recoverability. Call $A_0$'s earlier recovery state the \emph{initial state} and $A_1$'s later recovery state the \emph{final state}. 
We can conclude the following about the entanglement of the initial and final states, respectively.

\begin{itemize}\label{pg:proof-idea}
\item \textbf{At the initial state,} $O$ is maximally entangled with $A_0$, i.e., $O$ forms a Bell pair with one of the qubits in $A_0$ (up to a local recovery map on $A_0$), and hence $O$ and $A_1$ must be uncorrelated due to monogamy of entanglement. Therefore, adding $O$ on the $A_0$ side of the cut $A_0:A_1$ does not change the squashed entanglement across the cut, and hence $\Esq(OA_0:A_1)_{initial}=\Esq(A_0:A_1)_{initial}$.
\item \textbf{During the evolution,} the squashed entanglement across $OA_0:A_1$ cannot increase because the evolution is LOCC and does not act on $O$.
\item \textbf{At the final state,} monogamy of entanglement implies that
\[\Esq(OA_0:A_1)_{final}\geq \Esq(A_0:A_1)_{final}+\Esq(O:A_1)_{final},\] which is equal to $\Esq(A_0:A_1)_{final} + 1$ as the squashed entanglement across $O:A_1$ is one, because $O$ is now maximally entangled with $A_1$.
\end{itemize}

Hence, combining the three constraints, it is easy to see that the entanglement across $A_0:A_1$ must have decreased by at least one unit from the initial to the final state.

\paragraph{Extending to the General Case} In the perfect adversary case, even if we had $R_H$ qubits and all $R_H$ Bell-pair halves are simultaneously recoverable in the initial and final states, the same idea as above can be extended to yield an entanglement decrease of $R_H$.
Since entanglement is always bounded by the size of the quantum registers, this lower bound on the entanglement consumption forms a lower bound for the size of quantum storage. 

For the proof of the most general case, however, we need an imperfect version of the above arguments in which only one uniformly random qubit index is tested, and more importantly, the local recovery maps succeed with high probability \emph{on average}, rather than perfectly. This required lower bound on entanglement consumption can be abstracted out as the following quantum-information guarantee.

\begin{theorem}[Entanglement consumption theorem]
\label{thm:informal-entanglement-consumption}
Suppose there are $R_H$ reference qubits $O_1,\ldots,O_{R_H}$, and initially\footnote{We also need some conditions on the initial and final adversarial states, namely the states should be a classically flagged mixture of pure states, the need for which will be made explicit in \Cref{sec:entanglement-consumption-thoerem-informal}} one side of the cut, register $A_0$, can locally recover, for a uniformly random index $r\in[R_H]$, a qubit whose joint state with $O_r$ has high average Bell-pair fidelity\footnote{Here, high average Bell-pair fidelity means that the joint state of the recovered qubit and the corresponding reference qubit has high fidelity with a Bell pair, averaged over the random index, and it represents high success probability of the adversaries in the purified version of the protocol. By high, we mean at least some constant threshold greater than $1/2$, sufficiently close to $1$. The threshold needs to be higher than $1/2$ because, by measuring the qubit in the computational basis and sharing the classical outcomes among themselves, two unentangled adversaries can satisfy the theorem's hypothesis with an average Bell-pair fidelity of $1/2$. See the formal version, \Cref{cor:k-extraction-entanglement-consumption-distributed} for the exact constant threshold.}, averaged over $r$. After an LOCC evolution, suppose the other side, $A_1$, can locally recover such a qubit for a uniformly random $r$ with similar average fidelity. Then the entanglement between $A_0$ and $A_1$ decreases by $\Omega(R_H)$.
\end{theorem}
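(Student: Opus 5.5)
We track the squashed entanglement across the cut with all reference qubits $O=O_1\cdots O_{R_H}$ placed on the $A_0$ side. Since the LOCC evolution acts only on $A_0A_1$, $\Esq(OA_0:A_1)$ is non-increasing. The theorem then follows from two inequalities. At the initial state we want $\Esq(OA_0:A_1)_{initial}\le \Esq(A_0:A_1)_{initial}+\epsilon R_H$. At the final state we want $\Esq(OA_0:A_1)_{final}\ge \Esq(A_0:A_1)_{final}+cR_H$. Here $c>\epsilon$ are constants depending only on the fidelity threshold. Chaining these with monotonicity gives a decrease of at least $(c-\epsilon)R_H=\Omega(R_H)$.

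\textbf{Initial-state bound.} Write $\Esq(OA_0:A_1)-\Esq(A_0:A_1)$ as a bound on the conditional mutual information $I(O:A_1|A_0E)$ for the optimal extension $E$. By the chain rule this splits into $\sum_r I(O_r:A_1|A_0O_{<r}E)$. High average recovery fidelity of $O_r$ from $A_0$ means $A_0$ nearly purifies $O_r$. A continuity form of monogamy (Fannes--Audenaert / Alicki--Fannes on the conditional entropy $H(O_r|A_0\cdots)$ being close to $-1$) should force each term to be small on average. The catch is that only \emph{average} fidelity over $r$ is given, so I will apply the bound termwise and use concavity of the continuity function to get $\epsilon R_H$ with $\epsilon\to 0$ as the fidelity tends to $1$.

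\textbf{Final-state bound.} I would use the superadditivity/monogamy of squashed entanglement,
\[\Esq(OA_0:A_1)\ge \Esq(A_0:A_1)+\Esq(O:A_1).\]
It then suffices to lower-bound $\Esq(O:A_1)$ by $cR_H$. Squashed entanglement is superadditive in a form where $\Esq(O_1\cdots O_{R_H}:A_1)$ is at least a chain of per-qubit conditional terms. For each $r$, local recoverability from $A_1$ with fidelity $F_r$ should give a term of at least $1-h(F_r)$-type. I will try to bound $I(O_r:A_1|O_{<r}E)\ge 2-$(error) using $H(O_r|A_1\ldots)\approx -1$, then average over $r$ and again use convexity. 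This is where the threshold strictly above $1/2$ enters: the per-qubit contribution must be positive. The "classically flagged mixture of pure states" hypothesis lets us condition on the flag, since squashed entanglement is additive/convex over classically flagged ensembles. With that conditioning, the extension $E$ can be taken trivial on pure components, and the inequalities reduce to mutual-information statements.

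\textbf{Main obstacle.} The hardest step is making the final-state lower bound robust with a non-vanishing constant when only average fidelity over a random index holds, and when the recovery map may need to differ per $r$. Per-qubit recovery does not imply joint recovery of all $R_H$ qubits. I would first try the chain-rule route, which needs only marginal per-$r$ information, handling the conditioning on $O_{<r}$ via strong subadditivity. If that fails, the fallback is a weaker $\Omega(R_H)$ bound via a distillable-entanglement or hashing argument, using $\Esq\ge E_D$.
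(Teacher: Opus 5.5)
Your skeleton (adjoin $O$ to the $A_0$ side, use LOCC monotonicity of $\Esq(OA_0:A_1)$, and apply monogamy at the end) is the paper's. The gap is in the initial-state bound, which is exactly the obstruction the paper is built around.

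Your derivation takes the optimal extension $E$ for $\Esq(A_0:A_1)$ and writes $2\Esq(OA_0:A_1)\le I(A_0:A_1|E)+I(O:A_1|A_0E)$. That needs a joint state on $OA_0A_1E$ whose $A_0A_1E$-marginal is this extension. But extensions of $A_0A_1$ may be built from $O$ itself, since $O$ is part of the purification of $A_0A_1$, and such extensions generally cannot coexist with $O$. Only extensions of $OA_0A_1$ are admissible in your first inequality, and over that smaller family the infimum of $I(A_0:A_1|E)$ can far exceed $2\Esq(A_0:A_1)$.

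For example, take $\ket{\psi}=(\ket{000}+\ket{111})/\sqrt2$ on $OA_0A_1$. Then $\Esq(OA_0:A_1)=1$ and $\Esq(A_0:A_1)=0$, attained by an $E$ holding the outcome of measuring $O$. Every extension of this pure state is a product, so $\Esq(A_0:A_1)+\tfrac12 I(O:A_1|A_0E)=\tfrac12<1$. More generally, any bound $\Esq(OA_0:A_1)-\Esq(A_0:A_1)\le c\,I(O:A_1|A_0E)$ valid for all states would cap the difference by $2c\log_2\dim O$. That contradicts the locking of squashed entanglement (flower states). Your termwise continuity estimate on $I(O_r:A_1|A_0O_{<r}E)$ is fine; it just never gets connected to $\Esq(A_0:A_1)$.

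The missing idea is \Cref{lemma:pure-state-flow}, and this is where the classically-flagged-pure hypothesis is actually needed. You invoked it in the final-state step, where it plays no role. Also, on a pure component the extension in $\Esq(A_0:A_1)$ cannot be taken trivial, because the marginal on $A_0A_1$ is mixed.

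On a pure component over $OA_0'A_1$, every extension $E$ of $A_0'A_1$ comes from an isometry $O\to EO_{\mathrm{rem}}$. Purity then gives $I(A_0:A_1|E)=2S(A_1)-I(A_1:O_{\mathrm{rem}})-I(A_1:E)\ge2S(A_1)-2I(O:A_1)$, i.e.\ $\Esq(OA_0:A_1)\le\Esq(A_0:A_1)+I(O:A_1)$. The paper then bounds $I(O:A_1)\le2\sum_jS(O_jR_j)$ using entropy conservation, generalized additivity with penalty $\Delta_O$, and Araki--Lieb. It applies Fannes--Audenaert to the two-qubit recovered pairs and averages over flags by the direct-sum lemma, after copying records across the cut. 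Once you have the pure-state lemma, your own estimate with trivial $E$ would also finish this step, since on a pure component $I(O:A_1)=I(O:A_1|A_0)\le\sum_r\bigl[1+H(O_r|A_0)\bigr]$.

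Your final-state argument, by contrast, is essentially the paper's. Monogamy gives $\Esq(O:A_1)\ge\sum_j\Esq(O_j:W_j)$. Your ``fallback'' of hashing, $\Esq(O_j:W_j)\ge I_c(O_j\rangle W_j)$, plus Fannes--Audenaert and Jensen on the concave extension $f_{safe}$, is the paper's primary route. Index-dependent recovery maps cause no trouble there, since each term is bounded separately. So the step you flagged as the main obstacle is routine, and the one you treated as routine is the real difficulty.
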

The theorem is formally shown in \Cref{cor:k-extraction-entanglement-consumption-distributed}.

\paragraph{Choice of Entanglement Measure}
To formally state and prove the entanglement consumption theorem (\Cref{thm:informal-entanglement-consumption}), we need an entanglement measure satisfying the properties used implicitly in the perfect-case argument. Namely, it should satisfy
\begin{enumerate}
    \item Monogamy of entanglement inequality~\cite{CKW20},
    \item Monotonicity under LOCC.
\end{enumerate}
Most common entanglement measures violate at least one of the first two criteria. For instance, entanglement of formation does not satisfy the above monogamy inequality~\cite{CKW20}, while the R\'enyi-$2$ entanglement entropy of pure states can increase on average under LOCC~\cite{KPM20}. We therefore use squashed entanglement, a conditional-mutual-information-based measure that satisfies both criteria, which is defined for any bipartite state $\rho_{AB}$ as
\begin{equation*}
E_{\rm sq}(A:B)_\rho:=\frac12\inf_{\rho_{ABE}}I(A:B\mid E)_\rho,
\end{equation*}
where the infimum is over all extensions $\rho_{ABE}$ of $\rho_{AB}$.

\subsection{Using the Entanglement-Consumption Theorem in the Protocol}\label{sec:invoking-entaglement-consumption-theorem}

We first complete the security proof of our protocol assuming the entanglement consumption theorem (\Cref{thm:informal-entanglement-consumption}), and then return to the proof of the theorem in \Cref{sec:entanglement-consumption-thoerem-informal}. 
Without loss of generality, we group all adversarial registers on the left side of the claimed location into $A_0$ and those on the right side into $A_1$ (if a side contains no adversary, the size of the corresponding register is treated as $0$). We also use the convention that when a verifier does not send any instruction in a slot, we call it the idle instruction. 

\paragraph{Obstacle to Local Recoverability Inside a Single Slot} 
It might appear that once the entanglement consumption theorem (\Cref{thm:informal-entanglement-consumption}) is proved, we can invoke it inside every slot of a round that the adversaries win with high probability \footnote{i.e., with probability at least some constant threshold determined by the formal statement of the entanglement consumption theorem given in \Cref{cor:k-extraction-entanglement-consumption-distributed}}. To do so, in any such slot, we need to identify two distinct time points: one at which $A_0$'s extended states can locally recover the qubit at a random index with high probability, and the other at which $A_1$'s extended state can do the same. Even in the preceding proof of the special case of perfectly successful adversaries, we made this assumption to complete the proof. However, this intuition is not entirely correct, as shown in the following two scenarios.

\begin{enumerate}
\item \textbf{Only one adversarial side may recover in a slot}\label{it:mutual-dependency} 
 Suppose the adversarial strategy is such that when $A_0$ receives the idle instruction from $V_0$, it measures its local state and sends the classical outcome to $A_1$. This classical message allows $A_1$ to locally recover if $V_1$ requests the qubit at a random index, but the measurement ruins the local recoverability of $A_0$'s extended state for the entire slot. However, if $V_0$ instead requests the qubit at a random index, the roles are reversed, i.e.,  $A_0$ preserves its state while $A_1$ measures and sends the information needed by $A_0$. Thus, the operation enabling recovery on one side may destroy recoverability on the other. Hence, under this strategy, which could still be perfect, the local recoverability guarantees for $A_0$ and $A_1$ never arise inside the same slot. 
\item \textbf{There may exist a side that can never locally recover in any dummy slot}\label{it:no-recovery-guarantee-at-physical-time} Suppose both verifiers are idle in a slot, i.e., it is a dummy slot. Under the adversarial strategy mentioned in the previous scenario, each side may measure its local state and send the classical outcome to the other side, since that outcome would be needed if the other verifier had sent the return instruction. Suppose $A_0$ is closer to $V_0$ compared to the distance between $A_1$ and $V_1$. Hence, before $A_1$ receives the idle instruction from $V_1$ and measures its state, $A_0$ receives its instruction from $V_0$ and measures its state. Before $A_0$'s measurement, since $A_1$ has not yet received its instruction and measured, $A_0$'s extended state lacks the classical information needed for recovery, and after $A_0$'s measurement, $A_0$ has ruined its local quantum state. Hence, under this strategy, there is no physical time in a dummy slot at which $A_0$ can recover locally from its extended state. Note that showing entanglement consumption across or inside dummy slots is the main source of our polynomial separation, as they are exactly those slots in which adversaries work but the honest parties sleep.
\end{enumerate}

\paragraph{Resorting to Entanglement Consumption Across Multiple Slots} Since the issue of having only a single adversarial side per slot that can locally recover (see \Cref{it:mutual-dependency}) hinders showing entanglement consumption across the same slot, we switch to entanglement consumption across different slots. 
Assume for now that the slot intervals do not overlap, i.e., the next slot starts after the previous one has ended. We will later show a reduction to remove this assumption.\footnote{We note that allowing for overlapping slot intervals is essential for the clocks-based instantiation of our result in the discrete time model (\Cref{thm:clock-main-result-informal},), as well as the instantiation in the continuous time model (\Cref{thm:continuous-main-result-informal}) because both these instantiations rely on the fact that consecutive slots can begin arbitrarily close to one other.} Under the non-overlapping slot assumption, the classical messages needed by $A_0$ in the first slot do not require the classical messages from $A_1$ arising in the next slot. Thus, by considering two different slots, we break the barrier of mutual dependency of the two sides for local recoverability as described in \Cref{it:mutual-dependency} in
the preceding paragraph.

To identify the suitable slots, we next define and count the set of \emph{good slots} inside a highly successful round.
\begin{itemize}
\item We call a round \emph{high-success} if the adversaries win it with probability at least $1-10^{-7}$.
\item Within such a round, we call the pair $(i,b)$ \emph{good} if, conditioned on the verifiers choosing slot $i$ and verifier $V_b$, the adversaries win with probability at least $1-\gamma$ where $\gamma=10^{-5}$.
\item We call slot $i$ \emph{good} if both $(i,0)$ and $(i,1)$ are good.
\end{itemize}
A direct application of Markov's inequality shows that every high-success round contains at least $N\geq0.98s$ good slots. Write the good slots $t_1<t_2\cdots <t_N$. Next, in order to establish local recoverability, we note that for any good slot $t_j$ with $j<N$ and $b\in \{0,1\}$, the farther adversarial side $A_{1-b}$ does not have enough time to learn whether $V_b$ sent a non-idle instruction and send a response to $V_b$ that reaches in time. Hence, we show that conditioned on verifier $V_b$ in slot $t_j$ sending a non-idle return request, the nearby adversarial side $A_b$ must send an accepting response with large probability; in particular, $A_b$ must recover and send a qubit that is accepted with probability at least $1-2\gamma$, see the proof of \Cref{thm:amplified_security-generalized} (Step 3 for more details).\footnote{For the analysis, if no response is provided by $A_b$, we define its response qubit to be the maximally mixed state $I/2$, which can always be prepared locally, irrespective of $A_b$'s input state.}   
Though shifting to analyzing entanglement consumption across consecutive good slots circumvents the first issue (\Cref{it:mutual-dependency}), the second issue (\Cref{it:no-recovery-guarantee-at-physical-time}) discussed in the previous paragraph still persists. In particular, there may exist an adversarial side $A_b$ that never locally recovers at any physical time of any dummy slot.

\paragraph{Reordered Adversarial Evolutions} To address the remaining issue (\Cref{it:no-recovery-guarantee-at-physical-time}), consider a thought experiment where we pause one side while allowing the other side to produce the classical messages needed for recovery. This reordering is only for the analysis and need not respect the original timing constraints. Fix a good slot $t_j$ with $j<N$, and suppose that the actual challenge slot is later than $t_j$, so both verifiers are idle in this slot.
\begin{itemize}
\item We first let the adversaries perform the operations and classical communication that do not depend on either instruction in the current slot. Fix $b\in\{0,1\}$. We pause $A_b$ just before it receives its instruction $\ell_b$, while $A_{1-b}$ continues: it receives its instruction $\ell_{1-b}$ (which is idle) from $V_{1-b}$ and performs the operations that do not need new information from the paused side, potentially generating classical messages for $A_b$.
\item Let $\tau_{j,1-b}$ denote this timestamp in the reordered evolution, when $A_{1-b}$ has acted on seeing its instruction, potentially generating the classical messages needed for $A_b$'s local recovery, while $A_b$ is still paused. The operations we have reordered act on disjoint registers and do not use each other's outputs. Hence, after resuming the paused operations with the original instructions, we obtain the same final state as in the original execution.
\item At $\tau_{j,1-b}$, $A_{1-b}$ has acted on the idle instruction, while $A_b$ has not yet received its instruction. If we instead give $A_b$ a return instruction for any qubit index, it can produce the corresponding response using only its extended state at this timestamp. By the goodness of slot $t_j$ and the arguments (about the response of the nearby side $A_b$) established in the preceding paragraph, the success probability of $A_b$'s response in the purified protocol's test is at least $1-2\gamma$, averaged over the uniformly selected qubit index. Translating the success probability to Bell-pair fidelity, we conclude that the recovered qubit and its corresponding reference qubit have average Bell-pair fidelity (with the same averaging) at least $1-4\gamma$, thereby establishing the required local recoverability hypothesis of  the entanglement consumption theorem (\Cref{thm:informal-entanglement-consumption}) for the extended state of $A_b$ at $\tau_{j,1-b}$.
\item Reversing the roles of the two sides gives an alternate reordered evolution and a timestamp $\tau_{j,b}$ at which the extended state of $A_{1-b}$ satisfies the same local recoverability guarantee, and hence the local recoverability hypothesis of \Cref{thm:informal-entanglement-consumption}. We note that the timestamps $\tau_{j,1-b}$ and $\tau_{j,b}$ belong to two different reordered evolutions of the same slot and are chronologically incomparable, but across slots, they can be compared, i.e.,  for every $j\in [N-2]$, $\tau_{j,b}< \tau_{j+1,0}$ and $\tau_{j,b}<\tau_{j+1,1}$. 
\end{itemize}

\paragraph{Entanglement Consumption Over a High-success round} Since we have local recoverability for the extended state of $A_0$ and $A_1$ in two different timestamps of two different good slots, we can invoke the entanglement consumption theorem (\Cref{thm:informal-entanglement-consumption}) that ensures that there must be a drop of $\Omega(R_H)$ squashed entanglement across the two timestamps, and hence across the corresponding good slots. Note that the local recoverability guarantees hold for the intermediate states only conditioned on the real challenge slot occurring after the two consecutive good slots.

\begin{itemize}
\item We use the first $N-1$ good slots $t_1<t_2<\cdots<t_{N-1}$. We choose the reordering with timestamp $\tau_{j,0}$ when $j$ is odd and $\tau_{j,1}$ when $j$ is even. Since the slots do not overlap, these choices give a single evolution with a chronologically ordered list of timestamps $\{\tau_{1,0}< \tau_{2,1}<\cdots <\tau_{(N-1,N\mod 2)}\}$ with $N-2$ transitions between consecutive chosen timestamps.\footnote{We omit the last good slot because the preceding same-side response argument requires a later good slot.} Moreover, the locally recoverable side alternates between $A_1$ and $A_0$, i.e., for odd $j\in [N-1]$, $A_1$ can locally recover at timestamp $\tau_{j,0}$ and for even $j\in [N-1]$, $A_0$ can locally recover at timestamp $\tau_{j,1}$.
\item For any $j\in [N-2]$, if the real challenge slot occurs after slot $t_{j+1}$, the evolution across the transition from the chosen state in $t_j$ to that of $t_{j+1}$ is LOCC and leaves the reference qubits and their order unchanged. Hence, by \Cref{thm:informal-entanglement-consumption}, we get an $\Omega(R_H)$ entanglement drop across each such transition, conditioned on the challenge slot occurring after the two consecutive good slots.
\item There are at least $N-j-1$ good slots after $t_{j+1}$. Hence, averaging over the choice of the real challenge slot, the transition from slot $t_j$ to slot $t_{j+1}$ contributes to an overall drop in entanglement of at least $(N-j-1)/s$ times $\Omega(R_H)$. Summing over these $N-2$ transitions, we get an entanglement drop of at least $\Omega(sR_H)$ across all the slots of every high-success round. 
\item There is a final missing detail: we need to account for possible increases caused by the verifier's action in between the rounds, such as testing the returned qubit. However, since the verifiers have at most $R_H$ reference qubits, we show that this only incurs an additive loss of $O(R_H)$ units in entanglement consumption (see \Cref{lem:verifier-entanglement-accounting}). 
 Hence, despite the loss, we still get a lower bound of at least $\Omega(sR_H)$ units of squashed entanglement.
\end{itemize}

\paragraph{From Adversaries with Many High-success Rounds to General Adversaries} The preceding argument would complete the proof if the original execution contained sufficiently many high-success rounds. However, non-negligible overall success probability does not imply the existence of even one such round. This is the same issue that motivated our use of quantum-register size as the adversarial resource measure on \Cpageref{pg:bounded-register-size}.
\begin{enumerate}
    \item The adversaries may begin with a mixture that contains a highly entangled state with small inverse-polynomial probability and a product state otherwise.
    \item Conditioned on the highly entangled state, every round may be won with high probability, while the unconditional success probability of each round can remain far below the high-success threshold.
\end{enumerate}
  We therefore cannot apply the preceding argument directly to any round of the original execution.

\paragraph{Solution: A Block of Consecutive High-success Rounds via Post-selection} Our solution arises from the same motivating example. The key observation is that although the original state does not have large entanglement, there should be a post-selected state (in fact, after the first round itself in the motivating example) that does, and the adversarial registers must be large enough to hold even this state. Formally, we do the following.
\begin{enumerate}
    \item Let $L:=\lceil\log_2^2(\secparam)\rceil$ and partition the $m$ rounds into $L$ consecutive blocks\footnote{whose lengths differ by at most one}. Assuming the adversaries win the full protocol with probability at least $(1-10^{-7})^L$, there must exist a block $\mathcal{B}$ such that the probability of winning all rounds in $\cal B$, conditioned on success in all rounds preceding $\cal B$, is at least $1-10^{-7}$.
    \item We consider the post-selected state $\rho_{\rm post}$ obtained after conditioning on success on all rounds preceding $\cal B$, and run the protocol rounds in $\cal B$ on such a post-selected state.
    \item Since we keep all classical outcomes, the post-selected state at the beginning of the block is also a classically flagged mixture of pure states. 
    \item Also, since the probability of winning any individual round in $\cal B$ is at least the probability of winning all rounds in $\cal B$, every round in this execution, starting from the post-selected state, is a high-success round. 
    \item Hence, by the argument in the preceding paragraphs, the decrease in squashed entanglement in each round must be more than $\Omega(sR_H)$. 
    \item Note that the block $\cal B$ contains at least $\lfloor m/L\rfloor$ rounds, where $L:=\lceil\log_2^2(\secparam)\rceil$. Summing the decreases over all rounds in the block and using the non-negativity of squashed entanglement yields
\begin{equation}
E_{\rm sq}(A_0:A_1)_{\rho_{\rm post}}>\left\lfloor\frac{m}{\lceil\log_2^2(\secparam)\rceil}\right\rfloor \Omega(sR_H)=\Omega\left(\frac{smR_H}{\log_2^2(\secparam)}\right).
\end{equation}

\end{enumerate}

Since squashed entanglement is at most the number of qubits in either adversarial group, the fixed quantum registers supporting the post-selected state must contain at least this many qubits. Finally, since post-selection does not change the number or dimensions of the adversarial coalition's quantum registers, the total adversarial quantum-register size is $\Omega(smR_H/\log_2^2(\secparam))=\widetilde{\Omega}(smR_H)$.

\paragraph{From Non-overlapping to Overlapping Slots} The arguments in the preceding paragraph assume that consecutive slot intervals do not overlap, i.e., operations associated with each slot begin after all operations of the preceding slot have ended.\footnote{The arguments assume that for every intermediate auxiliary state obtained from an earlier good slot comes before an intermediate auxiliary state obtained from a later good slot} However, as described before, for both \Cref{thm:clock-main-result-informal,thm:continuous-main-result-informal}, it is crucial that consecutive slots can begin arbitrarily close to one another, which may potentially result in overlapping slot intervals.

To extend the result to overlapping slots, we consider a non-overlapping execution obtained by increasing the time between the start times of consecutive slots by a fixed positive amount large enough to make the slots non-overlapping, postponing later rounds and the actions between them accordingly.
For the reduction, we simulate the original attack against the overlapping slot execution, as follows.
\begin{itemize}
    \item The reduction runs the original strategy and stores classical information from earlier slots until its scheduled use, and delays each operation until its required information and quantum registers are available. It uses the same local and outgoing quantum registers, preserving their order of use and release.
    \item Crucially, we show that responses in an originally accepted execution can be reproduced at the new deadlines of the non-overlapping execution (\Cref{sec:proof-overlap-to-nonoverlap}). This is based on the fact that an adversary that has learned the challenge knows that the verifiers' later slots in that round are idle. An adversary that has not yet learned the challenge may still obtain new information from the other verifier's later slots, but that information cannot affect a response that reaches the requesting verifier by the original deadline.
\end{itemize}

Clearly, the rescheduled strategy simulates the original strategy perfectly and as per the time constraints of the non-overlapping execution, while using exactly the same quantum registers. Hence, the rescheduled strategy in the non-overlapping execution has success probability at least that of the original strategy in the original overlapping execution.  See \Cref{prop:overlap-to-nonoverlap} and its proof in \Cref{sec:proof-overlap-to-nonoverlap} for more details.

\subsection{Proving the Entanglement-Consumption Theorem}
\label{sec:entanglement-consumption-thoerem-informal}
We now explain the proof of \Cref{thm:informal-entanglement-consumption}. As before, we will begin with $R_H=1$ but now in the imperfect adversaries case. Let $O$ be the reference qubit, and suppose $A_0$ can \emph{approximately} recover the other Bell-pair half initially, while $A_1$ can approximately recover it after an LOCC evolution. To extend the proof template from the perfect adversaries case on \Cpageref{pg:proof-idea}, we need three relations, upto some small additive error.
\begin{enumerate}
    \item An upper bound on $E_{\rm sq}(OA_0:A_1)_{\rm initial}$ for the initial state in terms of $\Esq(A_0:A_1)_{\rm initial}$,
    \item Monotonicity of $E_{\rm sq}(OA_0:A_1)$ from the initial to final states,
    \item A lower bound on $E_{\rm sq}(OA_0:A_1)$ for the final state in terms of $\Esq(A_0:A_1)$.
\end{enumerate}

It is easy to see that the last two relations hold.
In particular, LOCC monotonicity of squashed entanglement gives the second relation, i.e.,
\begin{equation*}
E_{\rm sq}(OA_0:A_1)_{\rm final}\leq E_{\rm sq}(OA_0:A_1)_{\rm initial}.
\end{equation*} 
Next, monogamy of squashed entanglement gives
\begin{equation*}
E_{\rm sq}(OA_0:A_1)_{\rm final}\geq E_{\rm sq}(A_0:A_1)_{\rm final}+E_{\rm sq}(O:A_1)_{\rm final}.
\end{equation*}
Since $A_1$ can locally recover a qubit (nearly) maximally entangled with $O$, the state of $O$ and the recovered qubit has high\footnote{By high, we mean at least some constant threshold, sufficiently close to $1$. See the formal version, \Cref{cor:k-extraction-entanglement-consumption-distributed}} fidelity with a Bell pair, and hence $E_{\rm sq}(O:A_1)_{\rm final}$ is close to one. Combining it with the last equation gives the last relation. 

Thus, if we can show that the first relation also holds, i.e., 
\begin{equation*}
E_{\rm sq}(OA_0:A_1)_{\rm initial}\leq E_{\rm sq}(A_0:A_1)_{\rm initial} + \textrm{error term},
\end{equation*}
with a (small enough) constant error term, then the three inequalities give an $\Omega(1)$ decrease in $E_{\rm sq}(A_0:A_1)$.

\paragraph{Obstacle in Proving the Upper Bound with the Initial State} The initial recovery guarantee says that $A_0$ can produce a qubit that is nearly maximally entangled with $O$. Intuitively, this means that the Bell-pair half paired with $O$ is already present on the $A_0$ side, so adjoining $O$ to that side should add very little entanglement across the cut. The difficulty is to turn this intuition into a dimension-independent upper bound on $E_{\rm sq}(OA_0:A_1)$. In particular, 

\begin{enumerate}
    \item \textbf{Lack of chain rule for squashed entanglement} Note that if squashed entanglement satisfied a chain rule, then getting this upper bound would be immediate.\footnote{Indeed, if we had measures satisfying the chain rule, such as mutual information, then such an upper bound would be immediate, but these measures capture both classical correlations and entanglement, and are therefore unsuitable for our purpose because the adversaries may exchange and store arbitrarily large classical records without sharing any additional entanglement.} However, squashed entanglement, like most suitable entanglement measures, does not satisfy a chain rule. 
    \item \textbf{Difficulty in converting high fidelity to Bell pair to high squashed entanglement} A natural approach would be to use the high Bell-pair fidelity to compare the squashed entanglement of the actual state with that of a perfect Bell pair, via a continuity bound. However, converting this fidelity into a bound on squashed entanglement, using the Fannes--Audenaert continuity bound~\cite{audenaert2007sharp}, introduces an error proportional to the logarithm of the adversarial-register dimension. Since this dimension is exactly what we are trying to lower-bound at the end, this would lead to a circular argument.
    \item \textbf{Locking of quantum correlations}  In fact, in the most general mixed-state setting, no dimension-independent upper bound of the required form is true. This is due to a phenomenon called the \emph{locking of quantum correlations}~\cite{christandl2005uncertainty}. In particular, there are states called \emph{flower states} on arbitrarily large registers, such that removing a single qubit can change the squashed entanglement by an amount that grows with the dimensions of the surrounding registers. Translated to our setting, removing $O$ from the entanglement bipartition $OA_0:A_1$ may introduce an additive error that depends on the register sizes of $A_0$ and $A_1$. In other words, the difference
\begin{equation*}
E_{\rm sq}(OA_0:A_1)_{\rm initial}-E_{\rm sq}(A_0:A_1)_{\rm initial}
\end{equation*}
can grow linearly in the total number of qubits in the registers $A_0$ and $A_1$. 
\end{enumerate}

\paragraph{Solution: Classically Flagged Mixtures of Pure States} Our way around this problem begins with the observation that pure states (i.e., the global state on $OA_0A_1$ is pure) do not exhibit the locking phenomenon. Indeed, we show that the required dimension-independent upper bound holds when the state on $OA_0A_1$ is pure (formally shown in \Cref{lemma:pure-state-flow}) by exploiting the fact that squashed entanglement can be expressed in terms of local von Neumann entropies for pure states.

Unfortunately, this does not entirely solve the problem because the states that appear during the protocol are not necessarily pure. The adversaries measure quantum registers, exchange classical messages, and store transcript information. One might try to purify all these operations coherently, but doing so would introduce additional quantum registers and, more importantly, would no longer preserve the LOCC model whose monotonicity is essential for adding the entanglement decreases throughout the security proof.

\paragraph{Classically Flagged Mixture of Pure States} We therefore take an intermediate route between pure and arbitrary mixed states. We keep the classical information explicit and work with classically flagged mixtures of pure states. In particular, we show that the global state has the form
\begin{equation*}
\rho_{\rm global}=\sum_c p_c\ket{c}\bra{c}_C\otimes\ket{\psi_c}\bra{\psi_c}_Q,
\end{equation*}
where $C$ and $Q$ denote the classical and quantum registers, respectively.\footnote{A register is classical when it is classical in the complete joint state: it represents a classical random variable that may be classically correlated with the other registers but is not quantumly entangled with them; see \Cref{def:classically-flagged-mixture-pure-states}.} After fixing the classical value $c$, the state on the quantum registers is the pure state $\ket{\psi_c}$. We may therefore apply the pure-state entanglement upper bound separately to each $\ket{\psi_c}$. The direct-sum identity for squashed entanglement in \Cref{lemma:direct-sum} then allows us to average these inequalities over $c$, yielding the desired upper bound for the complete initial state; see \Cref{lemma:bounding-initial-state-k-extraction} for more details.

\paragraph{Transferring Several Qubits With an Average Recovery Guarantee} The discussion above only considers the case of one reference qubit, whereas \Cref{thm:informal-entanglement-consumption} concerns $R_H$ reference qubits and a uniformly random qubit index $r\in[R_H]$. The hypotheses of the theorem do not provide a joint procedure to recover all $R_H$ qubits in both the initial and final states.\footnote{Even if we manage to identify a single index with sufficiently good recovery at both the initial and final states and then invoke the one-qubit case proven above, the resulting lower bound would still lose the factor of $R_H$.} Instead, each index may have a different local recovery map. The maps need not commute, and recovery is guaranteed only on average over random $r$. 

To extend our arguments for the single-qubit case to the above setting, we had to develop additional tools such as generalized additivity of mutual information (up to a term due to the correlations among the reference qubits, as formalized in \Cref{lemma:superadditivity}) to prove the upper bound required for the stronger theorem. In particular, we do the following.
\begin{itemize}
    \item We start with the previous technique of combining the pure-state upper bound in \Cref{lemma:pure-state-flow} for each classical record and the averaging argument using the direct-sum identity (formally shown in \Cref{lemma:direct-sum}).
    \item We then use the generalized additivity of mutual information (formally shown in \Cref{lemma:superadditivity}) to relate the mutual information involving $O=O_1\cdots O_{R_H}$ to a sum of separate mutual information terms for each individual subsystem up to a correlation term.
    \item Finally, the additional correlation term cancels with the entropy terms from the pure state bound, while the errors arising from comparing each approximately recovered pair with an ideal Bell pair can be proven to be small by the Fannes-Audenaert continuity bound~\cite{audenaert2007sharp} since each individual system involves only two qubits.
    \item Averaging and summing over all indices yield the required upper bound (\Cref{lemma:bounding-initial-state-k-extraction}).
\end{itemize}

\subsection{Lifting the Protocol from 1-D Static Verifiers to Higher-Dimensional Mobile Verifiers}

The remaining step is to lift the one-dimensional static-verifier result to mobile verifiers in higher dimensions. We show a generic reduction, for which we first translate and rotate the execution so that the claimed location is at the origin and the two verifiers lie at $(-R,\vec0)$ and $(R,\vec0)$. A natural attempt at reduction would be to project every adversarial location onto the verifier line. Projection does not increase distances, so messages could be delivered early and stored until the original use times. The problem is that an adversary away from the claimed location may project exactly onto it.

We circumvent this problem by slightly modifying the projection as follows.
\begin{enumerate}
    \item We map any adversarial location $(x,\vec y)\in\mathbb R\times\mathbb R^{d-1}$ to $F_\varepsilon(x,\vec y):=x+\varepsilon\|\vec y\|_2^2$.
    Since the adversarial coalition is finite, we can choose a sufficiently small nonzero $\varepsilon$ such that no adversary is mapped to the origin and none of the distances used in the simulated execution increase. 
    \item We place a one-dimensional adversary at the image of each original adversary and let it perform the same local operations. Messages received early by an adversary are kept until their original use times. 
    \item If the distance from a responding adversary to the recipient verifier becomes shorter, the response remains in the same outgoing quantum register and is sent later, so that it reaches the verifier at the original time. 
\end{enumerate}

Clearly, the resulting 1-D adversarial coalition uses the same quantum registers and reproduces every response that affects acceptance at the right time; hence, it has at least the same success probability as the original higher-dimensional coalition. Thus the 1-D theorem (\Cref{thm:main-result-informal}) gives the higher-dimensional lower bound $\Omega(smR_H/\log_2^2(\secparam))=\widetilde{\Omega}(smR_H)$.

\paragraph{Paper Organization} In \Cref{sec:quantum-information}, we review the quantum-information tools used in the proof. In \Cref{sec:def}, we define quantum position verification and the adversarial model, and in \Cref{sec:protocol}, we describe the protocol. In \Cref{sec:quantum-info-lower-bound-entanglement-consumption}, we prove the entanglement-consumption result. In \Cref{sec:security-proof}, we prove one-dimensional static-verifier security and lift it to higher-dimensional mobile verifiers. The supporting proofs for \Cref{sec:quantum-information,sec:quantum-info-lower-bound-entanglement-consumption,sec:security-proof} appear in \Cref{sec:appendix-basic-proofs,sec:appendix-proofs-of-entanglement-lemmas,sec:appendix-security-helpers}, and finally, the continuous-time model formalization of our results is discussed in \Cref{sec:continuous-time-model}. 

\section{Preliminaries: Quantum Information}
\label{sec:quantum-information}
For quantum states $\rho$ and $\sigma$ on the same finite-dimensional Hilbert space, their normalized trace distance is $\TD{\rho,\sigma}:=\frac{1}{2}\|\rho-\sigma\|_1$.

We use squared Uhlmann fidelity $F(\rho,\sigma):=\|\sqrt{\rho}\sqrt{\sigma}\|_1^2$. If $\sigma=\ket{\psi}\bra{\psi}$ is pure, then $F(\rho,\sigma)=\bra{\psi}\rho\ket{\psi}$, and the Fuchs--van de Graaf inequalities read $1-\sqrt{F(\rho,\sigma)}\leq\TD{\rho,\sigma}\leq\sqrt{1-F(\rho,\sigma)}$.

For qubit registers $U$ and $V$, let $|\Phi^+\rangle_{UV}:=(|00\rangle+|11\rangle)/\sqrt{2}$ denote the maximally entangled Bell state and let $\Phi^+_{UV}:=|\Phi^+\rangle\langle\Phi^+|_{UV}$ denote its density matrix.

The binary entropy function is $h(\epsilon):=-\epsilon\log_2\epsilon-(1-\epsilon)\log_2(1-\epsilon)$ for $\epsilon\in[0,1]$, with $h(0)=h(1)=0$. We also define $g(\epsilon):=\epsilon\log_2 3+h(\epsilon)$ and $f(\epsilon):=\epsilon\log_2 3+2h(\epsilon)$. We use the following concave, non-decreasing extensions on $[0,1]$:
\begin{align}
g_{safe}(\epsilon)&:={}
\begin{cases}
g(\epsilon),&\epsilon\leq3/4,\\
g(3/4),&\epsilon>3/4,
\end{cases}\\
f_{safe}(\epsilon)&:={}
\begin{cases}
f(\epsilon),&\epsilon\leq1/2,\\
f(1/2),&\epsilon>1/2.
\end{cases}
\end{align}

For a bipartite state $\omega_{AB}$, the squashed entanglement is $E_{sq}(A:B)_\omega:=\frac{1}{2}\inf_{\omega_{ABE}}I(A:B|E)_\omega$, where the infimum is over all extensions $\omega_{ABE}$ of $\omega_{AB}$ and $I(A:B|E)_\omega:=S(AE)_\omega+S(BE)_\omega-S(E)_\omega-S(ABE)_\omega$ is the conditional mutual information.

We also use the coherent information $I_c(A\rangle B)_\omega:=S(B)_\omega-S(AB)_\omega$.

\vspace{0.3cm}
\noindent \textbf{Standard Facts} We use the following standard properties throughout the paper.
\begin{enumerate}
\item[(i)] \textbf{Pure states} If $\psi_{AB}$ is pure, then $E_{sq}(A:B)_\psi=S(A)_\psi=S(B)_\psi$~\cite{christandl2004squashed}.
\item[(ii)] \textbf{LOCC monotonicity} Squashed entanglement cannot increase under Local Operations and Classical Communication (LOCC)~\cite{christandl2004squashed}.
\item[(iii)] \textbf{Monogamy} For every tripartite state $\omega_{ABC}$, $E_{sq}(A:BC)_\omega\geq E_{sq}(A:B)_\omega+E_{sq}(A:C)_\omega$~\cite{christandl2004squashed}.
\item[(iv)] \textbf{Coherent-information lower bound} For every bipartite state $\omega_{AB}$, $E_{sq}(A:B)_\omega\geq I_c(A\rangle B)_\omega$. Indeed, squashed entanglement upper-bounds one-way distillable entanglement $E_D^\rightarrow$~\cite{christandl2004squashed}, while the hashing inequality gives $E_D^\rightarrow(A:B)_\omega\geq I_c(A\rangle B)_\omega$~\cite{devetak2005distillation}.
\item[(v)] \textbf{Fannes--Audenaert continuity} If $\rho$ and $\sigma$ act on a $d$-dimensional Hilbert space and $\TD{\rho,\sigma}=T\leq1-1/d$, then $|S(\rho)-S(\sigma)|\leq T\log_2(d-1)+h(T)$~\cite{audenaert2007sharp}. The right-hand side is non-decreasing in $T$ on $[0,1-1/d]$.
\end{enumerate}

\begin{definition}\label{def:classically-flagged-mixture-pure-states}
A register $C$ is classical in a joint state if, writing $Q$ for all remaining registers, the state has the form $\rho_{CQ}=\sum_c p_c\ket{c}\bra{c}_C\otimes\rho_Q^c$ in the standard computational basis. Thus $C$ may be classically correlated with $Q$, but it is not entangled with $Q$; requiring only the reduced state of $C$ to be diagonal would not be enough.

A state is a classically flagged mixture of pure states if it has the form $\rho_{FQ}=\sum_f p_f\ket{f}\bra{f}_F\otimes\ket{\psi_f}\bra{\psi_f}_Q$, where $F$ is a collection of classical flag registers. In other words, once the classical value $f$ is fixed, the remaining state is pure.
\end{definition}

The following observation records the simple operations that preserve this form.
\begin{remark}\label{remark:observation-classical-registers}
Adding fresh all-zero classical registers, copying classical values by bitwise CNOT, and computing classical functions into fresh registers preserve the classically flagged form. Projective measurements also preserve this form when their outcomes are recorded in fresh classical registers and the existing flags are retained. Conditioning on any value in the support of one of the classical flag registers also leaves a classically flagged mixture of pure states on the remaining registers.
\end{remark}

\subsection{Classical records and quantum-register size}

Our storage bound counts quantum registers but not classical records. The following standard identity lets us evaluate squashed entanglement by averaging over a classical record held on one side.
\begin{lemma}[Classical-Record Direct Sum]\label{lemma:direct-sum}
Let $\omega_{XAB}=\sum_x p_x|x\rangle\langle x|_X\otimes\omega_{AB}^x$, where $\{|x\rangle\}_x$ is an orthonormal basis and Alice holds the classical register $X$. Then $E_{sq}(AX:B)_\omega=\sum_x p_xE_{sq}(A:B)_{\omega^x}$.
\end{lemma}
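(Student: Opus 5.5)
We prove the two inequalities $E_{sq}(AX:B)_\omega\le\sum_x p_xE_{sq}(A:B)_{\omega^x}$ and $E_{sq}(AX:B)_\omega\ge\sum_x p_xE_{sq}(A:B)_{\omega^x}$ separately. Both reduce to the conditional mutual information identity for classical conditioning registers:
\[
I(A:B\mid EX)=\sum_x p_x\, I(A:B\mid E)_{x},
\]
together with the chain rule.

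\textbf{Upper bound.} Fix $\delta>0$. For each $x$, choose an extension $\omega^x_{ABE}$ of $\omega^x_{AB}$ with $\frac12 I(A:B|E)_{\omega^x}\le E_{sq}(A:B)_{\omega^x}+\delta$. We may take all $E$ to be the same Hilbert space by padding with zeros, since there are finitely many $x$.

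Define the extension $\Omega_{XABEX'}:=\sum_x p_x|x\rangle\langle x|_X\otimes\omega^x_{ABE}\otimes|x\rangle\langle x|_{X'}$ of $\omega_{XAB}$, with squashing system $EX'$. Since $X$ and $X'$ are perfectly correlated classical copies, conditioning on $X'$ makes $X$ deterministic. Hence
\[
I(AX:B\mid EX')=\sum_x p_x I(A:B\mid E)_{\omega^x},
\]
because for a fixed value of $X'$ the register $X$ is in a pure product state and contributes nothing. Taking the infimum and letting $\delta\to0$ gives the upper bound.

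\textbf{Lower bound.} Take any extension $\omega_{XABE}$ of $\omega_{XAB}$. Since $X$ is classical only in the marginal on $XAB$, not necessarily with respect to $E$, we first dephase it. Let $\mathcal D$ be the completely dephasing channel on a copy: apply CNOT from $X$ into a fresh register $X'$, and note that $X'$ is then classical jointly with everything.

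By the chain rule,
\[
I(AX:B|E)=I(X:B|E)+I(A:B|EX)\ge I(A:B|EX).
\]
When $X$ is not classical relative to $E$, we argue instead as follows. Because the reduced state on $XAB$ is block-diagonal in $x$, we can write $\omega_{XABE}$ via a purification. Let $|\psi\rangle_{XABE'}=\sum_x\sqrt{p_x}|x\rangle_X|\phi_x\rangle_{ABE'}$ be a purification of the form obtained by taking $E'$ to contain a copy of $x$. Any extension arises as $E=\mathcal N(E'\,\cdot)$ applied to a purification.

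A simpler route uses data processing. Measuring $X$ in the computational basis is a local operation on the $AX$ side and leaves $\omega_{XAB}$ unchanged, because $X$ is already classical there. Applying $\mathcal D_X$ to the extension yields another extension $\tilde\omega_{XABE}$ of the same $\omega_{XAB}$. For this extension, $I(AX:B|E)_{\tilde\omega}\le I(AX:B|E)_\omega$, since data processing on the $AX$ side is valid for conditional mutual information: $I(A:B|E)$ is monotone under local channels on $A$.

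So we may assume $X$ is classical jointly with $E$. Then
\[
I(AX:B|E)\ge I(A:B|EX)=\sum_x p_x I(A:B|E)_{\tilde\omega^x}\ge\sum_x p_x\, 2E_{sq}(A:B)_{\omega^x},
\]
since $\tilde\omega^x_{ABE}$ extends $\omega^x_{AB}$. Taking the infimum over extensions gives the lower bound.

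The main subtlety is this lower bound: an arbitrary extension need not keep $X$ classical relative to $E$. The dephasing data-processing step handles this, and I would check carefully that monotonicity of $I(\cdot:B|E)$ under channels on the first argument applies.
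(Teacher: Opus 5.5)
Your proposal is correct and follows essentially the same route as the paper: for the upper bound you glue (near-)optimal extensions for each $x$ together with a classical copy of $x$ in the environment. For the lower bound you dephase $X$ in an arbitrary extension, which is a local operation on the $AX$ side that cannot increase $I(AX:B|E)$ and leaves $\omega_{XAB}$ unchanged, and then apply $I(AX:B|E)=I(X:B|E)+I(A:B|EX)$. The CNOT and purification detours in your lower-bound paragraph are unnecessary and can be deleted. The monotonicity you want to double-check does hold: CMI is monotone under local channels on the first argument, by a Stinespring dilation followed by nonnegativity of CMI.
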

The proof is given on \Cpageref{pf:lemma:direct-sum}.

The next lemma converts an entanglement lower bound into a lower bound on the number of physical qubits, even when the classical record is arbitrarily large.
\begin{lemma}[Quantum-Register Dimension Bound]\label{lem:quantum_dim_bound}
Let $\rho_{A_1A_2}$ be bipartite, with $A_1=C\otimes A_1'$, where $C$ is classical and $A_1'$ is a quantum register. If $\rho_{A_1A_2}=\sum_c p_c|c\rangle\langle c|_C\otimes\rho_{A_1'A_2}^{(c)}$, then $E_{sq}(A_1:A_2)_\rho\leq\log_2\dim(\mathcal H_{A_1'})$. In particular, if $A_1'$ consists of $Q$ qubits, then $E_{sq}(A_1:A_2)_\rho\leq Q$. By symmetry, the same statement holds with $A_1$ and $A_2$ interchanged.
\end{lemma}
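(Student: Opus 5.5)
The plan is to reduce the claim to the direct-sum identity (\Cref{lemma:direct-sum}) together with an elementary dimension bound for each conditional state. Since $C$ is classical in the joint state and sits on the $A_1$ side, \Cref{lemma:direct-sum} applies with $X=C$, $A=A_1'$, $B=A_2$. It gives
\begin{equation*}
E_{sq}(A_1:A_2)_\rho=E_{sq}(CA_1':A_2)_\rho=\sum_c p_c\,E_{sq}(A_1':A_2)_{\rho^{(c)}}.
\end{equation*}
This step is exactly why the classical record costs nothing: its dimension never enters, however large it is.

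It then suffices to show $E_{sq}(A_1':A_2)_{\sigma}\leq\log_2 d$ for every state $\sigma_{A_1'A_2}$, where $d:=\dim(\mathcal H_{A_1'})$. Averaging this bound against the probabilities $p_c$ then finishes the proof. For the per-state bound I would take the trivial extension with $E$ one-dimensional, so that $E_{sq}(A_1':A_2)_\sigma\leq\frac12 I(A_1':A_2)_\sigma$. Next I would use the standard bound $I(A_1':A_2)=S(A_1')-S(A_1'|A_2)\leq 2S(A_1')\leq 2\log_2 d$. The middle inequality holds because $S(A_1'|A_2)\geq -S(A_1')$, which follows from the Araki--Lieb inequality $S(A_1'A_2)\geq S(A_2)-S(A_1')$.

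Combining the two steps yields $E_{sq}(A_1:A_2)_\rho\leq\log_2 d$. If $A_1'$ is $Q$ qubits, then $d=2^Q$ and the bound is $Q$. The symmetric statement, with $A_1$ and $A_2$ interchanged, follows from the same argument because both squashed entanglement and \Cref{lemma:direct-sum} are symmetric under swapping the two parties. No step here is a genuine obstacle. The only point needing care is confirming that the hypothesis really is the classical form required by \Cref{lemma:direct-sum}, meaning a block-diagonal state in the flag basis and not merely a diagonal marginal on $C$; but that form is exactly what is assumed.
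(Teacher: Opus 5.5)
Your proposal is correct and matches the paper's proof: both use the direct-sum identity (\Cref{lemma:direct-sum}) to remove the classical flag, bound each conditional term by $S(A_1')\leq\log_2\dim(\mathcal H_{A_1'})$, and then average over $c$. The only difference is that you derive $E_{sq}(A_1':A_2)\leq S(A_1')$ explicitly, using the trivial extension and the Araki--Lieb inequality, while the paper cites this as a standard fact.
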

The proof is given on \Cpageref{pf:lem:quantum_dim_bound}.


\ifnum\quantinfo=0

\section{Definitions of Quantum Position Verification}\label{sec:def}

We adapt the definition of a QPV protocol from~\cite{BCF+14} to the bounded quantum-register setting. The adversary model and the precise quantum-register accounting used in the security definitions are stated immediately after the definition. 
\begin{definition}\label{def:qpv}
An $\ell$-verifier Quantum Position Verification (QPV) protocol $\{\textnormal{QPV}_\secparam\}_{\secparam}$ in $d$-dimensional space is an interactive protocol between honest Quantum Polynomial-Time (QPT) verifiers $V_1,\dots,V_\ell$, a QPT prover $P$, and a deterministic QPT algorithm $\relocate$. A claimed location $x\in\mathbb R^d$ is \emph{admissible} for verifier positions $(x_1,\ldots,x_\ell)$ if $x\in\cvhull(x_1,\ldots,x_\ell)\setminus\{x_1,\ldots,x_\ell\}$. The protocol satisfies the following properties:
\begin{itemize}
\item \textbf{Correctness} For every set of verifier locations $x_1,\ldots,x_\ell$ and every admissible location $x$ with respect to them, an honest prover at $x$ succeeds with certainty in the protocol, i.e.,
\begin{equation*}
\Pr[\textnormal{QPV}_\secparam(V_1(x_1),\dots,V_\ell(x_\ell)\leftrightarrow P(x))\textnormal{ accepts}]=1.
\end{equation*}
\item \textbf{Static-verifier LOCC security under bounded quantum registers} For any $r$, the protocol has $r$-qubit static-verifier security if the following holds. Let $(cons_1,\ldots,cons_\ell)=\relocate(0)$ be the fixed verifier locations. For every admissible claimed location $x\in\cvhull(cons_1,\ldots,cons_\ell)$ and every finite coalition of adversaries $(P_1,\ldots,P_n)$ having some fixed locations in $\RR^d\setminus \{x\}$, if the coalition has at most $r(\lambda)$ qubits in total, excluding classical registers, and communicates internally only through classical messages, then the verifiers accept with negligible probability. Equivalently, there is a negligible function $\textnormal{negl}(\lambda)$ such that
\begin{equation*}
\Pr[\textnormal{QPV}_\lambda(V_1(cons_1),\dots,V_\ell(cons_\ell)\leftrightarrow P_1(1^\secparam),\dots,P_n(1^\secparam))\textnormal{ accepts}]\leq\textnormal{negl}(\lambda).
\end{equation*}
\item \textbf{Mobile-verifier LOCC security under bounded quantum registers} For any $r$, the protocol has $r$-qubit mobile-verifier security if the following holds. For every admissible claimed location $x$ with verifier locations $(z_1,\ldots,z_\ell)=\relocate(x)$ and every finite coalition of adversaries $(P_1,\ldots,P_n)$ having some fixed locations in $\RR^d\setminus \{x\}$, if the coalition has at most $r(\lambda)$ qubits in total and communicates internally only through classical messages, then the verifiers accept with negligible probability. Equivalently, there is a negligible function $\textnormal{negl}(\lambda)$ such that
\begin{equation*}
\Pr[\textnormal{QPV}_\lambda(V_1(z_1),\dots,V_\ell(z_\ell)\leftrightarrow P_1(1^\secparam),\dots,P_n(1^\secparam))\textnormal{ accepts}]\leq\textnormal{negl}(\lambda).
\end{equation*}

\end{itemize}

For any functions $t<r$, an $\ell$-verifier QPV protocol achieves a $t$-vs-$r$ static-verifier, respectively mobile-verifier, resource gap if the sum of the honest parties' peak quantum storage and total quantum communication is at most $t(\lambda)$ qubits while the protocol has $r$-qubit static-verifier, respectively mobile-verifier, security.
\end{definition}

\paragraph{Adversary Model}\phantomsection\label{sec:adversary-model}
The adversarial coalitions used in the preceding security definition satisfy the following conditions.
\begin{enumerate}
\item The coalition is finite. Each adversary and its local hardware remain at one fixed location throughout the protocol, and no adversary is located at the claimed location.
\item For each fixed $\secparam$, every adversary's complete strategy is a finite circuit, which may be arbitrarily large and need not be efficient. Joining these circuits by all fixed classical wires that may be used in any execution gives one finite circuit for the coalition. Every wire goes from an earlier operation to a later one, every quantum wire within the coalition remains local to one adversary, and every wire between adversaries is classical. Earlier messages and measurement outcomes may determine which later operations occur, when they occur, and how long an adversary waits, but they do not change the circuit. An unused operation acts as the identity, and an unused message wire carries a fixed null value at its assigned delivery time.
\item Each adversary's circuit, and hence the coalition's circuit, has a fixed set of quantum registers, which are never added or removed during the execution. When an adversary discards the contents of a register, we represent this by measuring that register in the computational basis, keeping the outcome only in a classical register used for the analysis and never read by the adversarial strategy, and resetting the same quantum register so that it may be reused. If an incoming verifier qubit is kept, we represent this by swapping the incoming register with an available reset register of the circuit.
\item No information or physical system travels faster than speed $1$. Local quantum operations are instantaneous, but the adversaries may wait out to delay their operations or messages.
\item The adversaries may share an arbitrary initial quantum state and perform arbitrary local quantum operations. They may exchange arbitrary classical messages but cannot send quantum messages to one another.\footnote{They may still receive quantum messages from the verifiers and send quantum messages to the verifiers as allowed by the protocol.}
\item Classical computation, communication, and storage are unrestricted and are not counted. The adversarial coalition's quantum-register size is the total number of qubits that its fixed local and outgoing quantum registers can hold.\footnote{Following the discussion in the introduction, an appropriate version of \Cref{def:qpv} would be to count both classical and quantum storage as well as classical and quantum communication as adversarial resources  when bounding the adversarial coalition's resources. The present definition is stronger as it clearly implies such a definition.} Every outgoing quantum register from an adversary remains included in the sending adversary's quantum-register size and unavailable for reuse until the verifier receives or discards its state.
\end{enumerate}

\begin{definition}[Inter-slot-delay independence]\label{def:qpv-slot-based}
A one-dimensional two-verifier QPV protocol is called slot-based if each round uses logical times $T_1<T_2<\cdots<T_s$ with fixed spacing $\Delta:=T_{j+1}-T_j$, and each verifier instruction and honest-response deadline is fixed relative to its corresponding time $T_j$. The time intervals associated with different slots may overlap.

For a constant $c>0$, let $c-\Pi$ be the protocol obtained from $\Pi$ by multiplying the time between consecutive logical times by $c$, without changing the timing of any instruction or response within a slot. We say that $\Pi$ is inter-slot-delay independent with security bound $B(\lambda)$ if $c-\Pi$ has the same $B(\lambda)$-qubit static-verifier security guarantee as $\Pi$ for every constant $c>0$.
\end{definition}

\begin{definition}\label{def:qpv-d-dimensional-generalization}
Let $\Pi$ be a one-dimensional static-verifier QPV protocol with claimed location $0$ and verifier locations $-R$ and $R$. Its $d$-dimensional mobile-verifier generalization $\Pi'$ places the claimed location at $z\in\mathbb R^d$, chooses a unit vector $u$, and places the verifiers at $z-Ru$ and $z+Ru$, so that $z$ is the midpoint of the line segment between them. The entire one-dimensional configuration, including the directed paths of the verifier inputs, the logical times, and the response deadlines, is preserved by this translation and rotation. Conversely, $\Pi$ is called the one-dimensional version of $\Pi'$.
\end{definition}

\section{QPV Protocol}\label{sec:protocol}

First, we describe the one-round base protocol, which is a simplified protocol; see \Cref{fig:qpv-protocol-single-round-single-qubit}.

\begin{figure}[t]
\centering
\fbox{%
\begin{minipage}{0.98\linewidth}
\footnotesize
\setlength{\tabcolsep}{3pt}
\renewcommand{\arraystretch}{1.2}
\begin{tabular}{p{0.08\linewidth}|p{0.28\linewidth}|p{0.28\linewidth}|p{0.28\linewidth}|}
\textbf{Slot} & \textbf{$V_0$} & \textbf{$P$} & \textbf{$V_1$} \\
\hline
$t_0$ & Sample $x,\theta,b\xleftarrow{\$}\{0,1\}$ and $i\xleftarrow{\$}[s]$; prepare $\ket{\psi}=H^\theta\ket{x}$ and send $\ket{\psi}$ to $P$. & Receive and store $\ket{\psi}$. & Receive the sampled values $(x,\theta,b,i)$ from $V_0$. \\
\hline
$1\leq t<i$ & Idle. & Keep the stored qubit and wait. & Idle. \\
\hline
$t=i$ ($b=0$) & Send $\mathsf{challenge}$ to $P$. & & \\
& & Immediately send $\ket{\psi}$ to $V_0$. & Idle. \\
& If a response arrives at the prescribed deadline, perform the projection test onto the state $\ket{\psi}$ using its classical description. If the projection succeeds, set the output to be $1$, else $0$. If no response arrives at the prescribed deadline, abort and output $0$. & & \\
\hline
$t=i$ ($b=1$) & & & Send $\mathsf{challenge}$ to $P$. \\
& Idle. & Immediately send $\ket{\psi}$ to $V_1$. & \\
& & & If a response arrives at the prescribed deadline, perform the projection test onto the state $\ket{\psi}$ using its classical description. If the projection succeeds, set the output to be $1$, else $0$. If no response arrives at the prescribed deadline, abort and output $0$. 
\\
\end{tabular}
\end{minipage}%
}
\caption{Synchronized interactive view of the base protocol $\qpvb(V_0,V_1,P)$.}
\label{fig:qpv-protocol-single-round-single-qubit}
\end{figure}

To make the probability of a successful attack negligible, the amplified protocol repeats the base protocol $m$ times. The full protocol is given in \Cref{fig:qpv-protocol-multi-round-unified}.

\begin{figure}[p]
\centering
\fbox{%
\begin{minipage}{0.98\linewidth}
\scriptsize
\setlength{\tabcolsep}{1.8pt}
\renewcommand{\arraystretch}{0.9}
\begin{tabular}{p{0.08\linewidth}|p{0.28\linewidth}|p{0.28\linewidth}|p{0.28\linewidth}|}
\textbf{Slot} & \textbf{$V_0$} & \textbf{$P$} & \textbf{$V_1$} \\
\hline
$t_0$ & For every $j\in[R_H]$, assign the distinct label $r_j=j$, sample $x_{r_j},\theta_{r_j}\xleftarrow{\$}\{0,1\}$, prepare $\ket{\psi}_{r_j}=H^{\theta_{r_j}}\ket{x_{r_j}}$, and send the states to $P$ in label order. Initialize the ordered list $S:=((r_j,x_{r_j},\theta_{r_j}))_{j\in[R_H]}$. & Receive and store the labeled states in the same order. & Receive the labeled records from $V_0$. The verifiers maintain the same list $S$.\\
\hline
\multicolumn{4}{c}{$\vdots$}\\
\hline
\multicolumn{4}{c}{\textbf{$k^{th}$ round for $k\in[m]$}}\\

\multicolumn{4}{p{0.91\linewidth}|}{\centering $V_0$ samples $b_k\xleftarrow{\$}\{0,1\}$ and $i_k\xleftarrow{\$}[s]$, and samples a label $r_k$ uniformly from the $R_H$ distinct labels currently stored in $S$.}\\
\multicolumn{4}{p{0.91\linewidth}|}{\centering Remove the record labeled $r_k$ from $S$ without changing the order of the remaining records.}\\
\hline
$1\leq t<i_k$ & Idle. & Idle. & Idle. \\
\hline
$t=i_k$ (if $b_k=0$) & Send $\mathsf{challenge},r_k$ to $P$. & & \\
& & Immediately remove $\ket{\psi_{r_k}}$ from the list without changing the order of the remaining states and send it to $V_0$. & Idle. \\
& If a response arrives at the prescribed deadline, perform the projection test onto $H^{\theta_{r_k}}\ket{x_{r_k}}$. If the projection succeeds, continue and set the outcome of the round to be $1$, else abort. If no response arrives at the prescribed deadline, abort. & & \\
\hline
$t=i_k$ (if $b_k=1$) & & & Send $\mathsf{challenge},r_k$ to $P$. \\
& Idle. & Immediately remove $\ket{\psi_{r_k}}$ from the list without changing the order of the remaining states and send it to $V_1$. & \\
& & & If a response arrives at the prescribed deadline, perform the projection test onto $H^{\theta_{r_k}}\ket{x_{r_k}}$. If the projection succeeds, continue and set the outcome of the round to be $1$, else abort. If no response arrives at the prescribed deadline, abort. 
\\
\hline
$i_k<t\leq s$ & Idle. & Idle. & Idle. \\
\hline
\multicolumn{4}{p{0.91\linewidth}|}{\centering After all the slots of round $k$ have ended, 
$V_0$ chooses a fresh label, samples a fresh pair $(x,\theta)$ for that label, and appends the labeled record to $S$, and shares it with $V_1$.}\\
\multicolumn{4}{p{0.91\linewidth}|}{\centering $V_0$ sends the corresponding state $H^\theta\ket{x}$ to $P$, which receives and appends it to its list.}\\
\hline
\multicolumn{4}{c}{$\vdots$}\\
\hline
\multicolumn{4}{p{0.91\linewidth}|}{\centering\textbf{Output of the experiment} is $1$ if and only if the outcome of  every round is $1$.}\\
\end{tabular}
\end{minipage}%
}
\caption{Synchronized Interactive view of the protocol $\qpva(V_0,V_1,P)$.}
\label{fig:qpv-protocol-multi-round-unified}
\end{figure}

For polynomial-time-computable, polynomially bounded integer-valued functions $R_H,s,m$ of the security parameter $\secpar$, and assuming perfect quantum communication, correctness of the protocol in \Cref{fig:qpv-protocol-multi-round-unified} is immediate.

\paragraph{Protocol Rules}\phantomsection\label{sec:protocol-rules}
The following rules are part of the protocol.
\begin{enumerate}
\item\label{cond:private-verifier-coordination} The verifiers coordinate through an information-theoretically private and authenticated classical channel. They send classical information to the adversarial coalition only as prescribed by the protocol. In particular, the descriptions of the BB84 qubits are never explicitly revealed to the adversarial coalition. The adversaries nevertheless do learn whether the round was accepted or rejected since the verifiers abort in the latter event. 
\item\label{cond:dynamic-input-path} Each initialization or replacement qubit sent by $V_0$ is sent along the fixed one-way path from $V_0$, through the claimed location, to $V_1$. Before beginning each round, the verifiers wait until the time at which the qubits sent by $V_0$ would reach $V_1$. If $V_1$ receives any of these qubits, the verifiers abort, since the honest prover should have intercepted and stored them.

\item\label{cond:slot-timing} For each slot $j$, the logical time $T_j$ is the time at which the selected verifier's instruction reaches the claimed location. The honest prover immediately returns the requested state, and the corresponding deadline is the time at which this honest response reaches the selected verifier.
\item\label{cond:nonoverlap-response} If the response does not reach the respective verifier at the respective deadline, the verifiers abort. The verifiers also abort if either verifier receives any additional qubit at the deadline or an unwarranted response arrives at a time other than the correct deadline. 
\item\label{cond:nonoverlap-ordering} The verifiers complete all operations of a round before sending a fresh BB84 state as replenishment for the next round.
\end{enumerate}

In \Cref{sec:security-analysis-protocol}, we prove that any adversarial coalition against the amplified protocol in \Cref{fig:qpv-protocol-multi-round-unified} with non-negligible success probability requires total adversarial quantum-register size $\Omega(smR_H/\log^2\lambda)$, while the honest prover uses $R_H$ qubits and the protocol sends $R_H+2m$ honest quantum messages; see \Cref{thm:amplified_security-generalized}. The protocol is slot-based in the sense of \Cref{def:qpv-slot-based}. \Cref{thm:amplified_security-generalized} also proves that its security is independent of the delay between consecutive slots.
Then, by \Cref{thm:generic-reduction}, we show that the one-dimensional static-verifier security result also gives higher-dimensional mobile-verifier security.

However, before proving the security for our protocol, we take a detour to prove the entanglement consumption theorem, which, as discussed in the technical overview, will be essential for our security proof.

\section{Lower Bound on Entanglement Consumption}\label{sec:quantum-info-lower-bound-entanglement-consumption}
Recall the functions $g,f,g_{safe}$, and $f_{safe}$ defined in \Cref{sec:quantum-information}.


Recall that in each round of our protocol, the final random Pauli test essentially estimates the closeness of the recovered pair to a Bell pair.  The following lemma converts this closeness into a (dimension-independent) lower bound on squashed entanglement.
\begin{lemma}[Bell Closeness Implies Entanglement]\label{lemma:lower_bound-closeness}
Let $\tau_{OW}$ be a bipartite state on two qubits ($O$ and $W$) such that its trace distance to a maximally entangled pure Bell state $|\Phi^+\rangle$ is bounded by $\epsilon'$, i.e., $\TD{\tau_{OW}, \Phi^+} \le \epsilon'$ for $0\leq\epsilon' \leq 1/2$. Then the squashed entanglement of $\tau_{OW}$ is lower-bounded by:
\begin{equation}
E_{sq}(O:W)_\tau \ge 1-f(\epsilon')= 1 - \epsilon' \log_2(3) - 2h(\epsilon').
\end{equation}
where $h(\epsilon') = -\epsilon' \log_2 \epsilon' - (1-\epsilon') \log_2(1-\epsilon')$ is the binary entropy function.
\end{lemma}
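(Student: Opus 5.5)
The plan is to go through the coherent-information lower bound, Standard Fact (iv): $E_{sq}(O:W)_\tau\geq I_c(O\rangle W)_\tau=S(W)_\tau-S(OW)_\tau$. This turns the problem into bounding two von Neumann entropies of small systems. Those can be handled by Fannes--Audenaert continuity, Standard Fact (v), at the cost of only constant-dimension errors. The hypothesis $\epsilon'\leq 1/2$ is exactly what keeps every application of continuity inside the admissible range $T\leq 1-1/d$.

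\emph{Step 1: the joint entropy.} Compare $\tau_{OW}$ with $\Phi^+_{OW}$ on the four-dimensional space $OW$. Since $\TD{\tau_{OW},\Phi^+}\leq\epsilon'\leq 1/2\leq 3/4=1-1/4$, and the Fannes--Audenaert bound is non-decreasing in $T$, Fact (v) with $d=4$ gives
\begin{equation*}
S(OW)_\tau\leq S(\Phi^+)+\epsilon'\log_2 3+h(\epsilon')=\epsilon'\log_2 3+h(\epsilon').
\end{equation*}

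\emph{Step 2: the marginal entropy.} Partial trace is contractive for trace distance, so $\TD{\tau_W,I/2}\leq\epsilon'\leq 1/2=1-1/2$, which is again in range. Fact (v) with $d=2$ has $\log_2(d-1)=0$ and gives $S(W)_\tau\geq 1-h(\epsilon')$. Here one uses monotonicity in $T$ to replace the true distance by $\epsilon'$; this is valid because $h$ is non-decreasing on $[0,1/2]$.

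\emph{Step 3: combine.} Subtracting the two bounds gives
\begin{equation*}
I_c(O\rangle W)_\tau\geq 1-h(\epsilon')-\epsilon'\log_2 3-h(\epsilon')=1-f(\epsilon'),
\end{equation*}
and Fact (iv) then yields the claim.

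There is no deep obstacle, but the step that needs care is the range check in Fact (v). In the two-qubit joint application this requires $\epsilon'\leq 3/4$, and in the marginal application it requires $\epsilon'\leq 1/2$. The latter is what forces the hypothesis $\epsilon'\leq 1/2$, and it is also why $f_{safe}$ is frozen at $1/2$.
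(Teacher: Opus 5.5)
Your proposal is correct and follows the paper's proof essentially verbatim: you bound $E_{sq}$ below by the coherent information $S(W)-S(OW)$, then apply Fannes--Audenaert to the two-qubit joint state and, after the partial trace, to the one-qubit marginal. One small correction to your closing remark: a qubit state is never more than $1/2$ away from $I/2$ in trace distance, so the range condition in Step 2 holds automatically, and what the hypothesis $\epsilon'\leq 1/2$ actually does there is justify replacing $h(T)$ by $h(\epsilon')$ using monotonicity of $h$.
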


The proof of the lemma is given on \Cpageref{pf:lemma:lower_bound-closeness}.

The protocol supplies BB84 test-success probabilities, whereas the main entanglement theorem uses Bell-pair fidelity.  The next lemma converts one quantity into the other after averaging over the relevant classical choices.
\begin{lemma}[BB84 Success Implies Bell Fidelity]
\label{lem:bb84_fidelity}
Let $\{q_x, \rho_{AB}^{(x)}\}$ be a probabilistic ensemble of bipartite quantum states. Suppose the average success probability of this ensemble passing a random two-qubit BB84 verification test (measuring both qubits in a randomly chosen $XX$ or $ZZ$ basis) is $p = \sum_x q_x p_x$. Then the average fidelity of the ensemble to the maximally entangled Bell pair $\Phi^+$ is lower bounded by:
\begin{equation}
    F_{avg} := \sum_x q_x \langle \Phi^+ | \rho_{AB}^{(x)} | \Phi^+ \rangle \ge 2p - 1.
\end{equation}
\end{lemma}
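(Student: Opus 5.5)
By linearity, it suffices to prove the single-state inequality $\langle\Phi^+|\rho|\Phi^+\rangle\geq 2p_\rho-1$ for every two-qubit state $\rho_{AB}$, where $p_\rho$ is its success probability in the random $XX$/$ZZ$ test. Averaging over $x$ with weights $q_x$ then gives $F_{avg}\geq 2\sum_x q_xp_x-1=2p-1$, because both sides are affine in the state.

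For a single state, the plan is to write the test's acceptance operator explicitly. The $ZZ$ test accepts when the outcomes agree, which is the projector $P_Z=\frac{1}{2}(I+Z\otimes Z)$. The $XX$ test likewise has acceptance projector $P_X=\frac{1}{2}(I+X\otimes X)$. Each basis is chosen with probability $1/2$, so
\[
p_\rho=\operatorname{Tr}\!\left[\rho\,\tfrac{1}{2}(P_X+P_Z)\right]=\tfrac{1}{2}+\tfrac{1}{4}\operatorname{Tr}\!\left[\rho\,(X\otimes X+Z\otimes Z)\right].
\]

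Next I will diagonalize this operator in the Bell basis $\{\Phi^+,\Phi^-,\Psi^+,\Psi^-\}$. The operator $X\otimes X$ has eigenvalues $(+1,-1,+1,-1)$ on these states, and $Z\otimes Z$ has eigenvalues $(+1,+1,-1,-1)$. Hence $\frac{1}{2}(P_X+P_Z)=\Phi^++\frac{1}{2}\Phi^-+\frac{1}{2}\Psi^+$. Writing $F=\langle\Phi^+|\rho|\Phi^+\rangle$ and using $\operatorname{Tr}[\rho(\Phi^-+\Psi^+)]\leq 1-F$, we get $p_\rho\leq F+\frac{1}{2}(1-F)=\frac{1+F}{2}$. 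This rearranges to $F\geq 2p_\rho-1$.

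Two points need care. The first is the sign convention: the verifier accepts when the outcomes agree, matching the purified protocol, so $\Phi^+$ is the target. If the returned register has dimension larger than two, the claim still holds as stated for qubits; otherwise one regards the measurement as being performed on the qubit output. The second is confirming that the Bell-basis eigenvalue table is correct. Neither is a real obstacle, and the whole argument is a short operator inequality followed by averaging.
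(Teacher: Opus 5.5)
Your proof is correct and follows the paper's argument essentially verbatim. The paper also writes the acceptance projectors in the Bell basis as $\Pi_Z=\Phi^++\Phi^-$ and $\Pi_X=\Phi^++\Psi^+$, uses $P(\Phi^-)+P(\Psi^+)\leq 1-F_x$ to get $p_x\leq\frac{1}{2}(1+F_x)$, i.e.\ $F_x\geq 2p_x-1$, and then averages over $x$.
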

The proof of the lemma is given on \Cpageref{pf:lem:bb84_fidelity}.

At the initial endpoint, we compare the entanglement before and after adjoining the verifier reference to one side of the cut.  For a pure conditional state, the following inequality gives this comparison without a dimension factor.
\begin{lemma}[Pure-State Entanglement Comparison]\label{lemma:pure-state-flow}
For any pure state $\omega_{O A_1 A_2}$,
\begin{equation}
    E_{sq}(O A_1 : A_2)_\omega \le E_{sq}(A_1 : A_2)_\omega + I(O : A_2)_\omega.
\end{equation}
\end{lemma}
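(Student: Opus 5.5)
The plan is to evaluate both sides with entropies, using the purity of $\omega_{OA_1A_2}$, and then to reduce the claim to the coherent-information lower bound, Standard Fact (iv).

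First, the left-hand side. Since $\omega_{OA_1A_2}$ is pure, the bipartition $OA_1:A_2$ is a pure bipartite state. Standard Fact (i) then gives
\begin{equation*}
E_{sq}(OA_1:A_2)_\omega=S(A_2)_\omega .
\end{equation*}

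Next, rewrite the target in an equivalent form. Expand $I(O:A_2)=S(O)+S(A_2)-S(OA_2)$. The inequality to prove becomes
\begin{equation*}
E_{sq}(A_1:A_2)_\omega\ \geq\ S(A_2)-I(O:A_2)=S(OA_2)-S(O).
\end{equation*}
Purity of $\omega_{OA_1A_2}$ gives $S(OA_2)=S(A_1)$ and $S(O)=S(A_1A_2)$. So the right-hand side equals $S(A_1)-S(A_1A_2)$.

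By the paper's convention $I_c(A\rangle B)=S(B)-S(AB)$, this quantity is exactly the coherent information $I_c(A_2\rangle A_1)_\omega$ of the reduced state $\omega_{A_1A_2}$. Standard Fact (iv), applied to $\omega_{A_1A_2}$ with the roles of the two parties ordered as $(A_2,A_1)$, gives
\begin{equation*}
E_{sq}(A_2:A_1)_\omega\geq I_c(A_2\rangle A_1)_\omega .
\end{equation*}
Squashed entanglement is symmetric under swapping the two parties, because conditional mutual information is symmetric in $A$ and $B$. Hence $E_{sq}(A_2:A_1)=E_{sq}(A_1:A_2)$, and chaining the identities above proves the lemma.

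There is essentially no hard step. The only point needing care is choosing the correct direction of the coherent information. The other direction, $I_c(A_1\rangle A_2)=S(A_2)-S(O)$, is not what is needed here. The purity identities convert $S(OA_2)-S(O)$ into $S(A_1)-S(A_1A_2)$, which matches $I_c(A_2\rangle A_1)$ and not the other orientation.

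A useful feature of the argument is that no dimension factor appears anywhere. The hashing-based bound in Fact (iv) is dimension-free, which is exactly why the pure case avoids the locking obstruction discussed in the technical overview.
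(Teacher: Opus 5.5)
Your proof is correct. It reaches the same intermediate inequality as the paper, $E_{sq}(A_1:A_2)_\omega\geq S(A_2)-I(O:A_2)=S(A_1)-S(A_1A_2)$, but justifies it differently.

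You recognize this quantity as $I_c(A_2\rangle A_1)$ and invoke Standard Fact (iv). The paper justifies that fact operationally, through $E_{sq}\geq E_D^{\to}$ and the Devetak--Winter hashing inequality.

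The paper's proof of this lemma does not cite Fact (iv). It proves the inequality directly:
\begin{itemize}
\item It takes an arbitrary extension $E$ of $\omega_{A_1A_2}$ and realizes it through an isometry $V:O\to E\otimes O_{\mathrm{rem}}$ from the purifying system.
\item It uses purity of $EO_{\mathrm{rem}}A_1A_2$ to write $I(A_1:A_2|E)=2S(A_2)-I(A_2:O_{\mathrm{rem}})-I(A_2:E)$.
\item It bounds each of the two mutual-information terms by $I(O:A_2)$ via data processing.
\end{itemize}
Since every state has a purification $O$, that computation is a self-contained, elementary proof of $E_{sq}(A_1:A_2)\geq I_c(A_2\rangle A_1)$ for an arbitrary bipartite state. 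So the two arguments are logically equivalent.

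Your version is shorter and shows that the lemma is just Fact (iv) for the marginal $\omega_{A_1A_2}$, rewritten through purity identities. The paper's version avoids the operational results (distillation and hashing) and uses only entropy identities and data processing.
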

The proof of the lemma is given on \Cpageref{pf:lemma:pure-state-flow}.

For several reference qubits, the one-qubit estimates must be combined even when those qubits are correlated.  The next lemma records the exact correlation term needed for this summation.
\begin{lemma}[General additivity of Mutual-Information]
\label{lemma:superadditivity}
Let $\rho_{OA}$ be a bipartite quantum state where system $O$ decomposes into $k$ subsystems $O_1, O_2, \dots, O_k$. For any system $A$, the mutual information satisfies:
\begin{equation}
    I(O : A)_\rho \ge \sum_{j=1}^k I(O_j : A)_\rho - \Delta_O,
\end{equation}
where the correlation penalty $\Delta_O$ is defined as $\Delta_O := \sum_{j=1}^k S(O_j)_\rho - S(O)_\rho$.
\end{lemma}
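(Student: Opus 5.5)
We expand every mutual-information term into von Neumann entropies and reduce the claimed inequality to a sum of conditional-entropy inequalities controlled by strong subadditivity. Writing $I(O:A)=S(O)+S(A)-S(OA)$ and $I(O_j:A)=S(O_j)+S(A)-S(O_jA)$, the terms $\sum_j S(O_j)$ on the right-hand side cancel against the same sum inside $\Delta_O$. After this cancellation, the statement is equivalent to
\begin{equation*}
S(O)+S(A)-S(OA)\;\ge\; kS(A)-\sum_{j=1}^k S(O_jA)+S(O),
\end{equation*}
that is, $\sum_{j=1}^k S(O_jA)-S(OA)\ge (k-1)S(A)$. Equivalently, in conditional-entropy form, we must show $\sum_{j=1}^k S(O_j\mid A)\ge S(O\mid A)$.

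This last inequality is subadditivity of conditional entropy, and we prove it by the chain rule together with strong subadditivity. The chain rule gives
\begin{equation*}
S(O_1\cdots O_k\mid A)=\sum_{j=1}^k S(O_j\mid A O_1\cdots O_{j-1}).
\end{equation*}
Strong subadditivity says that conditioning on more systems cannot increase conditional entropy, so $S(O_j\mid AO_1\cdots O_{j-1})\le S(O_j\mid A)$ for each $j$. Summing these $k$ bounds gives the claim. Alternatively, one can induct on $k$, using $S(XY\mid A)\le S(X\mid A)+S(Y\mid A)$ at each step; that two-system inequality is exactly strong subadditivity, $S(XA)+S(YA)\ge S(XYA)+S(A)$.

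No step here should be a real obstacle, since the argument is finite-dimensional entropy bookkeeping. The one place to be careful is the sign and cancellation when $\Delta_O$ is expanded, so that the reduction to $\sum_j S(O_j\mid A)\ge S(O\mid A)$ is exact and not off by an $S(O)$ term. A useful sanity check: when $A$ is trivial, both sides are $0$, and when the $O_j$ are independent and $A$ is classical, the inequality reduces to the familiar superadditivity of mutual information with $\Delta_O=0$.
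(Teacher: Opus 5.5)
Your proof is correct and is essentially the paper's argument. Both sum the strong-subadditivity bounds $S(O_j\mid AO_{<j})\le S(O_j\mid A)$ against a chain rule: the paper writes it as $I(O:A)=\sum_j I(O_j:A\mid O_{<j})$, while you first cancel the $S(O_j)$ terms and then use $S(O\mid A)=\sum_j S(O_j\mid AO_{<j})$. One small slip is in your sanity check and does not affect the proof: with trivial $A$ the inequality reads $0\ge-\Delta_O$, which is ordinary subadditivity, not $0\ge 0$.
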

The term $\Delta_O$ can be seen as a penalty based on how far the state on $O$ is from being a tensor-product state of the form $\bigotimes_{i\in [k]}\rho_{O_i}$, and vanishes in the special case when it is indeed a tensor product state. The proof of the lemma is given on \Cpageref{pf:lemma:superadditivity}.

Combining the preceding two lemmas gives the estimate required at the initial recovery endpoint.  This is the last technical input to the main entanglement-loss theorem.

\begin{lemma}[Initial Recovery Bound]\label{lemma:bounding-initial-state-k-extraction}
Let $O = O_1 \cdots O_k$ be a quantum register consisting of $k$ qubits. Let $\rho_{O A_1 A_2}$ be a classically flagged mixture of pure states (see \Cref{def:classically-flagged-mixture-pure-states}) across registers $O$, $A_1$, and $A_2$, where the classical flag $C$ is a subregister of $A_1$ (so $A_1 = C \otimes A_1'$), i.e.,
\begin{equation}
    \rho_{O A_1 A_2} = \sum_{c} p_c |c\rangle\langle c|_C \otimes |\psi_c\rangle\langle \psi_c|_{O A_1' A_2}.
\end{equation}
Suppose for each $j \in [k]$, there exists a local unitary $U_j = \sum_c |c\rangle\langle c|_C \otimes U_c^{(j)}$ acting on $A_1$, where $U_c^{(j)}$ acts strictly on $A_1'$ and yields a target register $R_j \subset A_1'$. Let $\tilde{\rho}^{(j)}$ be the resulting state, such that the expected trace distance of the conditional states from the Bell state is at most $\delta_j\in[0,1]$, i.e., $\mathbb{E}_c [TD(\tilde{\rho}_{O_j R_j}^{(c,j)}, \Phi^+)] \le \delta_j$. Then,
\begin{equation}
    E_{sq}(O A_1 : A_2)_\rho \le E_{sq}(A_1 : A_2)_\rho + 2 \sum_{j=1}^k g_{safe}(\delta_j).
\end{equation}
\end{lemma}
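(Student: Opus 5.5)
The plan is to reduce to the pure conditional states $\ket{\psi_c}$, bound the dimension-free excess $I(O:A_2)$ for each pure state one reference qubit at a time, and then average over $c$. Since $C$ is a classical subregister of $A_1$, \Cref{lemma:direct-sum} applied on both sides gives
\[
E_{sq}(OA_1:A_2)_\rho=\sum_c p_c E_{sq}(OA_1':A_2)_{\psi_c},\qquad E_{sq}(A_1:A_2)_\rho=\sum_c p_c E_{sq}(A_1':A_2)_{\psi_c}.
\]
So it suffices to show, for each $c$,
\[
E_{sq}(OA_1':A_2)_{\psi_c}\le E_{sq}(A_1':A_2)_{\psi_c}+2\sum_j g_{safe}(T_{c,j}),
\]
where $T_{c,j}:=\TD{\tilde\rho^{(c,j)}_{O_jR_j},\Phi^+}$. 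Granting this, we average over $c$. Because $g_{safe}$ is concave and non-decreasing, Jensen's inequality and $\mathbb E_c[T_{c,j}]\le\delta_j$ give the claimed bound.

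\textbf{Pure-state step.} Fix $c$ and write $\psi=\psi_c$ on $OA_1'A_2$. \Cref{lemma:pure-state-flow} gives $E_{sq}(OA_1':A_2)\le E_{sq}(A_1':A_2)+I(O:A_2)$, so it remains to bound $I(O:A_2)_\psi$. For a pure tripartite state we have $S(OA_2)=S(A_1')$ and $S(OA_1')=S(A_2)$. Hence
\[
I(O:A_2)=2S(O)-I(O:A_1').
\]
Applying \Cref{lemma:superadditivity} to $I(O:A_1')$ gives $I(O:A_1')\ge\sum_j I(O_j:A_1')-\sum_jS(O_j)+S(O)$. Combining this with subadditivity $S(O)\le\sum_jS(O_j)$, the correlation penalty cancels and we get
\[
I(O:A_2)\le\sum_j\bigl(2S(O_j)-I(O_j:A_1')\bigr).
\]
This decomposes the bound into two-qubit terms.

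\textbf{Per-qubit continuity.} For each $j$, $U^{(j)}_c$ is a unitary on $A_1'$, so it leaves $I(O_j:A_1')$ unchanged. Data processing (discarding all of $A_1'$ except $R_j$) then gives $I(O_j:A_1')\ge I(O_j:R_j)_{\tilde\rho^{(c,j)}}$, while $S(O_j)$ is untouched. Therefore each term is at most
\[
2S(O_j)-I(O_j:R_j)=S(O_j)-S(R_j)+S(O_jR_j),
\]
evaluated on the two-qubit state $\tilde\rho^{(c,j)}_{O_jR_j}$, with $T:=T_{c,j}$. Now use Fannes--Audenaert continuity (fact (v)):
\begin{itemize}
\item the marginal on $R_j$ is within $T$ of $I/2$, so $1-S(R_j)\le h(T)$, while $S(O_j)\le1$;
\item the joint state is within $T$ of the pure $\Phi^+$, so $S(O_jR_j)\le T\log_23+h(T)=g(T)$.
\end{itemize}
The term is thus at most $h(T)+g(T)\le2g(T)$ whenever $T\le3/4$. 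For $T>3/4$ the term is trivially at most $S(O_j)+S(O_jR_j)-S(R_j)\le 2$, and $2\le 2g(3/4)$ since $g(3/4)\approx2.0$, so $2g_{safe}(T)$ covers this case too.

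\textbf{Main obstacle.} The step I expect to be delicate is this endpoint bookkeeping. One must check the numerical inequality $g(3/4)\ge1$ exactly, and check that the Fannes--Audenaert hypothesis $T\le1-1/d$ holds for both $d=2$ (marginals) and $d=4$ (joint state). Marginal trace distance is at most $T$, which is all that is needed for $d=2$. The conceptual crux, however, is the identity $I(O:A_2)=2S(O)-I(O:A_1')$ for pure states. It converts an unknown-dimension quantity on $A_2$ into quantities about $A_1'$, which is the side where recovery is guaranteed.
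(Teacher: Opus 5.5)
Your proposal follows the paper's proof essentially step for step. The shared steps are: \Cref{lemma:pure-state-flow} applied to each pure $\psi_c$, the purity identity $I(O:A_2)=2S(O)-I(O:A_1')$, \Cref{lemma:superadditivity} with the correlation penalty absorbed by subadditivity, unitary invariance plus data processing down to $O_jR_j$, Fannes--Audenaert on two qubits, and Jensen for $g_{safe}$ together with \Cref{lemma:direct-sum}. You apply \Cref{lemma:superadditivity} separately to each pure $\psi_c$, which is a slightly cleaner version of the paper's bookkeeping. The paper applies it to the global $\rho$ and then strips off the $I(\cdot:C)$ terms. Both routes arrive at $\sum_j\bigl(2S(O_j|C)-I(O_j:A_1|C)\bigr)$.

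One step of your per-qubit estimate is wrong as stated. The claim $1-S(R_j)\le h(T)$ holds only for $T\le 1/2$. The qubit marginal satisfies $\operatorname{TD}(\tilde\rho_{R_j},I/2)\le\min\{T,1/2\}$, and $h$ is decreasing on $[1/2,1]$. So for $T\in(1/2,3/4]$ you cannot pass from $h(\operatorname{TD}(\tilde\rho_{R_j},I/2))$ to $h(T)$. Concretely, $\tilde\rho_{O_jR_j}=|00\rangle\langle 00|$ has $T=1/\sqrt2$, yet $1-S(R_j)=1>h(1/\sqrt2)\approx 0.87$. Your remark that ``marginal trace distance at most $T$ is all that is needed for $d=2$'' therefore resolves exactly the point you flagged incorrectly.

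The final per-term bound is still true, and the repair is one line. One option is to use $1-S(R_j)\le 1\le g(T)$ for $T\in[1/2,3/4]$, since $g(1/2)>1$ and $g$ is increasing on $[0,3/4]$. The other option is the paper's: use Araki--Lieb, $S(O_j)-S(R_j)\le S(O_jR_j)$, to get $2S(O_j)-I(O_j:R_j)\le 2S(O_jR_j)\le 2g_{safe}(T)$ for every $T\in[0,1]$. This avoids the marginal continuity bound and its hypothesis altogether.
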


The proof of the lemma is given on \Cpageref{pf:lemma:bounding-initial-state-k-extraction}.

We can now quantify the entanglement lost when LOCC operations transfer the ability to recover $k$ Bell-pair halves from one side of a bipartition to the other.  We first state the version with locally controlled unitaries.

\begin{theorem}[Entanglement Consumed due to Transferring Entangled Bell Halves]\label{theorem:k-extraction-entanglement-consumption}
Let $O = O_1 \cdots O_k$ be a quantum register consisting of $k$ qubits. Let $\rho_{O A_1 A_2}$ be a classically flagged mixture of pure states (see \Cref{def:classically-flagged-mixture-pure-states}) across registers $O$, $A_1$, and $A_2$, where the classical flag $C$ is a subregister of $A_1$ (so $A_1 = C \otimes A_1'$), i.e.,
\begin{equation}
    \rho_{O A_1 A_2} = \sum_{c} p_c |c\rangle\langle c|_C \otimes |\psi_c\rangle\langle \psi_c|_{O A_1' A_2}.
\end{equation}

Suppose that initially, there exist $k$ local unitaries $\{U_j\}_{j=1}^k$ acting on $A_1$ such that $U_j$ does not scramble the classical flag register, i.e., $U_j$ can be written as $\sum_{c}\ket{c}\bra{c}_C\otimes U^{(c)}_j$ such that the global fidelity of the recovered marginal states to $\Phi^+$ satisfies:
\begin{equation}
    \frac{1}{k} \sum_{j=1}^k F(\tilde{\rho}_{O_j R_j}^{(j)}, \Phi^+) \ge 1 - \epsilon^2, \quad \text{where } 0\leq\epsilon \le 3/4.
\end{equation}

An LOCC channel $\Lambda$ is applied across $A_1 : A_2$, resulting in $\tau_{O A_1 A_2} = (\text{id}_O \otimes \Lambda)(\rho)$. Finally, suppose there exist $k$ local unitaries $\{V_j\}_{j=1}^k$ acting on $A_2$ such that the global fidelity of the final recovered marginal states satisfies:
\begin{equation}
    \frac{1}{k} \sum_{j=1}^k F(\tilde{\tau}_{O_j W_j}^{(j)}, \Phi^+) \ge 1 - (\epsilon')^2, \quad \text{where } 0\leq\epsilon' \le 1/2.
\end{equation}

Then the squashed entanglement consumed across $A_1 : A_2$ is lower-bounded by:
\begin{equation}
    \Delta E_{sq} \ge k \Big[ 1 - 2g(\epsilon) - f(\epsilon') \Big].
\end{equation}
\end{theorem}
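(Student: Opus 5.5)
The plan is to chain the three relations from the technical overview: an initial upper bound, LOCC monotonicity, and a final lower bound via monogamy, with \Cref{lemma:bounding-initial-state-k-extraction} and \Cref{lemma:lower_bound-closeness} supplying the quantitative pieces. Here $\Delta E_{sq}:=E_{sq}(A_1:A_2)_\rho-E_{sq}(A_1:A_2)_\tau$.

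\textbf{Initial endpoint.} First translate the average fidelity hypothesis into the trace-distance hypothesis of \Cref{lemma:bounding-initial-state-k-extraction}. Write the fidelity of $\tilde\rho^{(j)}_{O_jR_j}$ to the pure state $\Phi^+$ as $\mathbb{E}_c F(\tilde\rho^{(c,j)}_{O_jR_j},\Phi^+)=:1-\eta_j$; this is linear in the state. Fuchs--van de Graaf then gives $\mathbb{E}_c\,\TD{\tilde\rho^{(c,j)},\Phi^+}\le \mathbb{E}_c\sqrt{\eta_{c,j}}\le\sqrt{\eta_j}=:\delta_j$ by Jensen. Hence \Cref{lemma:bounding-initial-state-k-extraction} yields
\[E_{sq}(OA_1:A_2)_\rho\le E_{sq}(A_1:A_2)_\rho+2\sum_j g_{safe}(\sqrt{\eta_j}).\]
Since $g_{safe}$ is concave and non-decreasing, and $t\mapsto\sqrt t$ is concave, the composition is concave. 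Jensen over $j$ then gives $\frac1k\sum_j g_{safe}(\sqrt{\eta_j})\le g_{safe}(\sqrt{\frac1k\sum_j\eta_j})\le g_{safe}(\epsilon)=g(\epsilon)$, using $\frac1k\sum\eta_j\le\epsilon^2$ and $\epsilon\le3/4$. This gives the term $2kg(\epsilon)$.

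\textbf{Evolution.} The channel $\mathrm{id}_O\otimes\Lambda$ is LOCC across $OA_1:A_2$, with $O$ grouped with $A_1$. By fact (ii),
\[E_{sq}(OA_1:A_2)_\tau\le E_{sq}(OA_1:A_2)_\rho.\]

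\textbf{Final endpoint.} Apply monogamy (fact (iii)) with $A_2$ as the single party against $O_1,\dots,O_k,A_1$. Iterating the inequality gives
\[E_{sq}(OA_1:A_2)_\tau\ge E_{sq}(A_1:A_2)_\tau+\sum_j E_{sq}(O_j:A_2)_\tau.\]
Local unitaries $V_j$ on $A_2$ do not change $E_{sq}(O_j:A_2)$, and discarding the rest of $A_2$ is a local operation, so $E_{sq}(O_j:A_2)_\tau\ge E_{sq}(O_j:W_j)_{\tilde\tau^{(j)}}$.

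Set $1-\eta'_j:=F(\tilde\tau^{(j)}_{O_jW_j},\Phi^+)$, so $\TD{\cdot}\le\sqrt{\eta'_j}$. \Cref{lemma:lower_bound-closeness} requires this to be at most $1/2$, which may fail for individual $j$. To handle this, replace $f$ by $f_{safe}$: since $1-f_{safe}(t)\le 1-f(1/2)<0$ for $t>1/2$ and $E_{sq}\ge0$, the bound $E_{sq}(O_j:W_j)\ge 1-f_{safe}(\min(\sqrt{\eta'_j},1))$ still holds in every case. By concavity of $f_{safe}\circ\sqrt{\cdot}$ and Jensen, $\sum_j f_{safe}(\sqrt{\eta'_j})\le k f_{safe}(\epsilon')=kf(\epsilon')$, so $\sum_jE_{sq}(O_j:A_2)_\tau\ge k(1-f(\epsilon'))$.

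\textbf{Combining.} The three displays give
\[E_{sq}(A_1:A_2)_\tau+k(1-f(\epsilon'))\le E_{sq}(A_1:A_2)_\rho+2kg(\epsilon),\]
which is exactly the claimed bound.

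\textbf{Main obstacle.} I expect the delicate step to be the Jensen and concavity bookkeeping. Monotonicity of $g_{safe}$ and $f_{safe}$ is needed to pass from $\sqrt{\frac1k\sum\eta_j}\le\epsilon$ to the stated bound. The per-$j$ validity of \Cref{lemma:lower_bound-closeness} also has to be secured via the safe extension and nonnegativity of $E_{sq}$.

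I would also check one point in \Cref{lemma:bounding-initial-state-k-extraction}: whether its hypothesis uses the expected conditional trace distance given $c$, as I have assumed. The unitaries do not scramble $C$, so conditioning on $c$ commutes with applying $U_j$, and the reduction above should go through.
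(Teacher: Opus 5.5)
Your proposal is correct and follows essentially the same route as the paper's proof. You apply \Cref{lemma:bounding-initial-state-k-extraction} with Fuchs--van de Graaf and Jensen via the concavity of $g_{safe}\circ\sqrt{\cdot}$, then LOCC monotonicity across $OA_1:A_2$, then iterated monogamy. At the final endpoint you use \Cref{lemma:lower_bound-closeness}, secured by the $f_{safe}$ extension and nonnegativity of $E_{sq}$, followed by Jensen. The only difference is bookkeeping: you apply Jensen for $\sqrt{\cdot}$ over $c$ before Jensen over $j$, whereas the paper packages both into the concave function $H(F)=g_{safe}(\sqrt{1-F})$, and this changes nothing.
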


The proof is given on \Cpageref{pf:theorem:k-extraction-entanglement-consumption}. Note that the adversarial operations to recover the Bell-pair halves may be arbitrary quantum channels in general, rather than unitaries. Hence, we use Stinespring dilation to extend \Cref{theorem:k-extraction-entanglement-consumption} to the general setting of arbitrary quantum channels as recovery operations.
\begin{corollary}[Local Recovery by Quantum Channels]\label{cor:k-extraction-entanglement-consumption}
Let $O = O_1 \dots O_k$ be a quantum register consisting of $k$ qubits. Let $\rho_{OA_1A_2}$ be a classically flagged mixture of pure states (see \Cref{def:classically-flagged-mixture-pure-states}) defined exactly as in \Cref{theorem:k-extraction-entanglement-consumption}.

Suppose that initially, there exist $k$ local quantum channels (CPTP maps) $\{\mathcal{E}_j\}_{j=1}^k$ acting on $A_1$ that do not scramble the classical flag register, such that the average fidelity of the recovered marginal states satisfies:
\begin{equation}
    \frac{1}{k} \sum_{j=1}^k F(\tilde{\rho}_{O_jR_j}^{(j)}, \Phi^+) \ge 1 - \epsilon^2, \quad \text{where } 0\leq\epsilon \le 3/4.
\end{equation}

An LOCC channel $\Lambda$ is applied across $A_1 : A_2$, resulting in $\tau_{OA_1A_2} = (\mathcal{I}_O \otimes \Lambda)(\rho_{OA_1A_2})$. Finally, suppose there exist $k$ local quantum channels (CPTP maps) $\{\mathcal{F}_j\}_{j=1}^k$ acting on $A_2$ such that the average fidelity of the final recovered marginal states satisfies:
\begin{equation}
    \frac{1}{k} \sum_{j=1}^k F(\tilde{\tau}_{O_jW_j}^{(j)}, \Phi^+) \ge 1 - (\epsilon')^2, \quad \text{where } 0\leq\epsilon' \le 1/2.
\end{equation}

Then the squashed entanglement consumed across $A_1 : A_2$ due to the LOCC channel $\Lambda$ is lower-bounded by:
\begin{equation}
    \Delta E_{sq} \ge k[1 - 2g(\epsilon) - f(\epsilon')].
\end{equation}
\end{corollary}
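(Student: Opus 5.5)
The plan is to reduce the corollary to \Cref{theorem:k-extraction-entanglement-consumption} by replacing each recovery channel with its Stinespring isometry, realized as a unitary on an enlarged register. The point to guard is that the enlargement must not change any of the squashed-entanglement quantities that enter $\Delta E_{sq}$.

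For each $j$, write $\mathcal{E}_j$ in flag-respecting form $\mathcal{E}_j(\cdot)=\sum_c \ket{c}\bra{c}\otimes\mathcal{E}^{(c)}_j(\cdot)$ on $A_1=C\otimes A_1'$. This is exactly what ``does not scramble the classical flag'' means. Dilate each $\mathcal{E}^{(c)}_j$ to a unitary $U^{(c)}_j$ acting on $A_1'\otimes E_1$, where $E_1$ is a fresh ancilla initialized in $\ket{0}$. A single $E_1$ of dimension large enough for all $j,c$ suffices, since the dimensions are finite. Then set $U_j:=\sum_c\ket{c}\bra{c}_C\otimes U^{(c)}_j$, so that $\mathcal{E}_j(\cdot)=\operatorname{Tr}_{\text{env}}[U_j(\cdot\otimes\ket{0}\bra{0}_{E_1})U_j^\dagger]$. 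In the same way, dilate each $\mathcal{F}_j$ using a fresh ancilla $E_2$ adjoined to $A_2$. Tracing out the environment leaves the recovered marginals on $O_jR_j$ and $O_jW_j$ unchanged, so the fidelity hypotheses carry over verbatim.

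Next, define $\hat A_1:=A_1E_1$ and $\hat A_2:=A_2E_2$, with $\hat\rho:=\rho\otimes\ket{0}\bra{0}_{E_1}\otimes\ket{0}\bra{0}_{E_2}$. I will check three things about this enlarged state.
\begin{enumerate}
\item $\hat\rho$ is still a classically flagged mixture of pure states with the flag $C$ inside $\hat A_1$, because appending a pure ancilla keeps every conditional state $\ket{\psi_c}$ pure (\Cref{remark:observation-classical-registers}).
\item The LOCC channel lifts to $\hat\Lambda:=\Lambda\otimes\mathrm{id}_{E_1E_2}$, which is LOCC across $\hat A_1:\hat A_2$. It produces $\hat\tau=\tau\otimes\ket{00}\bra{00}$, and the unitaries $V_j$ (dilations of $\mathcal{F}_j$) act locally on $\hat A_2$.
\item Adjoining local pure ancillas does not change squashed entanglement. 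The inequality $\geq$ holds because discarding the ancilla is a local operation, so LOCC monotonicity applies. The inequality $\leq$ holds because appending a pure ancilla is also a local operation, or equivalently because any extension of $\rho$ tensored with the pure ancilla has the same conditional mutual information. Hence $E_{sq}(\hat A_1:\hat A_2)$ equals $E_{sq}(A_1:A_2)$ for both $\rho$ and $\tau$.
\end{enumerate}

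With these checks in place, \Cref{theorem:k-extraction-entanglement-consumption} applies to $\hat\rho$, $\hat\Lambda$, $\{U_j\}$ and $\{V_j\}$ with the same $\epsilon$ and $\epsilon'$. It gives $E_{sq}(\hat A_1:\hat A_2)_{\hat\rho}-E_{sq}(\hat A_1:\hat A_2)_{\hat\tau}\geq k[1-2g(\epsilon)-f(\epsilon')]$, and by the third check this equals the claimed bound on $\Delta E_{sq}$ for the original registers.

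The step that needs the most care is the first one: making sure the dilation is genuinely flag-controlled, so that it has the block form $\sum_c\ket{c}\bra{c}\otimes U^{(c)}_j$ required by the theorem. If the non-scrambling assumption is read more weakly, I would first compose $\mathcal{E}_j$ with a copy of $C$ into a fresh classical register and a read-back, which is permitted by \Cref{remark:observation-classical-registers}, to force this block form. The final recovery maps $\mathcal{F}_j$ need no such condition, since the theorem only asks for arbitrary local unitaries on $A_2$.
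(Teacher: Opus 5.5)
Your proposal is correct and follows essentially the same route as the paper: you Stinespring-dilate each recovery channel into a flag-controlled unitary using pure ancillas appended locally on the corresponding side, lift $\Lambda$ to $\Lambda\otimes\mathrm{id}$, note that appending local pure ancillas leaves $E_{sq}$ unchanged for both $\rho$ and $\tau$, and then invoke \Cref{theorem:k-extraction-entanglement-consumption} on the extended state. The differences are cosmetic: you use a single shared ancilla per side rather than one environment $E_j$ per map, and you spell out more explicitly why the classically flagged form and the block structure of the dilations survive the extension.
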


The proof is given on \Cpageref{pf:cor:k-extraction-entanglement-consumption}.

The recovery operations at the two endpoints can depend on classical records stored on different adversarial sides. The physical labels of the reference qubits may also depend on those records.
\begin{corollary}[Classical Records on Both Sides]
\label{cor:k-extraction-entanglement-consumption-distributed}
Let $A_0=C_0A_0'$ and $A_1=C_1A_1'$. For every value $c=(c_0,c_1)$ of the two classical records, occurring with probability $p_c$, let
\begin{equation}
    \bigl(r_1(c),\ldots,r_k(c)\bigr)
\end{equation}
be an ordered list of $k$ distinct reference-qubit labels. When the classical records have value $c$, regard the reference spaces in this order as the register $O=O_1\cdots O_k$, with $O_j:=O_{r_j(c)}$.

With this notation, suppose
\begin{equation}
    \rho_{OA_0A_1}
    =
    \sum_{c=(c_0,c_1)}p_c
    \ket{c_0}\!\bra{c_0}_{C_0}
    \otimes\ket{c_1}\!\bra{c_1}_{C_1}
    \otimes
    \ket{\psi_c}\!\bra{\psi_c}_{OA_0'A_1'},
\end{equation}
where every $\ket{\psi_c}$ is pure. Let $\Lambda$ be an LOCC channel across $A_0:A_1$ that leaves $C_0,C_1$ and the reference qubits unchanged, and let
\begin{equation}
    \tau=(\mathcal I_O\otimes\Lambda)(\rho).
\end{equation}

For every $c$ and $j\in[k]$, suppose there is an initial recovery map $\mathcal E_j^c$ on $A_0'$ and a final recovery map $\mathcal F_j^c$ on $A_1'$. These maps may use the complete value of $c$ after the classical record held on the other side is copied for the analysis. Suppose their average recovered fidelities satisfy
\begin{align}
    \frac{1}{k}\sum_c p_c\sum_{j=1}^k
    F\!\left(\widetilde\rho^{\,c,j}_{O_jR_j},\Phi^+\right)
    &\geq1-\epsilon^2,
    &
    0\leq\epsilon&\leq\frac34,
    \\
    \frac{1}{k}\sum_c p_c\sum_{j=1}^k
    F\!\left(\widetilde\tau^{\,c,j}_{O_jW_j},\Phi^+\right)
    &\geq1-(\epsilon')^2,
    &
    0\leq\epsilon'&\leq\frac12.
\end{align}
Then
\begin{equation}
    \Esq(A_0:A_1)_\rho-\Esq(A_0:A_1)_\tau
    \geq k\bigl[1-2g(\epsilon)-f(\epsilon')\bigr].
\end{equation}
\end{corollary}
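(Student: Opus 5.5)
The plan is to reduce \Cref{cor:k-extraction-entanglement-consumption-distributed} to \Cref{cor:k-extraction-entanglement-consumption}. That requires two fixes: the flags must all sit on a single side, and the reference ordering must be fixed independently of $c$. Both are handled by conditioning on the classical records via \Cref{lemma:direct-sum}.

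\textbf{Step 1: direct-sum decomposition.} Since $C_0$ is held by $A_0$ and $C_1$ by $A_1$, apply \Cref{lemma:direct-sum} twice, once on each side; the identity is symmetric in the two parties. This gives
\[
\Esq(A_0:A_1)_\rho=\sum_c p_c\,\Esq(A_0':A_1')_{\psi_c}.
\]
Because $\Lambda$ leaves $C_0,C_1$ unchanged and keeps them classical, $\tau$ also has a decomposition $\sum_c p_c\ket{c}\bra{c}\otimes\tau^c$. One caveat: $\Lambda$ may write new classical outcomes, but these can be kept as additional flags, and by \Cref{remark:observation-classical-registers} they preserve the form. This yields the analogous identity $\Esq(A_0:A_1)_\tau=\sum_c p_c\Esq(A_0':A_1')_{\tau^c}$.

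\textbf{Step 2: per-record application.} The difficulty is that the LOCC channel $\Lambda$ does not act as a separate LOCC channel on each conditional state $\psi_c$ with $c$ fixed. To handle this, I would not condition fully. Instead, copy $c_1$ to the $A_0$ side for the analysis, which is allowed since the statement lets the maps use the full value of $c$. Then treat $C=(C_0,C_1')$ as a flag held by $A_0$ and apply \Cref{cor:k-extraction-entanglement-consumption} to $\rho$ with $A_1$ of that corollary set to $A_0$ and $A_2$ set to $A_1$.

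This copying needs two checks.

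\emph{Entanglement across the cut.} Copying a classical register from $A_1$ to $A_0$ does not change the squashed entanglement across the cut. By \Cref{lemma:direct-sum}, both sides equal $\sum_c p_c\Esq(\psi_c)$.

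\emph{The channel.} $\Lambda$ composed with discarding the copy is still LOCC. So I run $\Lambda$ on the augmented state, letting it ignore the copy.

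\textbf{Step 3: the $c$-dependent labels.} To align the labels, define a controlled permutation $P$ of the reference qubits, controlled by $c$, that brings $O_{r_1(c)},\dots,O_{r_k(c)}$ into a fixed order. This is a local unitary on $O$ controlled by classical data, and it commutes with $\Lambda$ because $\Lambda$ does not touch $O$ or $C$. Since $\Lambda$ leaves $C_0,C_1$ intact, the maps $\mathcal F_j^c$ can read $c$, so they become controlled channels on $A_1$ of the form $\sum_c\ket{c}\bra{c}\otimes\mathcal F_j^c$. These steps put the fidelity hypotheses in exactly the averaged form required by \Cref{cor:k-extraction-entanglement-consumption}, with $\epsilon\le 3/4$ and $\epsilon'\le1/2$.

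That corollary then bounds the drop in $\Esq(A_0C_1':A_1)$. Since the copy changes neither $\Esq(A_0:A_1)_\rho$ nor $\Esq(A_0:A_1)_\tau$, the same bound applies to the original quantity.

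\textbf{Main obstacle.} I expect the hardest part to be showing that the copy and the label permutation do not invalidate the hypotheses. In particular, the reference register $O$ is shared across values of $c$ with different orderings, so the permutation must be absorbed into the conditional pure states $\psi_c$. For this I would note that $\ket{\psi_c}$ permuted by $P_c$ is still pure, so the classically flagged form is preserved.
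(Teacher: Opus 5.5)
\textbf{Gap: the final recovery maps are not local to $A_1$.} Your route is otherwise the paper's: copy classical records across the cut so the recovery maps become classically controlled local channels, identify $O_j$ with $O_{r_j(c)}$ record by record, invoke \Cref{cor:k-extraction-entanglement-consumption} with $A_0$ as the initial side, and discard the copies. The failing step is this. You copy only $c_1$ to the $A_0$ side. You then assert that, because $\Lambda$ leaves $C_0,C_1$ intact, the final maps become controlled channels $\sum_c\ket{c}\bra{c}\otimes\mathcal F_j^c$ on $A_1$. But $C_0$ is held by $A_0$, so this controlled map acts on $C_0C_1A_1'$, across the cut, and is not a local channel on the final side.

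\Cref{cor:k-extraction-entanglement-consumption} (through \Cref{theorem:k-extraction-entanglement-consumption}) needs the final recovery to be local to the final side. Its last step uses monogamy, $\Esq(OA_0:A_1)_\tau\geq\Esq(A_0:A_1)_\tau+\Esq(O:A_1)_\tau$. It then uses $\Esq(O_j:A_1)_\tau\geq\Esq(O_j:W_j)$, which holds only if $W_j$ is produced from $A_1$ alone. Since $\mathcal F_j^c$ and the labels $r_j(c)$ may genuinely depend on $c_0$, the hypothesis of the corollary is not met by your construction.

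\textbf{Fix: copy in both directions, as the paper does.} Also append a classical copy $\widehat C_0$ of $C_0$ to $A_1$, left untouched by $\Lambda$. Because $\Lambda$ does not change $C_0$, this copy still agrees with $C_0$ in $\tau$, so $\mathcal F_j^c$ becomes a genuinely local controlled channel on $C_1\widehat C_0A_1'$. With the flag $C_0\widehat C_1$ on $A_0$, the conditional state of the remaining registers is $\ket{c_1}_{C_1}\ket{c_0}_{\widehat C_0}\ket{\psi_c}$, which is still pure, so the flagged-pure hypothesis holds. By \Cref{lemma:direct-sum} the copies leave $\Esq(A_0:A_1)_\rho$ unchanged, and discarding them after $\Lambda$ cannot increase $\Esq$, which gives the required inequality for $\tau$.

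Two side remarks. Your Step 1 decomposition of $\tau$ is unnecessary, since the invoked corollary only needs the flagged-pure form for the initial state. Your controlled permutation on $O$ is a valid, more explicit way to do the per-record relabeling that the paper handles notationally.
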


The proof is given on \Cpageref{pf:cor:k-extraction-entanglement-consumption-distributed}.

\section{Security Proof of the Protocol}
\label{sec:security-proof}

In this section, we use the preceding squashed entanglement consumption bounds to prove static-verifier security of our QPV protocol (\Cref{fig:qpv-protocol-multi-round-unified}). First, we state the conventions used throughout this section. We then show that any possibly overlapping execution can be replaced by a nonoverlapping execution with the same verifier geometry; see \Cref{prop:overlap-to-nonoverlap}. Finally, we prove static-verifier security in \Cref{thm:amplified_security-generalized} and lift the result to higher-dimensional mobile verifiers in \Cref{thm:generic-reduction}.

We use the adversary model stated in \Cref{sec:def} (see the list after \Cref{def:qpv}).
\paragraph{Notations for the Proof}
\label{sec:security-analysis-protocol}
\begin{itemize}
\item We translate the claimed location to $0$ and, if necessary, reflect the coordinates without changing the verifier names or roles, so that $V_0=-L$ and $V_1=R$ for $L,R>0$. In particular, $V_0$ remains the sender of the initialization and replacement states.
\item For slot $j$, $T_j$ is the logical time defined in condition~\ref{cond:slot-timing}. Thus $V_0$ sends at time $T_j-L$ when selected and its honest-response deadline is $T_j+L$; the corresponding times for $V_1$ are $T_j-R$ and $T_j+R$. Set $D:=\max\{L,R\}$, use $[T_j-D,T_j+D]$ as the full interval of slot $j$, and let $\Delta:=T_{j+1}-T_j$.
\item Let $A_0$ contain the local and outgoing registers assigned to adversaries strictly to the left of $0$, and define $A_1$ similarly on the right. An outgoing register remains on its sender's side until the verifier receives its state or the state is discarded. This grouping changes neither the adversaries' locations nor any communication time.
\item Give every verifier state a permanent label. The initialization labels are $1,\ldots,R_H$, and the state added before round $k>1$ has label $R_H+k-1$. Let $\xi=(\xi_0,\xi_1)$ denote the earlier classical records and let $\mathbf r_k(\xi)=(r_{k,1}(\xi),\ldots,r_{k,R_H}(\xi))$ be the ordered labels immediately before the variables of round $k$ are sampled. Write $U_k\xleftarrow{\$}[R_H]$ for the requested list index and set $R_k:=r_{k,U_k}(\xi)$. In the Bell-pair description used in the proof, $O_r$ denotes the verifier's reference qubit with label $r$.
\end{itemize}
By conditions~\ref{cond:private-verifier-coordination} and~\ref{cond:dynamic-input-path}, fixing $\xi$ leaves $U_k$ uniform and independent of the earlier quantum state.

\paragraph{Analysis Conventions}
\label{sec:security-analysis-conventions}
\begin{enumerate}
\item The verifiers preserve all classical data from previous rounds, including the classical description of states used up in previous rounds.
\item Additional classical registers used in the analysis begin in the all-zero state. They may store or copy classical values, and the adversarial strategy ignores them.
\item A classical message is represented by copying its value into a fresh classical register of the receiver. If a local operation depends on a classical record $c$, we write it as the classically controlled map $\bigoplus_c\mathcal N^c$.
\item We may copy a classical verifier instruction when it first passes an adversary. For the analysis, if no adversary exists between a verifier and the claimed location, we add an adversary with only classical registers there. It copies each classical instruction as it passes and otherwise leaves the execution unchanged. Information about an instruction from $V_a$ that later reaches $A_{1-a}$ is then treated as classical communication from $A_a$. This changes neither the success probability nor the quantum-register size.
\item\label{cond:nonoverlap-slot-structure} In the security analysis, every round is continued through all $s$ logical slots, even when the selected slot occurs earlier.\footnote{For the analysis, we continue the protocol and the adversaries' strategy through all rounds even after rejection, while keeping the final outcome as rejection.} When we change the delay between consecutive slots, we change only their timing, not the protocol actions, message contents, instruction paths, or timing within an individual slot.
\item After fixing a pure component of the initial state, we keep enough classical outcomes that fixing the complete record leaves the remaining quantum state pure, while tracing out the additional records recovers the original execution.\footnote{Formally, each recorded outcome of a local operation corresponds to one Kraus operator. Whenever the contents of a quantum register are discarded, we first measure the register in a rank-one basis, keep the outcome only as a classical record for the analysis, and reset the same register to a fixed state so that it may be reused. Ignoring the recorded outcome recovers exactly the original discard operation.}
\item For the squashed-entanglement calculation, an input traveling from $V_0$ to $V_1$ is counted with $A_0$ before it crosses $0$ and with $A_1$ afterward.
\end{enumerate}

We are now ready to state the proposition used to prove \Cref{thm:amplified_security-generalized}.
\paragraph{From an overlapping-slot execution to a nonoverlapping-slot execution} The full interval for slot $j$ is $[T_j-D,T_j+D]$, so consecutive slots overlap when their spacing is less than $2D$. The following proposition shows that an attack against any possibly overlapping execution gives an equally strong attack after the delay between consecutive slots is increased.

\begin{proposition}[Reduction to a Nonoverlapping Execution]\label{prop:overlap-to-nonoverlap}
Consider the one-dimensional execution of $\qpva$ with claimed location $0$, verifiers $V_0=-L$ and $V_1=R$, and consecutive slot times separated by $\Delta>0$. Set $D:=\max\{L,R\}$. Suppose that a finite adversarial coalition with no member at $0$ has total quantum-register size $Q$ and succeeds with probability $p$. For every $\Delta^\star>\max\{\Delta,2D\}$, the execution with the same verifier and adversary locations, the same protocol actions, and slot spacing $\Delta^\star$ has a coalition with quantum-register size $Q$ and success probability at least $p$. Moreover, the consecutive slot intervals for the new execution are disjoint.
\end{proposition}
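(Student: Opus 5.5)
We prove the proposition by an explicit simulation. The coalition for spacing $\Delta^\star$ is built from the original one: every adversary stays at its original location, keeps the same quantum registers, and runs the original circuit, but each operation is re-timed. Write $\delta:=\Delta^\star-\Delta>0$. Every event belonging to slot $j$ of round $k$ (verifier instructions, their arrival times at adversaries, and deadlines) is shifted by $(j-1)\delta$ plus a cumulative per-round offset. Between rounds we also shift the replenishment steps so that the protocol actions are unchanged. Disjointness of the new slot intervals is immediate, since consecutive intervals $[T_j-D,T_j+D]$ are now separated by $\Delta^\star>2D$. The real work is to check that the re-timed circuit is causally executable, i.e.\ respects speed-$1$ signalling, and that every response contributing to acceptance still arrives exactly at its new deadline.

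\textbf{Step 1: schedule the re-timed circuit.} For every operation $o$ of the original coalition circuit (a fixed finite circuit by the adversary model), define a new execution time $t^\star(o)$. Take it to be the original time plus the shift of the latest slot whose verifier instructions could have causally influenced $o$, i.e.\ the largest $j$ with $T_j-D$ plus the relevant travel distance at most $t(o)$. Classical messages from earlier slots are stored until needed, and operations wait until their inputs are available. Since shifts are non-decreasing in causal order and a message sent at original time $t_1$ and received at $t_2\ge t_1+d$ is re-timed to $t_1^\star\le t_2^\star-d$, every wire still respects speed $1$ and flows forward in time. Quantum registers are used in the same order, so the register size stays $Q$.

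\textbf{Step 2: responses meet the new deadlines.} Fix the real challenge $(k,i,b)$ and consider the response to $V_b$, which originally reached $V_b$ by $T_i+|x_b|$. Every operation in its causal past originally occurred before that deadline. I will argue that this causal past contains no influence from verifier actions in slots $j>i$. For the near side $A_b$, this holds because in the real protocol, later slots of round $k$ carry idle instructions, which are known in advance once the challenge has been learned. For an adversary that has not yet learned the challenge, any information from $V_{1-b}$'s slots $j>i$ is emitted at time at least $T_j-D>T_i-D$. I still need to check that it cannot travel to $V_b$'s side and back before $T_i+|x_b|$; this is the geometric inequality to verify. Hence all operations in the response's causal past receive shift at most $(i-1)\delta$ (plus the round offset), and the response, delayed if necessary in its outgoing register, reaches $V_b$ at exactly $T_i+(i-1)\delta+\cdots$, the new deadline. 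Since the joint distribution of all events relevant to acceptance is reproduced, the success probability is at least $p$. The verifier's abort rules on unwarranted messages are respected because the simulator emits nothing the original did not.

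\textbf{Main obstacle.} The hard part is Step 2's claim that information from later (idle) slots of the opposite verifier cannot affect a timely response to an earlier challenge. The idle instructions are null values, yet their absence might be informative. I would handle this by replacing, inside the simulation, every later-slot input in the causal past of an accepted response with the value it takes in the real execution, namely idle. This is legitimate because, conditioned on the challenge $(k,i,b)$, the later slots of round $k$ are deterministically idle. Any original operation that "waits for" such a slot can therefore be fed the idle value early, at its own original time, without changing the joint distribution. I will check this carefully against the adversary-model convention that unused wires carry a fixed null value at their delivery time.
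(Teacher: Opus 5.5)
There is a genuine gap. Your overall plan matches the paper's: rerun the same finite circuit at the same locations with the same registers, buffer early classical messages, hold each response in its counted outgoing register, and show that an accepted response uses nothing from later slots. The problem is that the schedule you fix in Step~1 fails in exactly the overlapping regime the proposition is about.

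Write $d_0:=L$, $d_1:=R$. You shift each operation by the latest slot inside its spacetime light cone. Take an adversary at distance $\eta$ from $V_b$ answering $V_b$'s challenge in slot $i$. It launches at original time $T_i+d_b-\eta$. Meanwhile $V_b$'s (idle) slot-$(i+1)$ instruction, sent at $T_{i+1}-d_b$, reaches it at $T_{i+1}-d_b+\eta$, which is earlier whenever $\Delta<2d_b-2\eta$. Your rule then gives the launch, and any of the responder's operations after that time, a shift of at least $i\delta$ instead of $(i-1)\delta$. The response reaches $V_b$ a full $\delta$ after the new deadline, and the round is rejected. So Step~2's claim that everything in the response's past is shifted by at most $(i-1)\delta$ contradicts your own Step~1.

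Delays must be propagated along the dependencies the circuit actually uses, not along light cones. Register releases and null-value or timeout deliveries must count as prerequisites. The paper does this: each event gets the largest delay among the events it directly or indirectly uses, and is scheduled no earlier than that shifted time and no earlier than its prerequisites allow.

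Even with dependency-based delays, the near-side responder may genuinely read $V_b$'s later idle values. Your patch is to feed these inputs ``idle'' early because they are deterministic conditioned on $(k,i,b)$. That is not a legitimate strategy: the coalition does not know the challenge. A rule that supplies idle values without that knowledge gives wrong inputs in executions where the challenge is in slot $j$.

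The missing observation that makes the substitution implementable is path ordering. All of $V_b$'s instructions enter the coalition along one fixed path, in slot order. So whoever first holds any information about $V_b$'s slot-$j$ instruction has already received the slot-$i$ return, and can supply the later idle values itself. This is the paper's rule that idle values determined by a received return inherit that slot's delay.

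For later slots of $V_{1-b}$, your bound via the emission time $T_j-D$ is too weak when $d_b>d_{1-b}$. Use the exact emission event instead: $V_{1-b}$'s location at time $T_j-d_{1-b}$. Then the available time is $(T_i+d_b)-(T_j-d_{1-b})=d_0+d_1-(T_j-T_i)$, which is less than the distance $d_0+d_1$ between the verifiers, so this information cannot affect a timely response.

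Finally, ``registers are used in the same order'' needs an argument once responses wait longer in their outgoing registers. The paper notes that in an accepted execution the response's register is not occupied by another message with equal or larger delay. Hence releasing it at the shifted time does not push any response-relevant event later.
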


The proof is given in \Cref{sec:proof-overlap-to-nonoverlap}. It uses protocol rules~\ref{cond:dynamic-input-path}--\ref{cond:nonoverlap-ordering} and analysis convention~\ref{cond:nonoverlap-slot-structure}.

\begin{remark}
The same conclusion holds for any one-dimensional slot-based QPV protocol satisfying rules~\ref{cond:dynamic-input-path} to \ref{cond:nonoverlap-ordering} for which convention~\ref{cond:nonoverlap-slot-structure} applies, provided that each round has at most one non-idle return instruction, sent by the selected verifier in the selected slot along a fixed straight path toward the claimed location, while the verifiers are idle in every other slot.
\end{remark}
\paragraph{Security of the Amplified Protocol}

We now combine the reduction from an overlapping execution to a nonoverlapping one with the squashed-entanglement consumption bound from \Cref{sec:quantum-info-lower-bound-entanglement-consumption} to prove static-verifier security of the protocol in \Cref{fig:qpv-protocol-multi-round-unified}.
\begin{theorem}[Static-verifier security of the protocol given in \Cref{fig:qpv-protocol-multi-round-unified}]
\label{thm:amplified_security-generalized}
Let $m:=m(\secparam)$, $R_H:=R_H(\secparam)$ and $s:=s(\secparam)$ be polynomial functions of $\secparam$ such that $s\geq9, R_H\geq 1$ and $m=\omega(\log_2^2\secparam)$. The one-dimensional QPV protocol in \Cref{fig:qpv-protocol-multi-round-unified} is slot-based and inter-slot-delay independent, and satisfies
\begin{equation}
    \left\lfloor\frac{m}{\lceil\log_2^2(\secparam)\rceil}\right\rfloor R_H
    \bigl(2\kappa(s-2.5)-1\bigr)
\end{equation}
-qubit (and hence $\Omega\left(\frac{m}{\log_2^2(\secparam)} sR_H\right)$-qubit) static-verifier security for a universal constant $\kappa>0.08$.

In particular, for any admissible claimed location (and for arbitrary and not necessarily polynomially bounded parameters $s(\secparam),R_H(\secparam),m(\secparam)$ with $s\geq9, R_H\geq 1$ and $m=\omega(\log_2^2\secparam)$), if an adversarial coalition (with no adversaries at the claimed location) in 1-D,
succeeds with probability at least $(1-10^{-7})^{\lceil\log_2^2(\secparam)\rceil}$, then, its total quantum-register size must be strictly greater than
\begin{equation}
    \left\lfloor\frac{m}{\lceil\log_2^2(\secparam)\rceil}\right\rfloor R_H
    \bigl(2\kappa(s-2.5)-1\bigr),
\end{equation}
for large enough $\secpar$. 
\end{theorem}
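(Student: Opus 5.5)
We follow the route of the technical overview and reduce everything to \Cref{cor:k-extraction-entanglement-consumption-distributed}, \Cref{lem:quantum_dim_bound} and \Cref{prop:overlap-to-nonoverlap}.

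\textbf{Reductions.} First, by \Cref{prop:overlap-to-nonoverlap}, we may assume the slot spacing satisfies $\Delta^\star>2D$, so consecutive slot intervals are disjoint. The same proposition, applied to the rescaled protocol $c-\Pi$, gives inter-slot-delay independence, since the bound we derive does not depend on $\Delta$. Second, we pass to the purified execution in which the verifiers hold reference halves $O_r$ of Bell pairs. Using the analysis conventions, we keep every measurement outcome and discard as a classical record, so the global state is always a classically flagged mixture of pure states in the sense of \Cref{def:classically-flagged-mixture-pure-states}, and it remains so after the operations of \Cref{remark:observation-classical-registers}. Third, set $L:=\lceil\log_2^2\secparam\rceil$ and split the $m$ rounds into $L$ consecutive blocks. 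If the success probability is at least $(1-10^{-7})^L$, then the chain rule for conditional probabilities gives a block $\mathcal B$ whose conditional winning probability, given success on all earlier rounds, is at least $1-10^{-7}$. We post-select on that earlier success, which is conditioning on a classical flag. The resulting $\rho_{\rm post}$ is still classically flagged, lives on the same quantum registers, and makes every round of $\mathcal B$ high-success.

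\textbf{Per-round drop.} Fix a high-success round. By Markov's inequality, at least $N\geq 0.98s$ slots are good, meaning both $(i,0)$ and $(i,1)$ succeed with probability at least $1-\gamma$, where $\gamma=10^{-5}$. For a good slot $t_j$ with $j<N$, the far side cannot react in time to $V_b$'s request, so the near side $A_b$ alone must answer with success at least $1-2\gamma$. We then build the reordered evolution: pause $A_b$ just before its instruction and let $A_{1-b}$ process its idle instruction. At the resulting timestamp $\tau_{j,1-b}$, the extended state of $A_b$ recovers a uniformly random index. By \Cref{lem:bb84_fidelity}, the average Bell fidelity is at least $1-4\gamma$.

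We alternate sides across good slots, choosing $\tau_{j,0}$ for odd $j$ and $\tau_{j,1}$ for even $j$. Because the slots are disjoint, this yields a single chronological chain of timestamps connected by LOCC maps that leave the reference qubits untouched. \Cref{cor:k-extraction-entanglement-consumption-distributed}, with $k=R_H$, $\epsilon=2\sqrt\gamma$ and $\epsilon'=2\sqrt\gamma$, then gives a drop of $R_H[1-2g(\epsilon)-f(\epsilon')]$ across each transition. This holds conditioned on the challenge slot lying after $t_{j+1}$. Averaging over the uniformly random challenge slot, transition $j$ contributes at least $(N-j-1)/s$ times this drop. Summing over $j$ gives an expression of order $N^2/(2s)\ge \kappa'(s-2.5)$ times $R_H$, which is where the form $2\kappa(s-2.5)$ comes from.

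\textbf{Between rounds and conclusion.} The verifier test and replenishment between rounds can increase $\Esq(A_0:A_1)$ by at most $R_H$. This should follow from monogamy plus the bound $\Esq\le\log\dim$ over the $R_H$ reference qubits, as in the cited \Cref{lem:verifier-entanglement-accounting}, and it accounts for the ``$-1$''. Telescoping over the at least $\lfloor m/L\rfloor$ rounds in $\mathcal B$ and using $\Esq\ge 0$ shows that $\Esq(A_0:A_1)_{\rho_{\rm post}}$ exceeds the stated quantity. \Cref{lem:quantum_dim_bound} then converts this into the same lower bound on the number of quantum qubits. Since $m=\omega(\log^2\secparam)$ makes $(1-10^{-7})^L$ non-negligible, any adversary below the bound succeeds with only negligible probability.

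\textbf{Main obstacle.} The hardest step is the conditioning bookkeeping. The recoverability guarantees hold only conditioned on the challenge lying in a later slot, and they hold at reordered, non-physical timestamps. We must therefore check three things:
\begin{itemize}
\item the conditioned states are still classically flagged pure mixtures with the flag correctly split into $C_0,C_1$;
\item the reordering yields a genuine LOCC map between consecutive chosen timestamps;
\item the expected drops add up linearly despite the conditioning.
\end{itemize}
I would handle this by treating the event ``challenge after slot $t_{j+1}$'' as a classical flag held by the verifiers and applying \Cref{lemma:direct-sum} to average the conditional drops.
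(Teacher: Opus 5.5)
Your outline follows the paper's route step for step: overlap reduction, Bell-pair purification with kept records, block post-selection, Markov for $N\geq0.98s$ good slots, alternating reordered timestamps, \Cref{cor:k-extraction-entanglement-consumption-distributed} with $\epsilon=\epsilon'=2\sqrt\gamma$, verifier accounting, and the dimension bound. However, the two steps that carry the proof are only asserted, and the reason you give for the first is insufficient.

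\textbf{Near-side recoverability.} Saying ``the far side cannot react in time'' does not rule out a \emph{blind} response. $A_{1-b}$ may send a qubit timed to reach $V_b$ at the deadline of $t_j$ without knowing $\ell_b$. Such a response can be accepted, and then nothing forces $A_b$'s local response map to succeed. The paper handles this event $K_{j,b}$ (for fixed $\eta_j$) by comparing the challenges $(t_j,b)$ and $(t_{j+1},b)$. Its probability is the same under both, since in both $A_{1-b}$ sees $\mathsf{idle}$ at $t_j$ and cannot learn $\ell_b$ in time. Under $(t_{j+1},b)$ it is an unrequested response and forces rejection, so goodness of $(t_{j+1},b)$ gives $\Pr[K_{j,b}]\leq\gamma$. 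Hence $\Pr[W\cap K_{j,b}^c]\geq1-2\gamma$, and on that event the accepted qubit is $A_b$'s. This is exactly where $j<N$ and the factor $2$ come from; your sketch quotes both without an argument that produces them.

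\textbf{Legitimacy of the reordered chain.} Disjointness of slots orders timestamps across slots. It does not show that the non-physical states $\tau_{j,1-b}$ lie on an LOCC evolution reproducing the real one. Nor does it show that recoverability proved for the good choice $(t_j,a_j)$ holds in the state conditioned on a later challenge. You list this as an obstacle, but your remedy only addresses averaging.

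The missing ingredient is the paper's structural claim: the operations determining the round's response factor as $\mathcal C_j\to\mathcal M_{0,j}^{\ell_0}\otimes\mathcal M_{1,j}^{\ell_1}\to(\text{classical communication})\to\mathcal U_{0,j}^{\ell_0}\otimes\mathcal U_{1,j}^{\ell_1}$. This holds because information about the current instructions can cross $0$ at most once before a timely response. The $\mathcal M$-maps then commute. Write $\rho_j^{i,b}$ for the state at $\tau_j$ given challenge $(i,b)$, and let $\mathcal S_j:=\{(i,b):i>t_{j+1}\}\cup\{(t_{j+1},a_{j+1})\}$. Every $(i,b)\in\mathcal S_j$ satisfies $\rho_j^{i,b}=\rho_j^{t_j,a_j}$ and $\rho_{j+1}^{i,b}=\rho_{j+1}^{t_{j+1},a_{j+1}}$, and these are joined by a test-free LOCC map.

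\textbf{Averaging.} A flag for the $j$-dependent event ``challenge after $t_{j+1}$'' does not telescope as stated. The paper instead places all of $(I_k,B_k)$ on one side. For each fixed value it sums the drops over $\{j:(i,b)\in\mathcal S_j\}$, subtracts a single verifier cost $R_H$, and only then averages.

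\textbf{Smaller points.}
\begin{itemize}
\item The replenishment qubit crossing the cut costs a separate $+1$ per round, on top of the verifier's $R_H$.
\item To obtain $\kappa=0.09$ you must check that $\frac{c(N-1)(N-2)}{2s}-0.18(s-2.5)>1$ for $c>0.74$, $N\geq0.98s$, $s\geq9$. This holds for your count, with little room at $s=9$; ``of order $N^2/(2s)$'' does not establish it.
\item $(1-10^{-7})^{\lceil\log_2^2\secparam\rceil}$ is negligible regardless of $m$, and that is what yields security. The role of $m=\omega(\log_2^2\secparam)$ is to make $\lfloor m/\lceil\log_2^2\secparam\rceil\rfloor$ grow, not to make that threshold non-negligible.
\end{itemize}
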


\begin{proof}
Let $\nu(\secparam):=(1-10^{-7})^{\lceil\log_2^2(\secparam)\rceil}$, which is negligible. Fix a finite coalition $P_0,\ldots,P_n$, with no adversary at the claimed location and success probability $p_{\mathrm{win}}\geq\nu(\secparam)$. We show that its total quantum-register size exceeds the bound in the theorem for large enough $\secparam$. By linearity of acceptance probability in the initial state, we may take that state to be pure without decreasing the success probability.

We use the equivalent Bell-pair description: whenever $V_0$ would send a random BB84 state, it instead prepares a Bell pair, keeps one reference qubit, and sends the other. Measuring the reference qubit in a random basis from $\{X,Z\}$ before transmission prepares the original random BB84 state. Since the basis and outcome are not used before the verifier's test, this measurement may be deferred until then. If no response arrives at the deadline, discarding the reference qubit unmeasured gives the same adversarial state as averaging over the hidden BB84 choice.

\paragraph{Setting up the Framework} Next, we show that the adversarial coalition must contain at least one adversary on each side of $0$. We first suppose that every adversary lies to the left. We fix any earlier transcript and condition on all earlier rounds being won. For challenges from $V_1$, we compare the $s$ executions corresponding to the possible challenge slots with one execution in which both verifiers remain idle throughout the round. Let $G$ be the first slot in which $V_1$ receives a response at the prescribed deadline, and set $G=\infty$ if no such response is received. If $V_1$ sends the instruction in slot $i$, there is not enough time for that instruction to reach the adversaries and for a response based on it to return by the deadline. Therefore, winning when $I_k=i$ requires $G=i$. Since the events $G=i$ are disjoint, the conditional success probability for a challenge from $V_1$ is at most $1/s$. For a challenge from $V_0$, we use the upper bound $1$. Hence the conditional probability of winning any round is at most $(1+1/s)/2\leq5/9$, and therefore $p_{\mathrm{win}}\leq(5/9)^m<\nu(\secparam)$ for all sufficiently large $\secparam$, because $m/\lceil\log_2^2\secparam\rceil\to\infty$. This contradicts the theorem hypothesis. The case in which every adversary lies to the right is symmetric.

By \Cref{prop:overlap-to-nonoverlap}, we may assume nonoverlapping slots without increasing the coalition's quantum-register size or decreasing its success probability. Since the reduction applies to every positive original slot spacing, the bound below also proves inter-slot-delay independence.

After each round, we add a classical copy of the requested label to the analysis record on the side to which it was sent. The reference qubit with label $R_k$ is measured if a response is tested and discarded unmeasured if no response arrives at the deadline.

For the analysis, let $F_k:=(I_k,B_k,U_k)$ be a classical register containing the verifier choices in round $k$. It is independent of the earlier state. We include it with $A_0$ only when calculating squashed entanglement; this does not give its contents to the adversaries. At each prescribed delivery time, $F_k$ and the list record determine exactly the instruction sent by the protocol, including $R_k$ when requested. The adversaries learn nothing else from $F_k$, and we discard it at the end of the round.

By the analysis conventions, fixing the complete classical record leaves the quantum state pure. Reference qubits measured in earlier rounds are then uncorrelated with the remaining systems and may be omitted. Hence, conditioned on the transcripts, the state on
\begin{equation}
    O\bigl(\mathbf r_k(\xi)\bigr)A_0A_1,
    \qquad
    O\bigl(\mathbf r_k(\xi)\bigr):=
    O_{r_{k,1}(\xi)}\cdots O_{r_{k,R_H}(\xi)},
\end{equation}
is pure. Thus every state used below is an average of pure states, one for each classical record, as required by \Cref{cor:k-extraction-entanglement-consumption-distributed}.

\textbf{Step 1: High-success rounds and slots that are good for both sides} We first define a high-success round and show that, in most slots of such a round, the adversaries succeed with high probability. Let $W_k$ indicate that the adversaries pass round $k$, and let $H_k$ be the event $W_1=\cdots=W_k=1$, with $H_0$ the certain event. For $i\in[s]$ and $b\in\{0,1\}$, set $p_{i,b}^k:=\Pr[W_k=1\mid I=i,B=b]$, where the probability also averages over the preceding transcript and the uniform position $U_k$. Thus $p^{(k)}:=\Pr[W_k=1]=\frac{1}{2s}\sum_{i=1}^s\sum_{b\in\{0,1\}}p_{i,b}^k$. We call round $k$ high-success if $p^{(k)}\geq1-10^{-7}$.

Now fix a high-success round and set $\gamma:=10^{-5}$. Call a challenge choice $(i,b)$ bad if $1-p_{i,b}^k>\gamma$, and good otherwise. By Markov's inequality, there are at most $2s\cdot10^{-7}/10^{-5}=0.02s$ bad choices. Call a slot good for both sides if neither of its two challenge choices is bad, i.e., both the challenge choices are good, and let $N$ be the number of such slots. Then $N\geq0.98s\geq0.8s$. Write their indices in increasing order as $\mathcal T=(t_1,\ldots,t_N)$.

\textbf{Step 2: Separating the two sides in one selected slot} We next analyze how the two sides $A_0,A_1$ can act in one of the good slots. Fix a high-success round $k$ and a good slot $t_j$. Let $\eta_j=(\eta_{j,0},\eta_{j,1})$ contain all classical information available to the two sides immediately before this slot, and let $\xi$ be its pre-round part. We fix $\eta_j$ only for the analysis; each side still uses only its own information and the messages it receives. We use the following conventions and timing facts.
\begin{enumerate}
\item For $a\in\{0,1\}$, let $\ell_a$ denote the return instruction sent by $V_a$ in this slot. If $V_a$ sends no return instruction, set $\ell_a=\mathsf{idle}$ and regard this as an empty instruction with the same timing as a return instruction. Thus $(\ell_0,\ell_1)$ covers both possibilities: either both verifiers are idle, or one sends a return instruction while the other is idle.
\item By the copying convention above, we may assume that an instruction from $V_a$ first reaches the coalition through $A_a$, and any information about it that later reaches $A_{1-a}$ can be regarded as a classical message from $A_a$.
\item Set $d_0:=L$ and $d_1:=R$. Information already available before the slot is handled by $\mathcal C_j$. The coalition cannot learn $\ell_b$ before time $T_{t_j}-d_b$. A response to $V_c$ is due at time $T_{t_j}+d_c$, so at most $d_b+d_c$ time is available. Information about $\ell_b$ cannot cross $0$ twice before affecting that response, because doing so would take more than $d_b+d_c$ time since no adversary is at $0$. Hence information learned from the current instructions may cross $0$ at most once before affecting a timely response.
\end{enumerate}
For each $a\in\{0,1\}$, let $S_a=1$ if $A_a$ sends exactly one fixed-size response timed to reach $V_a$ at the deadline for this slot, and let $Q_a$ be that response. Otherwise, let $S_a=0$ and let $Q_a$ be a fixed dummy state. The protocol's rejection rule handles early, late, multiple, and additional responses.

By the adversary model, the coalition's strategy is one fixed finite circuit, with its classical communication represented by classical wires between the adversaries. We consider only the operations that can affect whether either side sends the prescribed response for the current round or what state it sends. We divide these operations once in this circuit according to the information they use, rather than when they occur. Information carried through an earlier operation or stored in a register also counts as information used by a later operation.

\begin{claim}[Operations determining the round response]\label{claim:form-of-a-selected-slot}
The operations determining the prescribed response for the current round can be written as
\begin{equation}
\mathcal C_j\longrightarrow\mathcal M_{0,j}^{\ell_0}\otimes\mathcal M_{1,j}^{\ell_1}\longrightarrow\text{classical communication}\longrightarrow\mathcal U_{0,j}^{\ell_0}\otimes\mathcal U_{1,j}^{\ell_1}.
\end{equation}
The LOCC map $\mathcal C_j$ contains the operations affecting the round response that use only information already available before the slot, together with those that are independent of the current instructions. The local map $\mathcal M_{a,j}^{\ell_a}$ may use $\ell_a$, but it does not use information about the current instructions received from $A_{1-a}$. The two sides then communicate classically, and the local map $\mathcal U_{a,j}^{\ell_a}$ completes the response using the information received. No later communication can affect a response received by the prescribed deadline. Each $\mathcal U$-map outputs $(S_a,Q_a)$.
\end{claim}

\begin{proof}
We first place in $\mathcal C_j$ every operation affecting the round response that uses only information available before the slot or is independent of the current instructions, together with the operations needed to perform them. We do not change when a message is physically delivered: an operation is moved earlier in this description only if it does not use the message being passed over.

On each side, we combine the remaining operations that do not use, directly or indirectly, information about the current instructions received from the other side into $\mathcal M_{a,j}^{\ell_a}$. These two maps act on opposite sides, and neither uses information produced by the other, so they commute. Their classical outputs are then delivered to the other side.

The remaining local operations form the two $\mathcal U$-maps. Suppose some later communication could still affect a timely response. If the information it sends does not use anything received from the other side in the current slot, the sending operation belongs to the corresponding $\mathcal M$-map. If it does use such information, then information about a current instruction would cross $0$ twice before affecting the response, which the timing argument rules out. Thus no later communication is needed. If a timely response was prepared before the final local map, that map carries the already prepared response register unchanged to its output. Hence $(S_a,Q_a)$ is defined for every possible set of classical values, which proves the claim.
\end{proof}

\begin{center}
\begin{tikzpicture}[x=1.08cm,y=1.08cm,>=stealth,every node/.style={font=\small}]
\draw[->] (-4.7,2.55)--(4.9,2.55) node[right] {time};
\foreach \x/\lab in {-4/{T_{t_j}-D},0/{T_{t_j}},4/{T_{t_j}+D}} {\draw (\x,2.42)--(\x,2.68) node[above=2pt] {$\lab$};}
\node[anchor=east] at (-4.35,1.1) {$A_0:\ \ell_0$};
\node[anchor=east] at (-4.35,-1.1) {$A_1:\ \ell_1$};
\node[draw,rounded corners,minimum width=2.1cm,minimum height=0.7cm] (m0) at (-2,1.1) {$\mathcal M_{0,j}^{\ell_0}$};
\node[draw,rounded corners,minimum width=2.1cm,minimum height=0.7cm] (m1) at (-2,-1.1) {$\mathcal M_{1,j}^{\ell_1}$};
\node[draw,rounded corners,minimum width=2.1cm,minimum height=0.7cm] (u0) at (2,1.1) {$\mathcal U_{0,j}^{\ell_0}$};
\node[draw,rounded corners,minimum width=2.1cm,minimum height=0.7cm] (u1) at (2,-1.1) {$\mathcal U_{1,j}^{\ell_1}$};
\node[anchor=west] (q0) at (4.35,1.1) {$(S_0,Q_0)$};
\node[anchor=west] (q1) at (4.35,-1.1) {$(S_1,Q_1)$};
\draw[->] (-4.15,1.1)--(m0.west);
\draw[->] (-4.15,-1.1)--(m1.west);
\draw[->] (m0.east)--(u0.west);
\draw[->] (m1.east)--(u1.west);
\draw[dashed,->] (m0.east) to[bend right=24] (u1.west);
\draw[dashed,->] (m1.east) to[bend left=24] (u0.west);
\draw[->] (u0.east)--(q0.west);
\draw[->] (u1.east)--(q1.west);
\end{tikzpicture}
\par\smallskip
{\small\emph{Before commutation.} The upper line shows the fixed protocol times. The lower drawing shows which information the operations use, not physical locations or exact execution times. Its two rows represent $A_0$ and $A_1$. Solid arrows show local processing, while dashed arrows show classical communication between the two sides.}
\end{center}

\textbf{Step 3: Commuting the two sides and comparing consecutive states}

By Claim~\ref{claim:form-of-a-selected-slot}, the two $\mathcal M$-maps commute. Delivering the classical outputs of either map also commutes with the receiving side's $\mathcal M$-map, because that map does not use those messages. We may therefore apply the maps in either order, while delivering each message before the $\mathcal U$-map that uses it.

Let $\sigma_j$ be the state after $\mathcal C_j$. For $b\in\{0,1\}$, let $\tau_{j,b}$ be the state obtained by starting from $\sigma_j$, giving $A_b$ its instruction $\ell_b$, and applying $\mathcal M_{b,j}^{\ell_b}$. When this map sends a classical message, we place a copy of that message in a new unread register on $A_{1-b}$ at its original delivery time. For this auxiliary state, we do not yet give $A_{1-b}$ its instruction or apply $\mathcal M_{1-b,j}^{\ell_{1-b}}$.

At $\tau_{j,1-a}$, side $A_a$ has not yet received its instruction, while the classical information sent to it by $A_{1-a}$ is already stored in its registers. For each list index $h$, let $E_{j,a,h}^{\xi}$ be the local response map obtained by giving $A_a$ the instruction $\mathsf{return}(r_{k,h}(\xi))$ and applying its remaining response operations. For this auxiliary map, no system is actually sent: the map outputs the unique timely response when $S_a=1$, outputs the dummy state otherwise, and locally discards everything else.\footnote{If the response state was prepared earlier, the map outputs that same register; it does not make a copy. The dummy state is used only to define the map when no unique timely response is produced.} The map may use the local information contained in $\eta_j$, but it does not read the actual value of $U_k$.

\begin{center}
\begin{tikzpicture}[x=1cm,y=1.08cm,>=stealth,every node/.style={font=\small}]
\draw[->] (-7,3.05)--(7.4,3.05) node[above left] {order of operations};
\node[draw,rounded corners,minimum width=1.7cm,minimum height=0.7cm] (s) at (-6,0) {$\sigma_j$};
\node[draw,rounded corners,minimum width=1.7cm,minimum height=0.7cm] (t0) at (-3,1.35) {$\tau_{j,0}$};
\node[draw,rounded corners,minimum width=1.7cm,minimum height=0.7cm] (t1) at (-3,-1.35) {$\tau_{j,1}$};
\node[draw,rounded corners,align=center,minimum width=3cm,minimum height=1.15cm] (f) at (0.4,0) {same state after\\both $\mathcal M$-maps and\\classical communication};
\node[draw,rounded corners,minimum width=2.1cm,minimum height=0.7cm] (u0) at (4,1.2) {$\mathcal U_{0,j}^{\ell_0}$};
\node[draw,rounded corners,minimum width=2.1cm,minimum height=0.7cm] (u1) at (4,-1.2) {$\mathcal U_{1,j}^{\ell_1}$};
\node[anchor=west] (q0) at (5.8,1.2) {$(S_0,Q_0)$};
\node[anchor=west] (q1) at (5.8,-1.2) {$(S_1,Q_1)$};
\draw[->] (s.north east)--node[above,sloped] {$\mathcal M_{0,j}^{\ell_0}$} (t0.south west);
\draw[->] (s.south east)--node[below,sloped] {$\mathcal M_{1,j}^{\ell_1}$} (t1.north west);
\draw[->] (t0.south east)--node[above,sloped] {$\mathcal M_{1,j}^{\ell_1}$} (f.north west);
\draw[->] (t1.north east)--node[below,sloped] {$\mathcal M_{0,j}^{\ell_0}$} (f.south west);
\node[font=\scriptsize,align=center,text width=3cm] at (-3,0) {each $\mathcal M$-arrow also copies its classical outputs into fresh unread registers};
\draw[->] (f.north east)--(u0.west);
\draw[->] (f.south east)--(u1.west);
\draw[->] (u0.east)--(q0.west);
\draw[->] (u1.east)--(q1.west);
\end{tikzpicture}
\par\smallskip
{\small\emph{After commutation.} The horizontal arrow shows only the order in which the operations are written, not physical time. The two orders of the $\mathcal M$-maps give the same state, after which the two $\mathcal U$-maps produce the responses. The states $\tau_{j,0}$ and $\tau_{j,1}$ are auxiliary states and need not occur at a common physical time.}
\end{center}

We now arrange the previously defined local response maps in the form required by \Cref{cor:k-extraction-entanglement-consumption-distributed}. To do so, we alternate the side on which the local response maps are defined for consecutive good slots. Set $a_j:=1$ when $j$ is odd and $a_j:=0$ when $j$ is even, and let $\tau_j:=\tau_{j,1-a_j}$. The maps $\{E_{j,a_j,h}^{\xi}\}_{h\in[R_H]}$ act locally on $A_{a_j}$ at $\tau_j$, and since $a_{j+1}=1-a_j$, the corresponding maps at $\tau_{j+1}$ act on the opposite side. Let $\rho_j^{i,b}$ denote the state at $\tau_j$ conditioned on the challenge slot being $i$ and the selected verifier being $V_b$. Only $(i,b)$ is fixed in this notation; $U_k$ and the earlier classical records remain random.

We next identify the challenge choices for which the coalition evolves from $\tau_j$ to $\tau_{j+1}$ by LOCC without a verifier test. For $j<N$, let
\begin{equation}
\mathcal S_j:=\{(i,b):i>t_{j+1},\ b\in\{0,1\}\}\cup\{(t_{j+1},a_{j+1})\}.
\end{equation}
These choices either place the challenge after slot $t_{j+1}$ or place it in slot $t_{j+1}$ on the side that has not yet received its instruction at $\tau_{j+1}$. Hence $|\mathcal S_j|=2s-2t_{j+1}+1$.

For every choice in $\mathcal S_j$, no return instruction has arrived before the state being considered. The earlier idle slots therefore reveal no information about $U_k$. Since $U_k$ was sampled independently, it remains uniform after fixing $\eta_j$ or $\eta_{j+1}$.

For every $(i,b)\in\mathcal S_j$, let $\Lambda_j^{i,b}$ denote the evolution from $\rho_j^{i,b}$ to $\rho_{j+1}^{i,b}$ obtained as follows. Starting from $\tau_j$, we finish slot $t_j$, follow the coalition's strategy until slot $t_{j+1}$, apply $\mathcal C_{j+1}$ followed by $\mathcal M_{1-a_{j+1},j+1}^{\mathsf{idle}}$, and stop at $\tau_{j+1}$. Note that for every $(i,b)\in\mathcal S_j$, slot $t_j$ is idle, while at slot $t_{j+1}$ either both sides are idle or we stop before the selected side receives its instruction. Thus $\Lambda_j^{i,b}$ is LOCC, contains no verifier test or reveal, and preserves $U_k$, the reference qubits, and the records determining the list order.

Finally, apart from the fixed local labels $(i,b)$, the states at $\tau_j$ and $\tau_{j+1}$ are the same for every $(i,b)\in\mathcal S_j$. At $\tau_j$, side $A_{1-a_j}$ has received $\mathsf{idle}$ and $A_{a_j}$ has not yet received its instruction, exactly as for the good challenge choice $(t_j,a_j)$. The same reasoning at $\tau_{j+1}$ gives the good challenge choice $(t_{j+1},a_{j+1})$. Hence
\begin{equation}
\rho_j^{i,b}=\rho_j^{t_j,a_j},\qquad \rho_{j+1}^{i,b}=\rho_{j+1}^{t_{j+1},a_{j+1}}.
\end{equation}

We now show that the maps $E_{j,a,h}^{\xi}$ succeed with high probability. We fix $j<N$ and $a\in\{0,1\}$ and compare the two good challenge choices $(t_j,a)$ and $(t_{j+1},a)$, which use the same verifier. Let $W$ be the event that the first challenge is won, and let $K_{j,a}$ be the event that $A_{1-a}$ sends a fixed-size response timed to reach $V_a$ at the deadline of slot $t_j$, whether or not that slot is selected.

In both executions, all slots before $t_j$ are idle, so the record $\eta_j$ has the same distribution. We fix $\eta_j$ and leave all later outcomes random. The probability of $K_{j,a}$ is then the same in both executions: $A_{1-a}$ receives the idle instruction in slot $t_j$ in both cases, and information about the instruction received by $A_a$ cannot travel to the opposite side and back before the deadline. If $K_{j,a}$ occurs in the second execution, $V_a$ did not request that response, so the protocol rejects. Since $(t_{j+1},a)$ is good, it follows that $\Pr[K_{j,a}]\leq\gamma$.

The challenge choice $(t_j,a)$ is also good, so $\Pr[W]\geq1-\gamma$ and therefore $\Pr[W\cap K_{j,a}^c]\geq1-2\gamma$. On this event, the unique accepted response is sent by $A_a$, and $E_{j,a,U_k}^{\xi}$ outputs that response. Let $q_{j,a,h}^{\eta_j}$ denote the acceptance probability when the output of $E_{j,a,h}^{\xi}$ is tested against $O_{r_{k,h}(\xi)}$, where $\xi$ is the pre-round part of $\eta_j$. Since $U_k$ remains uniform and the maps do not read its actual value,
\begin{equation}
\frac{1}{R_H}\sum_{h=1}^{R_H}\mathbb E_{\eta_j}\!\left[q_{j,a,h}^{\eta_j}\right]=\mathbb E_{\eta_j,U_k}\!\left[q_{j,a,U_k}^{\eta_j}\right]\geq1-2\gamma.
\end{equation}
Thus the required local response maps exist at every good slot except the last. For $1\leq j<N-1$, the maps on $A_{a_j}$ for $\rho_j^{i,b}$ and on $A_{a_{j+1}}$ for $\rho_{j+1}^{i,b}$ both have this average success probability and act on opposite sides. By \Cref{lem:bb84_fidelity}, both families have average Bell-state fidelity at least $1-4\gamma$. Let $\epsilon:=2\sqrt\gamma$ and $c:=1-2g(\epsilon)-f(\epsilon)$. For $\gamma=10^{-5}$, we have $c>0.74$.

As established before Step~1, fixing the complete classical record leaves the remaining quantum state pure. For each such record, its earlier part $\xi$ fixes the ordered reference qubits; let $O_h:=O_{r_{k,h}(\xi)}$. The channel $\Lambda_j^{i,b}$ preserves these qubits and their order, and the two families of maps are averaged over the same records and the same uniform $h$. Applying \Cref{cor:k-extraction-entanglement-consumption-distributed}, with the two sides interchanged when $j$ is odd, yields, for every $1\leq j<N-1$ and $(i,b)\in\mathcal S_j$,
\begin{equation}
\Esq(A_0:A_1)_{\rho_j^{i,b}}-\Esq(A_0:A_1)_{\rho_{j+1}^{i,b}}\geq R_Hc.
\end{equation}

\textbf{Step 4: Adding the entanglement decrease over one complete round}

We now add the decreases from Step~3 over one high-success round. We let $\omega_0$ be the adversarial state after the initialization phase has ended and no initialization qubit remains in transit, but before the variables of the first round are sampled. For $k\geq1$, we let $\omega_k$ be the state after all $s$ slots of round $k$, but before the fresh qubit for round $k+1$ is sent. We set $\mathcal W_k:=\Esq(A_0:A_1)_{\omega_k}$ and fix a high-success round $k$.

For $k>1$, we let $Q_k$ be the fresh qubit sent by $V_0$ before round $k$. After tracing out the reference qubit kept by the verifier, $Q_k$ is independent of the adversaries' previous state, so initially adding it to either side does not change the squashed entanglement. By condition~\ref{cond:dynamic-input-path}, it can cross from $A_0$ to $A_1$ at most once, increasing the squashed entanglement by at most one.\footnote{For every qubit $Q$, $E_{\rm sq}(A_0:QA_1)\leq E_{\rm sq}(A_0Q:A_1)+1$. For any extension $E$, the corresponding difference of conditional mutual information is $H(Q\mid A_1E)-H(Q\mid A_0E)\leq2$, and the definition of squashed entanglement includes a factor of $1/2$.} All its other possible changes before the slots begin are local and cannot increase the squashed entanglement. If $\widehat\omega_{k-1}$ is the state after this transmission has ended, then
\begin{equation}
\Esq(A_0:A_1)_{\widehat\omega_{k-1}}\leq\mathcal W_{k-1}+1.
\end{equation}
For $k=1$, we set $\widehat\omega_0:=\omega_0$, and equality holds without the additional one.

Since the (ordered) list changes after every round, the physical label in each list index depends on the earlier record $\xi$. For the analysis, we copy to both sides the parts of $\xi$ that determine $\mathbf r_k(\xi)$. Once $\xi$ is fixed, position $h$ corresponds to the fixed reference qubit $O_{r_{k,h}(\xi)}$, while $U_k$ remains uniform and independent of the earlier state. By classical copying, local deletion, and \Cref{lemma:direct-sum}, these additional copies do not change the squashed entanglement. Thus both sides use the same identification of list indices with physical reference qubits throughout Step~3.

Recall that $F_k=(I_k,B_k,U_k)$ contains the fresh choices for round $k$. It is independent of $\widehat\omega_{k-1}$, so including it with $A_0$ for the entanglement calculation does not change the squashed entanglement. At the end of the round, tracing out $F_k$ and the additional copies of the list-records locally gives $\omega_k$ and cannot increase the squashed entanglement. The only other possible increase comes from the verifier's test when a response arrives or from the additional records kept for the analysis when it does not. The following lemma bounds both cases.

\begin{lemma}\label{lem:verifier-entanglement-accounting}
Suppose that $O$ contains at most $R_H$ verifier reference qubits and that $OA_0A_1$ is pure after the earlier classical records are fixed. Let $\rho^-$ be the adversarial state immediately before the verifier acts, with any response waiting or traveling to the verifier still included on its sender's side. Let $\rho^+$ be the state after the respective verifier acts at the deadline of the challenge slot. Then $\Esq(A_0:A_1)_{\rho^+}\leq\Esq(A_0:A_1)_{\rho^-}+R_H$\footnote{This holds irrespective of whether a response is sent from either side or no response is sent.}.
\end{lemma}
The proof is given on \Cpageref{pf:lem:verifier-entanglement-accounting}.

We now fix a challenge choice $(i,b)$. Whenever $1\leq j<N-1$ and $(i,b)\in\mathcal S_j$, Step~3 shows that the transition from $\tau_j$ to $\tau_{j+1}$ decreases the squashed entanglement by at least $R_Hc$. Every other change during the round uses local operations and classical communication, except for the verifier's action and the additional records kept for the analysis. Receiving and discarding any additional responses cannot increase the squashed entanglement. The verifier tests at most one response, and \Cref{lem:verifier-entanglement-accounting} bounds the possible increase, including the analysis records, by $R_H$. Therefore, for this fixed challenge choice, the total decrease is at least
\begin{equation}
\sum_{\substack{1\leq j<N-1\\(i,b)\in\mathcal S_j}}R_Hc-R_H.
\end{equation}

Applying \Cref{lemma:direct-sum} to the classical register containing $(I_k,B_k)$, we average this inequality over the $2s$ equally likely choices of $(i,b)$. After also taking into account, the possible increase of squashed entanglement by one unit, caused by the replenishment qubit sent by $V_0$ before the start of round $k$, we obtain
\begin{equation}
\mathcal W_{k-1}-\mathcal W_k\geq\frac{R_Hc}{2s}\sum_{j=1}^{N-2}|\mathcal S_j|-R_H-1.
\end{equation}
We now lower-bound the sum. Since $t_1<\cdots<t_N$ are distinct elements of $[s]$, we have $t_{j+1}\leq s-N+j+1$ and hence $|\mathcal S_j|\geq2N-2j-1$. Therefore, $\sum_{j=1}^{N-2}|\mathcal S_j|\geq N(N-2)$.

We set $\kappa:=0.09>0.08$. Since $N\geq0.98s$ and $c>0.74$, we have
\begin{equation}
\frac{cN(N-2)}{2s}-0.18(s-2.5)>0.74(0.4802s-0.98)-0.18(s-2.5)=0.175348s-0.2752>1,
\end{equation}
where the last inequality holds for every integer $s\geq9$. Since $R_H\geq1$, this implies
\begin{equation}
R_H\left(\frac{cN(N-2)}{2s}-0.18(s-2.5)\right)>1.
\end{equation}
Hence, every high-success round satisfies
\begin{equation}
\mathcal W_{k-1}-\mathcal W_k>R_H\bigl(2\kappa(s-2.5)-1\bigr).
\end{equation}
For every integer $s\geq9$, the coefficient on the right is positive.

\textbf{Step 5: Finding consecutive high-success rounds} It remains to find a long block of high-success rounds to which the preceding bound applies. However, the original execution need not contain a long consecutive block in which every round is won with high probability. We therefore find a block that is jointly won with high probability after conditioning on the earlier rounds, and then apply Step~4 to an execution that starts from this conditioned state. This changes the starting state but not the adversaries' quantum-register size.

Partition the $m$ rounds into $\lceil\log_2^2(\secparam)\rceil$ consecutive blocks $\mathcal B_1,\ldots,\mathcal B_{\lceil\log_2^2(\secparam)\rceil}$ whose lengths differ by at most one. Thus every block has length at least $\lfloor m/\lceil\log_2^2(\secparam)\rceil\rfloor$. For endpoints $0=n_0<n_1<\cdots<n_{\lceil\log_2^2(\secparam)\rceil}=m$, where $\mathcal B_h=\{n_{h-1}+1,\ldots,n_h\}$, set
\begin{equation}
p_h:=\Pr[H_{n_h}\mid H_{n_{h-1}}].
\end{equation}
The chain rule gives $p_{\mathrm{win}}=\prod_{h=1}^{\lceil\log_2^2(\secparam)\rceil}p_h$. Hence some block $\mathcal B=\mathcal B_h$ satisfies
\begin{equation}
p_h\geq p_{\mathrm{win}}^{1/\lceil\log_2^2(\secparam)\rceil}\geq\left((1-10^{-7})^{\lceil\log_2^2(\secparam)\rceil}\right)^{1/\lceil\log_2^2(\secparam)\rceil}=1-10^{-7}.
\end{equation}

Fix this block and condition once on $H_{n_{h-1}}$. We let $\rho_{n_{h-1},\mathrm{win}}$ be the resulting state at the beginning of $\mathcal B$, and then run every round in $\mathcal B$ without further conditioning. Since we keep all classical records, fixing those records still leaves a pure quantum state, and the conditioning does not change any register dimension. Thus Step~4 applies to this execution. Moreover, winning the entire block implies winning each of its rounds, so every round in this execution is won with probability at least $p_h\geq1-10^{-7}$. Therefore, every round is high-success.

Let $\widetilde{\mathcal W}_q$ be $\Esq(A_0:A_1)$ after $q$ rounds of this execution. By Step~4, for every $q\in[|\mathcal B|]$,
\begin{equation}
\widetilde{\mathcal W}_{q-1}-\widetilde{\mathcal W}_q>R_H\bigl(2\kappa(s-2.5)-1\bigr).
\end{equation}
Summing these decreases and using the nonnegativity of squashed entanglement gives
\begin{align}
\Esq(A_0:A_1)_{\rho_{n_{h-1},\mathrm{win}}}
&=\widetilde{\mathcal W}_0 \nonumber\\
&>|\mathcal B|R_H\bigl(2\kappa(s-2.5)-1\bigr) \nonumber\\
&\geq\left\lfloor\frac{m}{\lceil\log_2^2(\secparam)\rceil}\right\rfloor R_H\bigl(2\kappa(s-2.5)-1\bigr).
\end{align}
By \Cref{lem:quantum_dim_bound}, this is also a strict lower bound on the total size of the adversarial coalition's quantum registers for the truncated execution with the post-selected state. Note that post-selection changes only the state stored in the adversarial coalition's fixed registers; it neither adds new registers nor changes their sizes. Moreover, any verifier qubit kept by an adversary must occupy one of these counted registers. Hence the same register-size lower bound applies to the original coalition. Thus every coalition succeeding with probability at least $\nu(\secparam)$ exceeds the bound stated in the theorem, which completes the proof.
\end{proof}

\begin{remark}[Concrete security]\label{rem:concrete-security}
The proof also gives a finite-parameter bound for any integer $1\leq k<m$. For positive integers $m,s,R_H$ with $s\geq9$, any finite LOCC adversarial coalition with no member at the claimed location that succeeds with probability at least $(1-10^{-7})^k$ must have total quantum-register size
\[
Q>\left\lfloor\frac{m}{k}\right\rfloor R_H\bigl(2\kappa(s-2.5)-1\bigr),
\qquad \kappa=0.09,
\]
as chosen in the proof. Indeed, partition the $m$ rounds into $k$ consecutive blocks whose lengths differ by at most one, and apply the same argument. The choice $k=\lceil\log_2^2\secparam\rceil$ in \Cref{thm:amplified_security-generalized} ensures that the success threshold is negligible.
\end{remark}

\paragraph{Generalizing the proof to \texorpdfstring{$d$}{d}-dimensions}
\label{sec:higher-dimensions}

We now give a generic lift from one-dimensional static-verifier security to $d$-dimensional mobile-verifier security.  Applying it to \Cref{thm:amplified_security-generalized} proves mobile-verifier security of the amplified protocol in every fixed dimension.

As specified in the adversary model after \Cref{def:qpv}, every outgoing quantum register remains counted with its sender and unavailable for reuse until the verifier receives or discards its state.

\begin{theorem}[Mobile-Verifier Security of the Protocol]
\label{thm:generic-reduction}
Let $d\geq2$ be fixed, and let $m$, $s$, and $R_H$ satisfy the assumptions of \Cref{thm:amplified_security-generalized}. The $d$-dimensional mobile-verifier version of the protocol in \Cref{fig:qpv-protocol-multi-round-unified}, defined in \Cref{def:qpv-d-dimensional-generalization}, has
\begin{equation}
\left\lfloor\frac{m}{\lceil\log_2^2(\secparam)\rceil}\right\rfloor R_H\bigl(2\kappa(s-2.5)-1\bigr)
\end{equation}
-qubit mobile-verifier security, where $\kappa>0.08$ is the universal constant from \Cref{thm:amplified_security-generalized}. In particular, it has $\Omega\!\left(smR_H/\log^2\secparam\right)$-qubit mobile-verifier security.
\end{theorem}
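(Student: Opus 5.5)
We will prove the statement by reduction to \Cref{thm:amplified_security-generalized}. Take any finite $d$-dimensional coalition against $\Pi'$ with total quantum-register size $Q$ and success probability $p$. We will build a one-dimensional coalition against the static-verifier protocol $\Pi$ with the same registers and success probability at least $p$. The one-dimensional theorem then forces $Q$ above the stated bound whenever $p\geq(1-10^{-7})^{\lceil\log_2^2\secparam\rceil}$.

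\emph{Normalizing the geometry.} First we apply the rigid motion of \Cref{def:qpv-d-dimensional-generalization}. After it, the claimed location is $0$ and the verifiers sit at $(-R,\vec 0)$ and $(R,\vec 0)$. Every verifier input travels along the axis, and all logical times and deadlines coincide with those of $\Pi$. Since rigid motions preserve distances, the coalition's timing constraints are unchanged.

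\emph{Projecting the coalition.} Next we map each adversary at $(x,\vec y)$ to the point $F_\varepsilon(x,\vec y):=x+\varepsilon\|\vec y\|_2^2$ on the line. We choose $\varepsilon>0$ small enough that two requirements hold, which is possible because the coalition is finite.
\begin{enumerate}
\item No image lands at $0$. Points with $\vec y=\vec 0$ already have $x\neq 0$, and for each point with $\vec y\neq\vec 0$ we only need to avoid one bad value of $\varepsilon$.
\item No relevant distance increases. That is, $|F_\varepsilon(p)-F_\varepsilon(q)|\leq\|p-q\|$ for every pair of adversaries, and $|F_\varepsilon(p)\mp R|\leq\|p\mp(R,\vec 0)\|$ for every adversary and verifier.
\end{enumerate}

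\emph{Checking the distance condition; this is the main obstacle.} For the verifier distances, write $r:=\|\vec y\|$. Then $\|p-(R,\vec0)\|=\sqrt{(x-R)^2+r^2}$. Also $|x-R+\varepsilon r^2|\leq|x-R|+\varepsilon r^2$, and this is at most $\sqrt{(x-R)^2+r^2}$ once $\varepsilon r^2$ is small compared with $r^2/(2\sqrt{(x-R)^2+r^2})$. Since the set of adversaries is finite, this gives a uniform bound on $\varepsilon$. For pairs, $|F_\varepsilon(p)-F_\varepsilon(q)|\leq|x_p-x_q|+\varepsilon\bigl|\|\vec y_p\|^2-\|\vec y_q\|^2\bigr|$. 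The second term is at most $\varepsilon(\|\vec y_p\|+\|\vec y_q\|)\|\vec y_p-\vec y_q\|$.
\begin{itemize}
\item When $\vec y_p\neq\vec y_q$, the right-hand side is again below $\sqrt{(x_p-x_q)^2+\|\vec y_p-\vec y_q\|^2}$ for small $\varepsilon$, by the same calculation.
\item When $\vec y_p=\vec y_q$, the two distances are equal.
\item When $\vec y_p\neq\vec y_q$ but $x_p=x_q$, the estimate is linear in $\|\vec y_p-\vec y_q\|$, so it still works.
\end{itemize}
This step needs care because the bound must hold for every pair at once. Finiteness is exactly what makes that possible, and I expect it to be the only delicate point.

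\emph{Simulation.} Each one-dimensional adversary runs the same circuit, with the same quantum registers, as its $d$-dimensional counterpart. Because distances do not increase, every classical message, verifier instruction and verifier qubit arrives no later than it did originally. The adversary buffers it until the original use time: classical data is held in free classical storage, and quantum inputs are held in the same registers that stored them before. Every response to a verifier is prepared at the original time. If its path is now shorter, it stays in its outgoing register and is released so that it arrives exactly at the original deadline. Outgoing registers are counted with the sender under the adversary model, so the quantum-register size remains $Q$.

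\emph{Matching the one-dimensional protocol rules.} The verifier-side rules of $\Pi$ must also be respected. This is where the condition that no image lands at $0$ is used: it keeps the simulated coalition admissible for $\Pi$. In addition, any input that would have reached $V_1$ in the original execution is still either intercepted or not intercepted, by the same operations. The joint distribution of everything the verifiers receive at each deadline is therefore identical in the two executions. So the acceptance probability is exactly $p$, and \Cref{thm:amplified_security-generalized} gives $Q>\left\lfloor\frac{m}{\lceil\log_2^2(\secparam)\rceil}\right\rfloor R_H\bigl(2\kappa(s-2.5)-1\bigr)$, which proves the theorem.
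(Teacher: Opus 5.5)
Your proposal is correct and follows essentially the same route as the paper. Both use the map $F_\varepsilon(x,\vec y)=x+\varepsilon\|\vec y\|_2^2$ with a small nonzero $\varepsilon$ that avoids the finitely many values sending an adversary to $0$, buffer early classical messages, delay the launch of responses from the same counted outgoing registers, and conclude by contraposition using \Cref{thm:amplified_security-generalized}.

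The differences are minor, and one point should be made explicit. You bound $\varepsilon$ directly via $\sqrt{u^2+v^2}-u\geq v^2/(2\sqrt{u^2+v^2})$, whereas the paper uses continuity together with the observation that equality at $\varepsilon=0$ forces $\vec y=\vec y'$. You should also state, as the paper does, that verifier qubits travel along the axis and are received only by on-axis adversaries, whose positions $F_\varepsilon$ fixes. They therefore arrive at exactly their original times, not merely no later, and this is what justifies holding them in the same quantum registers.
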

In the proof of \Cref{thm:generic-reduction}, the reduction from one-dimensional security to higher-dimensional security only requires the following two properties of the protocol.
\begin{enumerate}
\item\label{cond:dynamic-input-path-theorem-5} Each initialization or replacement qubit sent by $V_0$ is sent along the fixed one-way path from $V_0$, through the claimed location, to $V_1$. Before beginning each round, the verifiers wait until the time at which the qubits sent by $V_0$ would reach $V_1$. If $V_1$ receives any of these qubits, the verifiers abort, since the honest prover should have intercepted and stored them.
\item\label{cond:dynamic-response-order} The prescribed fixed-size responses are the only prover-to-verifier quantum messages that can affect acceptance, and every other prover-to-verifier quantum message is ignored or causes rejection. At the prescribed deadline, the selected verifier tests the response if it arrives then and otherwise rejects. If either verifier receives an additional quantum response during a timed round, the verifiers reject. The deadline action and any prescribed reveal finish before $V_0$ sends another input qubit and before the next timed round begins.
\end{enumerate}
\begin{remark}
More generally, let $\Pi$ be any one-dimensional two-verifier QPV protocol with claimed location $0$ and verifiers at $-R$ and $R$. If $\Pi$ satisfies conditions~\ref{cond:dynamic-input-path-theorem-5} and~\ref{cond:dynamic-response-order} and has $B(\secparam)$-qubit static-verifier security against arbitrary finite one-dimensional coalitions in the adversary model of \Cref{sec:adversary-model}, then its $d$-dimensional mobile-verifier version from \Cref{def:qpv-d-dimensional-generalization} has $B(\secparam)$-qubit mobile-verifier security against arbitrary finite $d$-dimensional coalitions in the same model for every fixed $d\geq2$.
\end{remark}

\begin{proof}[Proof of \Cref{thm:generic-reduction}]
We prove the theorem by contraposition. Let $\Pi$ be the one-dimensional protocol in \Cref{fig:qpv-protocol-multi-round-unified}, let $\Pi'$ be its $d$-dimensional mobile-verifier version, and let $B(\secparam)$ be the bound stated in the theorem. Suppose that there is a finite $d$-dimensional adversarial coalition $\widetilde{\mathcal A}$ against $\Pi'$ whose total quantum-register size is
\begin{equation}
Q\leq B(\secparam)
\end{equation}
and whose success probability is some non-negligible function $p=p(\secparam)$. We construct a finite one-dimensional adversarial coalition against $\Pi$ with the same quantum-register size $Q$ and success probability at least $p$.

Fix the claimed location for this execution. By translating and rotating coordinates, assume without loss of generality that the claimed location is the origin $\vec0\in\mathbb R^d$ and that the two verifiers lie on the first coordinate axis at
\begin{equation}
    V_0=(-R,\vec0)
    \qquad\text{and}\qquad
    V_1=(R,\vec0),
\end{equation}
where $R>0$. Write the location of every adversary $P_i$ as
\begin{equation}
    P_i=(x_i,\vec y_i)
    \in\mathbb R\times\mathbb R^{d-1}.
\end{equation}
Since $\widetilde{\mathcal A}$ is a valid coalition, $(x_i,\vec y_i)\neq(0,\vec0)$ for every $i$.

For $\varepsilon\in\mathbb R$, define
\begin{equation}
    F_\varepsilon(x,\vec y)
    :=x+\varepsilon\|\vec y\|_2^2.
\end{equation}
In particular,
\begin{equation}
    F_\varepsilon(V_0)=-R,
    \qquad
    F_\varepsilon(V_1)=R,
    \qquad
    F_\varepsilon(\vec0)=0.
\end{equation}
Thus, $F_\varepsilon$ preserves the two verifier locations and the claimed location.

We show that $\varepsilon$ can be chosen so that no distance between any two parties increases. Let $\mathcal S$ be the finite set consisting of the claimed location, the two verifier locations, and all adversarial locations. For any
\begin{equation}
    P=(x,\vec y),
    \qquad
    P'=(x',\vec y')
    \in\mathcal S,
\end{equation}
we have at $\varepsilon=0$
\begin{equation}
    |F_0(P)-F_0(P')|
    =
    |x-x'|
    \leq
    \sqrt{(x-x')^2+\|\vec y-\vec y'\|_2^2}
    =
    \|P-P'\|_2.
\end{equation}
If this inequality is strict, continuity shows that it remains strict for all sufficiently small $|\varepsilon|$. If equality holds at $\varepsilon=0$, then $\vec y=\vec y'$, and therefore
\begin{equation}
    F_\varepsilon(P)-F_\varepsilon(P')
    =
    x-x'
\end{equation}
for every $\varepsilon$. Since $\mathcal S$ is finite, there is a common $\varepsilon_*>0$ for which the distance inequality holds for every pair in $\mathcal S$ whenever $|\varepsilon|<\varepsilon_*$.

It remains to ensure that no adversary is mapped to the claimed location. If $\vec y_i=\vec0$, then $x_i\neq0$, and $F_\varepsilon(P_i)=x_i\neq0$. If $\vec y_i\neq\vec0$, then $F_\varepsilon(P_i)=0$ can hold only for
\begin{equation}
    \varepsilon=-\frac{x_i}{\|\vec y_i\|_2^2}.
\end{equation}
There are only finitely many such values. We may therefore choose a nonzero $\varepsilon$ with $|\varepsilon|<\varepsilon_*$ while avoiding all of them. Fix such a value. Then
\begin{equation}
    F_\varepsilon(P_i)\neq0
    \qquad\text{for every adversary }P_i,
\end{equation}
and no relevant distance has increased.

We now define the one-dimensional coalition $\mathcal A$. For every adversary $P_i$ of $\widetilde{\mathcal A}$, place a corresponding adversary on the verifier line at coordinate $F_\varepsilon(P_i)$. It holds the same quantum and classical registers, begins with the same joint state, and runs the same local strategy. If several adversaries are mapped to the same location, they may remain separate or be merged while keeping their registers separate. This does not increase the total quantum-register size. No member of $\mathcal A$ is at the claimed location.

The distance inequality allows $\mathcal A$ to reproduce the timed execution of $\widetilde{\mathcal A}$. Every message between adversaries is classical. The mapped sender transmits it at the original time, and if it arrives early, the receiver stores it without using it until its original arrival time. We treat every public classical verifier message in the same way. A directed slot instruction travels along the verifier axis, so every original recipient lies on that axis and is fixed by $F_\varepsilon$. An adversary newly mapped onto the instruction path simply ignores an instruction that its original strategy did not receive. Whenever the simulated strategy sends a classical message to a verifier, the mapped sender delays it by the decrease in travel time so that the verifier receives it at the original time.

For every $P_i$, let $A_i^{\mathrm q}$ contain its local and outgoing quantum registers, and let their total size be $Q_i$ qubits. Thus
\begin{equation}
    \sum_iQ_i=Q.
\end{equation}

Consider any verifier-to-prover quantum input, including every initialization input and every later replacement. Its path is the verifier-axis segment from $V_0$ to $V_1$. Every original adversary that acts on the traveling qubit therefore lies at $(x_i,\vec0)$ and satisfies $F_\varepsilon(x_i,\vec0)=x_i$. Its location, registers, interaction time, and operation on the qubit are consequently unchanged. If several adversaries act on the same qubit while allowing it to continue, they do so in the same order and at the same times in both executions. An off-axis adversary that is mapped onto the segment simply ignores the qubit. The qubit is stopped in the mapped execution exactly when it was stopped in the original execution; otherwise, it reaches $V_1$ at the same time and causes rejection in both.

Now consider a quantum message sent by $P_i$ to a verifier. Let $t_0$ be the time at which its state enters an outgoing register, let $t_1$ be its launch time, and let $t_2$ be the time at which that register is released. Let $\delta\geq0$ be the decrease in the distance from $P_i$ to the verifier after applying $F_\varepsilon$. The mapped adversary places the same state in the same outgoing register at time $t_0$, so its local source register is freed at the original time. If the original message reaches the verifier, the mapped adversary keeps the state unchanged until time $t_1+\delta$ and then launches it. The shorter path makes it reach the verifier and release the outgoing register at time $t_2$. If the original message is discarded without being received, the mapped adversary keeps it unchanged and discards it at time $t_2$. Thus the outgoing register is occupied and released at exactly the same times as in the original execution.

We apply this construction independently to every outgoing quantum message. Since each message uses the same counted register as before, no additional quantum register is required, even when several messages are waiting or traveling at the same time. The mapped execution reproduces every verifier receipt, discard, and decision, so the one-dimensional adversarial coalition succeeds with probability at least $p$.

Thus, the constructed adversarial coalition uses the same $Q$ qubits as the original adversarial coalition and has success probability at least $p$. Since $Q\leq B(\secparam)$, this contradicts the static-verifier security established in \Cref{thm:amplified_security-generalized}, which completes the proof of the theorem.
\end{proof}

\fi

\bibliographystyle{alpha}
\bibliography{references}

\appendix

\section{Proof of basic quantum information lemmas}
\label{sec:appendix-basic-proofs}

\begin{proof}[Proof of \Cref{lemma:direct-sum}]\label{pf:lemma:direct-sum}
We prove this by establishing both the upper and lower bounds.

For the upper bound, take for each value $x$ an arbitrary extension $\omega^x_{ABE_x}$ of $\omega^x_{AB}$, embed all $E_x$'s into a common environment $E$, and add a classical copy of $x$ to the environment. This gives an extension for which the conditional mutual information equals
\[
    \sum_x p_x I(A:B|E)_{\omega^x}.
\]
Taking the infimum over these extensions gives
\[
    E_{\rm sq}(AX:B)_\omega
    \le
    \sum_x p_x E_{\rm sq}(A:B)_{\omega^x}.
\]

For the lower bound, let $\Omega_{XABE}$ be any extension of $\omega_{XAB}$. Dephase $X$ in its classical basis. Since this is a local operation on Alice's side, conditional mutual information cannot increase. For the dephased state,
\[
    I(AX:B|E)
    =
    I(X:B|E)+I(A:B|EX)
    \ge
    \sum_x p_x I(A:B|E)_{\Omega^x}.
\]
Each $\Omega^x_{ABE}$ is an extension of $\omega^x_{AB}$, so the right-hand side is at least $2\sum_x p_x E_{\rm sq}(A:B)_{\omega^x}$. Taking the infimum over all extensions, gives the reverse inequality.
\end{proof}

\begin{proof}[Proof of \Cref{lem:quantum_dim_bound}]\label{pf:lem:quantum_dim_bound}
Since the state $\rho_{A_1A_2}$ is a classically-quantum state with orthogonal classical flags in $C$, by \Cref{lemma:direct-sum}, the total squashed entanglement is exactly
\begin{equation}
    E_{sq}(A_1:A_2)_{\rho} = E_{sq}(C A_1':A_2)_{\rho} = \sum_{c} p_c E_{sq}(A_1':A_2)_{\rho^{(c)}}.
\end{equation}
Since squashed entanglement for any bipartite state $\omega_{XY}$ is at most the von Neumann entropy of either local subsystem, which is in turn upper-bounded by the logarithm of the dimensions of either local subsystem, we get that for every flag value $c$, the squashed entanglement across $A_1':A_2$ is bounded as
\begin{equation}
    E_{sq}(A_1':A_2)_{\rho^{(c)}} \le S(A_1')_{\rho^{(c)}} \le \log_2 \dim(\mathcal{H}_{A_1'}).
\end{equation}
Averaging over $c$ we get
\[
E_{\rm sq}(A_1:A_2)_\rho\le \log \dim \mathcal H_{A'_1},
\]
which completes the proof.
\end{proof}

\section{Proofs of the Entanglement-Consumption Results}
\label{sec:appendix-proofs-of-entanglement-lemmas}

\subsection{Entanglement estimates}
\begin{proof}[Proof of \Cref{lemma:lower_bound-closeness}]\label{pf:lemma:lower_bound-closeness}
 It is known that squashed entanglement upper-bounds one-way distillable entanglement, see Ref.~\cite{christandl2004squashed}, and by the hashing inequality, see Ref.~\cite{devetak2005distillation}, one-way distillable entanglement upper-bounds coherent information. Thus, for any bipartite state $\tau_{OW}$, we have
$$E_{sq}(O:W)_\tau \ge E_D^{\to}(O:W)_\tau \ge I_c(O \rangle W)_\tau = S(W)_\tau - S(OW)_\tau.$$

We bound the two von Neumann entropy terms individually using the Fannes-Audenaert inequality~\cite{audenaert2007sharp}, which states that for any two $d$-dimensional (or, equivalently $\log(d)$-qubit) states $\rho$ and $\sigma$ with trace distance $T\in [0, 1 - 1/d]$,
\begin{equation}\label{eq:fannes-audenart}
|S(\rho) - S(\sigma)| \le T \log_2(d - 1) + h(T).
\end{equation}

Since $\operatorname{TD}(\tau_{OW},\Phi^+)\le \epsilon'$, by Eq.~\eqref{eq:fannes-audenart} on the two-qubit system $OW$, we get
\[
S(OW)_\tau \le \epsilon'\log 3+h(\epsilon').
\]
Next, since taking the partial trace over system $O$ does not increase the trace-distance,  by data processing inequality, we get
\[
\operatorname{TD}(\tau_W,I/2)\le \epsilon',
\]
and hence by Eq.~\eqref{eq:fannes-audenart}, on the single-qubit system $W$ we get,
\[
S(W)_\tau \ge 1-h(\epsilon').
\]
Combining the last two equations completes the proof:
\[
E_{\rm sq}(O:W)_\tau
\ge
1-\epsilon'\log 3-2h(\epsilon')
=
1-f(\epsilon').
\]
\end{proof}

\begin{proof}[Proof of \Cref{lem:bb84_fidelity}]\label{pf:lem:bb84_fidelity}
For any specific state $\rho_{AB}^{(x)}$ in the ensemble, let $F_x = \langle \Phi^+ | \rho_{AB}^{(x)} | \Phi^+ \rangle$ be its fidelity with the target state. We express its diagonal elements in the standard Bell basis $\{|\Phi^+\rangle, |\Phi^-\rangle, |\Psi^+\rangle, |\Psi^-\rangle\}$ as a probability distribution, which sums to $1$: $F_x + P(\Phi^-) + P(\Psi^+) + P(\Psi^-) = 1$.

The random BB84 test accepts if the measurement outcomes match. The projectors for passing in the $ZZ$ and $XX$ bases are respectively $\Pi_Z = |\Phi^+\rangle\langle \Phi^+| + |\Phi^-\rangle\langle \Phi^-|$ and $\Pi_X = |\Phi^+\rangle\langle \Phi^+| + |\Psi^+\rangle\langle \Psi^+|$. The success probability $p_x$ is the average of passing in either basis:
\begin{equation}
    p_x = \frac{1}{2} \text{Tr}\Big(\rho_{AB}^{(x)} (\Pi_Z + \Pi_X)\Big) = \frac{1}{2} \Big( 2F_x + P(\Phi^-) + P(\Psi^+) \Big).
\end{equation}

From normalization, $P(\Phi^-) + P(\Psi^+) \le 1 - F_x$. Substituting this upper bound yields $p_x \le \frac{1}{2} (F_x + 1)$. Rearranging gives the fidelity bound for the state indexed by $x$: $F_x \ge 2p_x - 1$. Averaging this lower bound over the distribution $q_x$ yields the average fidelity of the ensemble: $F_{avg} \ge \sum_x q_x (2p_x - 1) = 2p - 1$.
\end{proof}

Next we prove the required entanglement upper bound for a pure state.
\begin{proof}[Proof of \Cref{lemma:pure-state-flow}]\label{pf:lemma:pure-state-flow}
For the pure state $\omega_{O A_1 A_2}$, $E_{sq}(O A_1 : A_2)_\omega = S(A_2)$. Let $E$ be an arbitrary extension of $A_1 A_2$. By Stinespring's Dilation, $E$ is generated by an isometry $V : O \to E \otimes O_{\text{rem}}$. Thus, the global state on $E O_{\text{rem}} A_1 A_2$ is pure.

We expand the conditional mutual information:
\begin{align}
    I(A_1 : A_2 | E) &= S(A_1 | E) + S(A_2 | E) - S(A_1 A_2 | E) \\
    &= \Big[S(A_1 E) - S(E)\Big] + S(A_2 | E) - \Big[S(A_1 A_2 E) - S(E)\Big] \\
    &= S(A_1 E) - S(A_1 A_2 E) + S(A_2 | E).
\end{align}
Since the global 4-part state is pure, subsystem entropies equal their complements: $S(A_1 E) = S(A_2 O_{\text{rem}})$ and $S(A_1 A_2 E) = S(O_{\text{rem}})$. Substituting these:
\begin{equation}
    I(A_1 : A_2 | E) = S(A_2 O_{\text{rem}}) - S(O_{\text{rem}}) + S(A_2 | E) = S(A_2 | O_{\text{rem}}) + S(A_2 | E).
\end{equation}
Rewriting conditional entropies as $S(X|Y) = S(X) - I(X:Y)$ yields:
\begin{equation}
    I(A_1 : A_2 | E) = 2S(A_2) - I(A_2 : O_{\text{rem}}) - I(A_2 : E).
\end{equation}
By the Data Processing Inequality under the isometry $V$, $I(A_2 : O_{\text{rem}}) \le I(O : A_2)$ and $I(A_2 : E) \le I(O : A_2)$. Thus,
\begin{equation}
    I(A_1 : A_2 | E) \ge 2S(A_2) - 2I(O : A_2) \implies \frac{1}{2} I(A_1 : A_2 | E) \ge S(A_2) - I(O : A_2).
\end{equation}
Taking the infimum over all extensions $E$ yields $E_{sq}(A_1 : A_2)_\omega \ge S(A_2) - I(O : A_2)$. Substituting $S(A_2) = E_{sq}(O A_1 : A_2)_\omega$ and rearranging completes the proof.
\end{proof}

Next we prove the mutual-information inequality used in the initial recovery bound.
\begin{proof}[Proof of \Cref{lemma:superadditivity}]\label{pf:lemma:superadditivity}
For each $j$, by definition of conditional mutual information, 
\[
I(O_j:A|O_{<j})
=
S(O_j|O_{<j})-S(O_j|AO_{<j})
\ge
S(O_j|O_{<j})-S(O_j|A),
\]
where the last inequality follows from strong subadditivity of von Neumann entropy. Therefore, by adding and subtracting $S(O_j)$, we conclude
\[
I(O_j:A|O_{<j})
\ge
I(O_j:A)-\bigl(S(O_j)-S(O_j|O_{<j})\bigr).
\]

Next, summing over $j$ and using $\sum_{j=1}^k S(O_j|O_{<j})=S(O)$, we get
\[\sum_{j=1}^k I(O_j:A|O_{<j})\geq \sum_{j=1}^k I(O_j:A)
-\left(\sum_{j=1}^k S(O_j)-S(O)\right)
=
\sum_{j=1}^k I(O_j:A)-\Delta_O.\]

Finally, by the chain rule of mutual information, we conclude,
\[
I(O:A)=\sum_{j=1}^k I(O_j:A|O_{<j})\ge \sum_{j=1}^k I(O_j:A)
-\left(\sum_{j=1}^k S(O_j)-S(O)\right)
=
\sum_{j=1}^k I(O_j:A)-\Delta_O.
\]

\end{proof}

Finally, we combine the preceding two bounds to control the entanglement of the initial state.

\begin{proof}[Proof of \Cref{lemma:bounding-initial-state-k-extraction}]\label{pf:lemma:bounding-initial-state-k-extraction}
Let $\rho^{(c)} = |\psi_c\rangle\langle \psi_c|_{OA_1'A_2}$. Since each conditional state $\rho^{(c)}$ is pure, exact entropy conservation holds: $I(O : A_2)_{\rho^{(c)}} = 2S(O)_{\rho^{(c)}} - I(O : A_1')_{\rho^{(c)}}$. Taking the expectation over $C$ yields the conditional entropy conservation:
\begin{equation}
    I(O : A_2 | C)_\rho = 2S(O | C)_\rho - I(O : A_1' | C)_\rho.
\end{equation}
Since $C$ is a classical subregister of $A_1$, conditioning on $C$ ensures $I(O : A_1' | C)_\rho = I(O : A_1 | C)_\rho$. Thus
\begin{equation}\label{eq:cond-entropy}
    I(O : A_2 | C)_\rho = 2S(O | C)_\rho - I(O : A_1 | C)_\rho.
\end{equation}

By \Cref{lemma:superadditivity}, we can lower bound $I(O : A_1)_\rho$ as 
\begin{equation}
    I(O : A_1)_\rho \ge \sum_{j=1}^k I(O_j : A_1)_\rho - \Delta_O,
\end{equation}
where the correlation penalty $\Delta_O$ is as defined in \Cref{lemma:superadditivity}, i.e., $\Delta_O:= \sum_{j=1}^k S(O_j)_\rho - S(O)_\rho$. 

Since $A_1 = C \otimes A_1'$, the mutual information expands via the chain rule as $I(X : A_1)_\rho = I(X : C)_\rho + I(X : A_1' | C)_\rho$. Furthermore, conditioning on the classical flag $C$ ensures $I(X : A_1' | C)_\rho = I(X : A_1 | C)_\rho$. Thus, $I(X : A_1)_\rho = I(X : C)_\rho + I(X : A_1 | C)_\rho$. Applying this to both the joint and marginal terms gives:
\begin{equation}
    I(O : C)_\rho + I(O : A_1 | C)_\rho \ge \sum_{j=1}^k I(O_j : C)_\rho + \sum_{j=1}^k I(O_j : A_1 | C)_\rho - \Delta_O.
\end{equation}
Rearranging this isolates our conditional mutual information:
\begin{equation}\label{eq:isolated-cond-mi}
    I(O : A_1 | C)_\rho \ge \sum_{j=1}^k I(O_j : A_1 | C)_\rho + \sum_{j=1}^k I(O_j : C)_\rho - I(O : C)_\rho - \Delta_O.
\end{equation}

Substitute \eqref{eq:isolated-cond-mi} into the entropy conservation equation \eqref{eq:cond-entropy}:
\begin{equation}
    I(O : A_2 | C)_\rho \le 2S(O|C)_\rho - \sum_{j=1}^k I(O_j : A_1 | C)_\rho - \sum_{j=1}^k I(O_j : C)_\rho + I(O : C)_\rho + \Delta_O.
\end{equation}

We now expand the classical mutual informations using $I(X : C)_\rho = S(X)_\rho - S(X|C)_\rho$. By grouping the non-conditional entropies, the correlation penalty $\Delta_O$ perfectly cancels the mismatch:
\begin{align}
    I(O : A_2 | C)_\rho
    &\le 2S(O|C)_\rho - \sum_{j=1}^k I(O_j : A_1 | C)_\rho \nonumber \\
    &\quad - \left(\sum_{j=1}^k S(O_j)_\rho - \sum_{j=1}^k S(O_j|C)_\rho\right)
      + \left(S(O)_\rho - S(O|C)_\rho\right) + \Delta_O \nonumber \\
    &= S(O|C)_\rho + \sum_{j=1}^k S(O_j|C)_\rho - \sum_{j=1}^k I(O_j : A_1 | C)_\rho \nonumber \\
    &\quad + \underbrace{\left[ -\sum_{j=1}^k S(O_j)_\rho + S(O)_\rho \right]}_{-\Delta_O} + \Delta_O \nonumber \\
    &= S(O|C)_\rho + \sum_{j=1}^k S(O_j|C)_\rho - \sum_{j=1}^k I(O_j : A_1 | C)_\rho. 
    \label{eq:almost-there}
\end{align}

For each specific classical value $c$, by applying subadditivity on $\rho^{(c)} = |\psi_c\rangle\langle \psi_c|_{OA_1'A_2}$, followed by averaging over all values of $C$, we get $S(O|C)_\rho = \sum_c p_c S(O)_{\rho^{(c)}} \le \sum_c p_c \sum_{j=1}^k S(O_j)_{\rho^{(c)}} = \sum_{j=1}^k S(O_j|C)_\rho$. Therefore, the term $S(O|C)_\rho - \sum_{j=1}^k S(O_j|C)_\rho \le 0$, which in combination with the previous equation Eq.~\eqref{eq:almost-there}, yields, 
\begin{equation}\label{eq:final-mi-bound}
    I(O : A_2 | C)_\rho \le 2\sum_{j=1}^k S(O_j|C)_\rho - \sum_{j=1}^k I(O_j : A_1 | C)_\rho =\sum_{j=1}^k \left( 2S(O_j|C)_\rho - I(O_j : A_1 | C)_\rho \right).
\end{equation}

Next, we lower bound $I(O_j : A_1 | C)_\rho$. For any specific classical value $c$, the state on $O A_1' A_2$ is $\rho^{(c)}$. The prover applies the local unitary $U_c^{(j)}$ to the subsystem $A_1'$, resulting in an intermediate state $\hat{\rho}^{(c,j)}$. Since mutual information is invariant under local unitary operations on the respective subsystems, we have the exact equality:
\begin{equation}
    I(O_j : A_1')_{\rho^{(c)}} = I(O_j : A_1')_{\hat{\rho}^{(c,j)}}.
\end{equation}
The target register $R_j$ is a subregister of $A_1'$. Discarding the remaining subsystem of $A_1'$ constitutes a completely positive trace-preserving (CPTP) map (the partial trace), resulting in the final extracted state $\tilde{\rho}^{(c,j)}$ on $O_j R_j$. By the quantum data processing inequality, mutual information is monotonically non-increasing under local CPTP maps. Therefore,
\begin{equation}
    I(O_j : A_1')_{\hat{\rho}^{(c,j)}} \ge I(O_j : R_j)_{\tilde{\rho}^{(c,j)}}.
\end{equation}
Chaining these gives $I(O_j : A_1')_{\rho^{(c)}} \ge I(O_j : R_j)_{\tilde{\rho}^{(c,j)}}$. By the Araki-Lieb triangle inequality~\cite{araki1970entropy} applied to the bipartite state on $O_j R_j$, we get
\begin{equation}
    I(O_j : R_j)_{\tilde{\rho}^{(c,j)}} \ge 2S(O_j)_{\tilde{\rho}^{(c,j)}} - 2S(O_j R_j)_{\tilde{\rho}^{(c,j)}}.
\end{equation}

Since $U_c^{(j)}$ and the partial trace act strictly on $A_1'$, the local entropy of $O_j$ is invariant, meaning $S(O_j)_{\tilde{\rho}^{(c,j)}} = S(O_j)_{\rho^{(c)}}$. Taking the expectation over $C$, we get
\begin{equation}
    I(O_j : A_1 | C)_\rho \ge 2S(O_j | C)_\rho - 2\EE_c \left[S(O_jR_j)_{\tilde{\rho}^{(c,j)}}\right].
\end{equation}

Substitute this lower bound into \eqref{eq:final-mi-bound}, we get
\begin{align}
    I(O : A_2 | C)_\rho &\le \sum_{j=1}^k \left[ 2S(O_j|C)_\rho - \left( 2S(O_j | C)_\rho - 2\EE_c \left[S(O_jR_j)_{\tilde{\rho}^{(c,j)}}\right] \right) \right] \nonumber \\
    &= 2 \sum_{j=1}^k \EE_c \left[S(O_jR_j)_{\tilde{\rho}^{(c,j)}}\right]. \label{eq:very-final-mi-bound}
\end{align}

For any $j$ and possible classical value $c$ of $C$, let $\delta^c_j$ denote the trace distance of $\tilde{\rho}^{(c,j)}_{O_jR_j}$ from the Bell state. Applying the Fannes-Audenaert theorem (\cite{audenaert2007sharp}), we get that for every $j\in [k]$ and every possible classical value $c$ of $C$, if $\delta^c_j\leq 3/4$, $|S(O_jR_j)_{\tilde{\rho}^{(c,j)}}|=|S(O_jR_j)_{\tilde{\rho}^{(c,j)}}-S(\Phi^+)|\leq g(\delta^c_j)$, and hence by definition of $g_{safe}$ and $g$, $S(O_jR_j)_{\tilde{\rho}^{(c,j)}}\leq g(\delta^c_j)=g_{safe}(\delta^c_j)$ for $0\leq \delta^c_j\leq 3/4$. Moreover note that for $\delta^c_j> 3/4$ by definition, $g_{safe}(\delta^c_j)=g(3/4)$ and it is easy to see $g(3/4)=2$. Since the $O_jR_j$ is a two qubit system $S(O_jR_j)_{\tilde{\rho}^{(c,j)}}\leq 2=g_{safe}(\delta^c_j)$ for $\delta^c_j> 3/4$. Therefore, we conclude that for any $j\in [k]$, and for any value of $\delta^c_j\in [0,1]$, 
\[S(O_jR_j)_{\tilde{\rho}^{(c,j)}}\leq g_{safe}(\delta^c_j).\]

Averaging over $c$ and using Jensen's inequality for the concave nondecreasing function $g_{\rm safe}$, in combination with the last equation, yields
\begin{equation}
    \EE_c \left[S(O_jR_j)_{\tilde{\rho}^{(c,j)}}\right]  \le \EE_c \left[g_{safe}(\delta^c_j)\right]\leq g_{safe}(\delta_j).
\end{equation}
where the last inequality holds Since $g_{safe}$ is globally concave and monotonically non-decreasing, and $\mathbb{E}_c[\delta^c_j] \le \delta_j$.

Substituting this into \eqref{eq:very-final-mi-bound} yields
\begin{equation}
     I(O : A_2 | C)_\rho \le 2 \sum_{j=1}^k \EE_c \left[S(O_jR_j)_{\tilde{\rho}^{(c,j)}}\right] \leq \sum_{j=1}^k 2g_{safe}(\delta_j).
\end{equation}

Combined with \Cref{lemma:pure-state-flow} for the conditional state, we conclude that for any possible classical value $c$ of $C$,
\[E_{sq}(O A_1' : A_2)_{\rho^{(c)}} \le E_{sq}(A_1' : A_2)_{\rho^{(c)}} +I(O:A_2)_{\rho^{(c)}}.\]

For a fixed value $c$, the classical register $C$ is in the pure, deterministic, and unentangled product state $|c\rangle\langle c|_C$. Appending a pure product state to a local subsystem does not change the squashed entanglement across any bipartition. Therefore, appending $C$ to $A_1'$ yields the exact equalities $E_{sq}(O A_1' : A_2)_{\rho^{(c)}} = E_{sq}(O A_1 : A_2)_{\rho^{(c)}}$ and $E_{sq}(A_1' : A_2)_{\rho^{(c)}} = E_{sq}(A_1 : A_2)_{\rho^{(c)}}$. Taking the expectation over $C$, we conclude
\begin{align}
    &\EE_c E_{sq}(O A_1 : A_2)_{\rho^{(c)}} \leq \EE_c E_{sq}(A_1 : A_2)_{\rho^{(c)}} +\EE_c I(O:A_2)_{\rho^{(c)}} \\
    &\implies \EE_c E_{sq}(O A_1 : A_2)_{\rho^{(c)}} \leq \EE_c E_{sq}(A_1 : A_2)_{\rho^{(c)}} + \sum_{j=1}^k 2g_{safe}(\delta_j).
\end{align}

Next, by \Cref{lemma:direct-sum}, $\EE_c E_{sq}(O A_1 : A_2)_{\rho^{(c)}}=E_{sq}(O A_1 : A_2)_\rho$, and $\EE_c E_{sq}(A_1 : A_2)_{\rho^{(c)}}=E_{sq} (A_1:A_2)_\rho$, substituting which in the last equation yields,
\begin{equation}
    E_{sq}(O A_1 : A_2)_\rho \leq E_{sq} (A_1:A_2)_\rho + \sum_{j=1}^k 2g_{safe}(\delta_j),
\end{equation}
which completes the proof.
\end{proof}

\subsection{Proofs of the entanglement-loss theorem and its corollaries}

\begin{proof}[Proof of \Cref{theorem:k-extraction-entanglement-consumption}]\label{pf:theorem:k-extraction-entanglement-consumption}
By the hypothesis that $U_j$ does not scramble the classical flag register, i.e., $U_j$ can be written as $\sum_{c}\ket{c}\bra{c}_C\otimes U^{(c)}_j$, the preconditions required for \Cref{lemma:bounding-initial-state-k-extraction} are satisfied.

Since $C$ is contained in $A_1$, the squashed entanglement decomposes as a direct sum over the conditional pure states: $E_{sq}(A_1 : A_2 | C)_\rho = \sum p_c E_{sq}(A_1' : A_2)_{\psi_c} = E_{sq}(A_1 : A_2)_\rho$ and $E_{sq}(O A_1 : A_2 | C)_\rho = \sum p_c E_{sq}(O A_1' : A_2)_{\psi_c} = E_{sq}(O A_1 : A_2)_\rho$.

Since the target Bell state $\Phi^+$ is pure, fidelity is linear over convex combinations. Thus, the global fidelity is exactly the expected conditional fidelity. For a fixed classical record $c$ and target pair $j$, let $F_{j,c}$ be the fidelity of the recovered state. The expected trace distance is bounded by $\delta_j = \mathbb{E}_c[TD_{j,c}] \le \sum_c p_c \sqrt{1 - F_{j,c}}$. Applying \Cref{lemma:bounding-initial-state-k-extraction} to the initial state $\rho$ and substituting the direct sums gives:
\begin{equation}
    E_{sq}(A_1 : A_2)_\rho \ge E_{sq}(O A_1 : A_2)_\rho - 2k \left[ \frac{1}{k} \sum_{j=1}^k g_{safe}(\delta_j) \right].
\end{equation}

Note that the composition of a globally concave, non-decreasing function with another globally concave decreasing function remains globally concave and non-increasing. Since $y(F) = \sqrt{1 - F}$ is concave and monotonically decreasing with respect to $F$ for $F\in (-\infty,1]$, and $g_{safe}(y)$ is concave and strictly non-decreasing in $[0,1]$, their composition $H(F) = g_{safe}(\sqrt{1 - F})$ is a concave and non-increasing function of $F\in [0,1]$. Applying Jensen's inequality moves the summation inside the function due to the concavity of $H$.
\begin{equation}
\begin{aligned}
    \frac{1}{k} \sum_{j=1}^k g_{safe}(\delta_j)
    &\le \frac{1}{k} \sum_{j=1}^k g_{safe}\Bigg(\sum_c p_c \sqrt{1 - F_{j,c}}\Bigg) \\
    &\le g_{safe}\Bigg(\sqrt{1 - \sum_c p_c \frac{1}{k} \sum_{j=1}^k F_{j,c}}\Bigg).
\end{aligned}
\end{equation}

By the theorem's initial hypothesis and the linearity of fidelity, the globally averaged fidelity is lower-bounded by $1 - \epsilon^2$. Since $H(F)$ is monotonically decreasing with respect to $F$, substituting $F \ge 1 - \epsilon^2$ yields an upper bound on the penalty:
\begin{equation}
    g_{safe}\Big(\sqrt{1 - (1 - \epsilon^2)}\Big) = g_{safe}(\epsilon).
\end{equation}
Since $\epsilon \le 3/4$, $g_{safe}(\epsilon) = g(\epsilon)$. Combining the preceding three displays, we obtain that the initial state is bounded by $E_{sq}(A_1 : A_2)_\rho \ge E_{sq}(O A_1 : A_2)_\rho - 2k \cdot g(\epsilon)$.

By LOCC monotonicity, $E_{sq}(O A_1 : A_2)_\rho \ge E_{sq}(O A_1 : A_2)_\tau$. By monogamy, $E_{sq}(O A_1 : A_2)_\tau \ge E_{sq}(A_1 : A_2)_\tau + E_{sq}(O : A_2)_\tau$. Chaining these provides the flow 
\begin{equation}
\Delta E_{sq} \ge E_{sq}(O : A_2)_\tau - 2k \cdot g(\epsilon).
\end{equation}

To bound the final target entanglement, we use monogamy across the $k$ qubit registers in $O$ to get, 
\begin{equation}
E_{sq}(O : A_2)_\tau \ge \sum_j E_{sq}(O_j : A_2)_\tau.
\end{equation}
 Let $j\in [k]$ be arbitrary. By local unitary invariance under $V_j$ and partial trace monotonicity, we get that 
\[E_{sq}(O_j : A_2)_\tau \ge E_{sq}(O_j : W_j)_{\tilde{\tau}^{(j)}}.\]

Let $F'_{j}$ be the final fidelity of $\tilde{\tau}^{(j)}_{O_jW_j}$ with the Bell-pair state. Applying \Cref{lemma:lower_bound-closeness} via the trace distance bound $\sqrt{1 - F'_{j}}$ yields a lower bound as 
\[ E_{sq}(O_j : W_j)_{\tilde{\tau}^{(j)}}\geq 1 - f\Big(\sqrt{1 - F'_{j}}\Big) ,\]
 provided $\sqrt{1 - F'_{j}}\leq 1/2.$ Since $f_{safe}(\epsilon')=f(\epsilon')$ for $\epsilon'\leq 1/2$, and the fact that squashed entanglement is always non-negative, and $1-f_{safe}(\epsilon')=1-f(1/2)<0$ for $\epsilon'>1/2$, we can substitute the non-decreasing safe extension $f_{safe}$ in the last lower bound instead of $f$ without violating the inequality for any value of $\sqrt{1 - F'_{j}}\in [0,1]$, i.e.,  for any value of $F'_{j}$. Hence we conclude that for any value of $F'_{j}$, 
 \[E_{sq}(O_j : A_2)_\tau \ge 1 - f_{safe}\Big(\sqrt{1 - F'_{j}}\Big).\]
 Since $j\in [k]$ was arbitrary, combining the last equation with the displayed monogamy bound gives
\begin{equation}
    E_{sq}(O : A_2)_\tau \ge \sum_{j=1}^k \Big[ 1 - f_{safe}\Big(\sqrt{1 - F'_{j}}\Big) \Big] = k - \sum_{j=1}^k f_{safe}\Big(\sqrt{1 - F'_{j}}\Big).
\end{equation}

By the exact same calculus principle that we applied for $g_{safe}$ earlier in the proof, $f_{safe}(\sqrt{1 - F})$ is a globally concave function of fidelity. Applying Jensen's inequality using the concavity of $f_{safe}$ we get,
\begin{align}
    E_{sq}(O : A_2)_\tau
    &\ge k - \sum_{j=1}^k f_{safe}\Big(\sqrt{1 - F'_{j}}\Big) \nonumber \\
    &\ge k-kf_{safe}\left(\sqrt{1-\frac{\sum_{j}F'_{j}}{k}}\right) \nonumber \\
    &\ge k-kf_{safe}\left(\sqrt{1-(1-{\epsilon'}^2)}\right)
     = k(1-f_{safe}(\epsilon')).
\end{align}
where the last inequality follows from the hypothesis of the theorem that the final average fidelity (after applying $V_j$'s) with the bell pair state is at least $ 1 - (\epsilon')^2$ where $\epsilon'\leq 1/2$.

Since $\epsilon' \le 1/2$, $f_{safe}(\epsilon') = f(\epsilon')$. Combining this with the last inequality and the displayed flow inequality completes the proof, i.e.,
\begin{equation}
    \Delta E_{sq} \ge k \Big[ 1 - 2g(\epsilon) - f_{safe}(\epsilon') \Big]= k \Big[ 1 - 2g(\epsilon) - f(\epsilon') \Big].
\end{equation}
\end{proof}

\begin{proof}[Proof of \Cref{cor:k-extraction-entanglement-consumption}]\label{pf:cor:k-extraction-entanglement-consumption}
We prove the corollary by reducing the local CPTP recovery maps to local unitary recovery maps on an extended Hilbert space.

For each initial recovery map $\mathcal E_j$ acting on $A_1$, let $E_j$ be a local Stinespring environment, initialized to a pure state $\ket{0}_{E_j}$. For each final recovery map $\mathcal F_j$ acting on $A_2$, let $F_j$ be a local Stinespring environment, initialized to a pure state $\ket{0}_{F_j}$. Define
\[
    E_{\mathrm{init}}:=E_1\cdots E_k,
    \qquad
    F_{\mathrm{final}}:=F_1\cdots F_k.
\]
The important point is that $E_{\mathrm{init}}$ is appended on the $A_1$ side, whereas $F_{\mathrm{final}}$ is appended on the $A_2$ side. Consider the extended state
\[
    \rho^{\mathrm{ext}}
    :=
    \rho_{OA_1A_2}
    \otimes
    \ket{0}\bra{0}_{E_{\mathrm{init}}}
    \otimes
    \ket{0}\bra{0}_{F_{\mathrm{final}}}.
\]
Since these are pure product ancillas appended locally to the two sides,
\begin{equation}
    \Esq(A_1E_{\mathrm{init}}:A_2F_{\mathrm{final}})_{\rho^{\mathrm{ext}}}
    =
    \Esq(A_1:A_2)_\rho .
\end{equation}

For every $j$, the Stinespring dilation of $\mathcal E_j$ gives a local unitary $U_j$ acting on $A_1E_j$, and hence on $A_1E_{\mathrm{init}}$ by tensoring with the identity on the other $E$-registers. Similarly, the Stinespring dilation of $\mathcal F_j$ gives a local unitary $V_j$ acting on $A_2F_j$, and hence on $A_2F_{\mathrm{final}}$ by tensoring with the identity on the other $F$-registers. If the CPTP maps are classically controlled by the flag registers and leave the original flag registers unchanged, the dilations can be chosen to be classically controlled in the same way.

Let
\[
    \Lambda^{\mathrm{ext}}
    :=
    \Lambda\otimes \mathrm{id}_{E_{\mathrm{init}}F_{\mathrm{final}}}.
\]
This is still an LOCC channel across the extended bipartition $A_1E_{\mathrm{init}}:A_2F_{\mathrm{final}}$. Let
\[
    \tau^{\mathrm{ext}}
    :=
    (\mathcal I_O\otimes \Lambda^{\mathrm{ext}})(\rho^{\mathrm{ext}}).
\]
Since the ancillas are not touched by $\Lambda^{\mathrm{ext}}$, the ancillae on each side remain in product state with the rest of the registers, and hence
\begin{equation}
    \Esq(A_1E_{\mathrm{init}}:A_2F_{\mathrm{final}})_{\tau^{\mathrm{ext}}}
    =
    \Esq(A_1:A_2)_\tau .
\end{equation}

The unitary maps $\{U_j\}_{j=1}^k$ on $A_1E_{\mathrm{init}}$ reproduce the same initial recovered states as the CPTP maps $\{\mathcal E_j\}_{j=1}^k$ after tracing out the Stinespring environments. Hence the initial average fidelity condition is unchanged. Likewise, the unitary maps $\{V_j\}_{j=1}^k$ on $A_2F_{\mathrm{final}}$ reproduce the same final recovered states as the CPTP maps $\{\mathcal F_j\}_{j=1}^k$, so the final average fidelity condition is unchanged.

Thus the hypotheses of \Cref{theorem:k-extraction-entanglement-consumption} hold for the extended state and the extended bipartition $A_1E_{\mathrm{init}}:A_2F_{\mathrm{final}}$. Therefore,
\[
\begin{aligned}
&\Esq(A_1E_{\mathrm{init}}:A_2F_{\mathrm{final}})_{\rho^{\mathrm{ext}}}
-
\Esq(A_1E_{\mathrm{init}}:A_2F_{\mathrm{final}})_{\tau^{\mathrm{ext}}} \\
&\qquad\ge
k\bigl[1-2g(\epsilon)-f(\epsilon')\bigr].
\end{aligned}
\]
Finally, combining this lower bound with the two displayed invariance identities completes the proof:
\begin{align}
    \Esq(A_1:A_2)_\rho
    -
    \Esq(A_1:A_2)_\tau&=\Esq(A_1E_{\mathrm{init}}:A_2F_{\mathrm{final}})_{\rho^{\mathrm{ext}}}
-
\Esq(A_1E_{\mathrm{init}}:A_2F_{\mathrm{final}})_{\tau^{\mathrm{ext}}}\\
&\ge
    k\bigl[1-2g(\epsilon)-f(\epsilon')\bigr].
\end{align}

\end{proof}

\begin{proof}[Proof of \Cref{cor:k-extraction-entanglement-consumption-distributed}]\label{pf:cor:k-extraction-entanglement-consumption-distributed}
Append to $A_0$ a classical register $\widehat C_1$ containing a copy of $C_1$, and append to $A_1$ a classical register $\widehat C_0$ containing a copy of $C_0$. Denote the resulting state by $\widehat\rho$, and put
\begin{equation}
    \widehat A_0:=C_0\widehat C_1A_0',
    \qquad
    \widehat A_1:=C_1\widehat C_0A_1'.
\end{equation}
Copying a classical record across the cut is LOCC, while locally discarding the copy recovers the original state. The direct-sum identity therefore gives
\begin{equation}
    \Esq(\widehat A_0:\widehat A_1)_{\widehat\rho}
    =
    \Esq(A_0:A_1)_\rho.
\end{equation}

For a fixed record value $c$, write $O_j$ for the reference qubit $O_{r_j(c)}$. For every $j$, define a recovery map on $\widehat A_0$ that reads $c$ and applies $\mathcal E_j^c$, and define the recovery map on $\widehat A_1$ similarly using $\mathcal F_j^c$. These are local quantum channels because the records are classical. Their average fidelities are the averages over $c$ and $j$ in the statement.

Thus \Cref{cor:k-extraction-entanglement-consumption} applies to $\widehat\rho$ and to the channel that acts as $\Lambda$ on the original registers and as the identity on $\widehat C_0\widehat C_1$. If $\widehat\tau$ is the output, then
\begin{equation}
    \Esq(\widehat A_0:\widehat A_1)_{\widehat\rho}
    -
    \Esq(\widehat A_0:\widehat A_1)_{\widehat\tau}
    \geq
    k\bigl[1-2g(\epsilon)-f(\epsilon')\bigr].
\end{equation}
Locally discarding $\widehat C_0,\widehat C_1$ gives $\tau$, so
\begin{equation}
    \Esq(\widehat A_0:\widehat A_1)_{\widehat\tau}
    \geq
    \Esq(A_0:A_1)_\tau.
\end{equation}
Combining the last two inequalities with the initial equality proves the claim.
\end{proof}

\section{Proofs Used in the Security Analysis}
\label{sec:appendix-security-helpers}

\subsection{From Overlapping to Nonoverlapping Slots}
\label{sec:proof-overlap-to-nonoverlap}

\begin{proof}[Proof of \Cref{prop:overlap-to-nonoverlap}]
\label{pf:prop:overlap-to-nonoverlap}
Let $\mathcal A$ be the original coalition, whose quantum-register size is $Q$ and whose success probability is $p$. Set $D:=\max\{L,R\}$, choose any $\Delta^\star>\max\{\Delta,2D\}$, and set $G:=\Delta^\star-\Delta>0$. The added waiting time $G$ is otherwise arbitrary, and we use the same value between consecutive slots in every round. Let $\delta_0:=0$ and $\delta_j:=(j-1)G$ for $j\geq1$. If $T_{r,j}$ is the original logical time of slot $j$ in round $r$, define
\begin{equation}
g_{r,j}:=(r-1)\delta_s+\delta_j,\qquad T_{r,j}^\star:=T_{r,j}+g_{r,j}.
\end{equation}
Thus every round is stretched in the same way, and the accumulated delay at the end of round $r$ is $r\delta_s=g_{r+1,1}$. We apply this delay to every action between rounds. All locations, verifier roles, message contents, and timing within each slot remain unchanged.

The idea is to run the same finite strategy while delaying each operation only as much as the information and registers it uses require. The new coalition uses the same circuit, local operations, and quantum registers as $\mathcal A$. When it sends a quantum state, we distinguish the local transfer into a counted outgoing register from the later launch of that register toward the verifier. The local source register is free after the transfer, while the outgoing register remains unavailable until the verifier receives or discards its state. This uses only the registers already counted for $\mathcal A$.

Assign delay $g_{r,j}$ to every instruction or idle value from slot $j$ of round $r$ and to the launch and release of every quantum message whose original arrival is the prescribed response deadline for that slot. Once a return instruction determines later idle values, assign those values the same delay. Assign delay $(r-1)\delta_s$ to information and actions fixed before the timed part of round $r$, and delay $r\delta_s$ to actions between rounds $r$ and $r+1$. A quantum message that cannot reach any prescribed response deadline cannot help an accepting execution, so the new coalition may discard it locally while keeping its outgoing register unavailable until the scheduled version of its original release event.

For every circuit event $e$, let $\mathcal P(e)$ contain the earlier events whose outputs or timing can affect whether, when, or how $e$ is performed. This includes every message or null value used in a wait or timeout check and every preceding use and release of a local or outgoing register needed by $e$. Let $h(e)$ be the largest delay assigned to $e$ or to any event on which it directly or indirectly depends, and set $h(e):=0$ if no such delay exists. Let $t(e)$ be its original time, and let $q(f,e)\geq0$ be the required time between $f\in\mathcal P(e)$ and $e$. Schedule $e$ at
\begin{equation}
t^\star(e):=\max\left\{t(e)+h(e),\max_{f\in\mathcal P(e)}\bigl(t^\star(f)+q(f,e)\bigr)\right\},
\end{equation}
where the second maximum is omitted when $\mathcal P(e)$ is empty. Thus an operation occurs only after its shifted time, after all information it uses has arrived, and after every register it needs is free.

Each adversary follows this rule using its own classical record and a copy of the original clock. If a classical message arrives early, the receiver stores it but does not use it before its scheduled delivery. An unused message wire provides its null value at the same event, and a wait or timeout check occurs only after every delivery event on which it depends. Induction over the finite circuit therefore shows that every operation receives the same information as in the original execution.

We now consider a response accepted in slot $i$ of round $r$, where $V_b$ sends the return instruction. By condition~\ref{cond:nonoverlap-ordering}, information from a later round is unavailable before the response deadline. Information from a later slot $j>i$ of $V_{1-b}$ becomes available no earlier than $T_{r,j}-d_{1-b}$, while the response is due at $T_{r,i}+d_b$, where $d_0:=L$ and $d_1:=R$. The available time is
\begin{equation}
(T_{r,i}+d_b)-(T_{r,j}-d_{1-b})=d_0+d_1-(T_{r,j}-T_{r,i})<d_0+d_1,
\end{equation}
which is shorter than the distance between the verifiers. Hence no information first available in a later slot can affect the response. Since the execution is accepted, its outgoing register is not occupied by another quantum message assigned the same or a later delay. Every event needed to prepare the response, including the preceding release of that register, therefore has delay at most $g_{r,i}$.

The scheduling rule now shows that every event preparing the response occurs no later than its original time plus $g_{r,i}$. Its state may enter the outgoing register earlier and remain there unchanged. The launch has delay $g_{r,i}$ and no prerequisite with a larger delay, so it occurs exactly $g_{r,i}$ later than in the original execution. The response follows the same path and reaches the same verifier at the shifted deadline in the same state. Its release is delayed by the same amount, so every later use of the outgoing register still occurs after that register becomes free.

Since the schedule covers the complete circuit, the same argument applies to every round, including a response prepared before the round in which it is received. Initialization and replenishment are delayed with the corresponding round boundaries without changing their paths or order. Every execution accepted by $\mathcal A$ therefore remains accepted.

No additional quantum register is introduced. Each delayed response uses the same counted outgoing register as before, and the register assigned to any locally discarded quantum message remains unavailable until its scheduled release. Only classical information may need to be stored for longer. Hence the new coalition has quantum-register size $Q$ and success probability at least $p$.

Finally, slot $j$ of round $r$ has interval $[T_{r,j}^\star-D,T_{r,j}^\star+D]$. Consecutive centers within a round are separated by $\Delta^\star>2D$, so these intervals are disjoint. The verifier and adversary locations remain unchanged, and no adversary is introduced at $0$.
\end{proof}

\subsection{Loss Due to the Verifiers' Messages}
\label{sec:proof-verifier-entanglement-accounting}

\begin{proof}[Proof of \Cref{lem:verifier-entanglement-accounting}]
\label{pf:lem:verifier-entanglement-accounting}
We first copy the complete earlier classical record to both sides. Locally discarding these copies recovers the original state, so the squashed entanglement does not change. We may therefore fix a value $c$ of the record and average over $c$ at the end using \Cref{lemma:direct-sum}.

Suppose first that the verifier receives a response at the prescribed deadline in an outgoing register $W$ sent by side $A_b$. The register $W$ already belongs to $A_b$, so the proof neither adds nor copies a response register. If no response arrives at the deadline, choose either side as $A_b$ and let $W$ be a one-dimensional trivial register on that side. Set $X:=A_b$ and $Y:=A_{1-b}$. The conditional state $\ket{\psi_c}_{OXY}$ is pure, and hence
\begin{equation}
\Esq(X:Y)_{\psi_c}\geq S(Y)_{\psi_c}-S(XY)_{\psi_c}=S(Y)_{\psi_c}-S(O)_{\psi_c}\geq S(Y)_{\psi_c}-R_H.
\label{eq:measurement-local-entropy}
\end{equation}

Let $z$ contain the verifier's measurement values, the resulting decision, and any additional outcome kept for the analysis. When a timely response is sent, this includes the verifier's chosen basis and both measurement outcomes. When no response is present and the round is rejected, the proof records the rank-one outcome representing the discarded reference qubit.
 Let $\tau_{c,z}^{A_0A_1}$ be the conditional state after the verifier acts and removes its systems. If we ignore $z$, the verifier acts only on $O$ and $W$, so the reduced state of $Y$ is unchanged. Since squashed entanglement is at most the entropy of either side and entropy is concave, we obtain
\begin{equation}
\sum_z p(z|c)\Esq(A_0:A_1)_{\tau_{c,z}}\leq\sum_z p(z|c)S(Y)_{\tau_{c,z}}\leq S(Y)_{\psi_c}\leq\Esq(A_0:A_1)_{\psi_c}+R_H.
\end{equation}
Since squashed entanglement is symmetric, $\Esq(X:Y)=\Esq(A_0:A_1)$ for either value of $b$. Thus the same bound holds regardless of which side sends the response and also when no response is sent.

Averaging over $c$ using \Cref{lemma:direct-sum} proves the result with the complete analysis record. The physical protocol is obtained by deleting any outcome kept only for the analysis, which cannot increase the squashed entanglement.
\end{proof}

\section{Security in the Ideal Continuous-Time Model}
\label{sec:continuous-time-model}

This section derives the continuous-time result (stated informally in \Cref{thm:continuous-main-result-informal}) from \Cref{thm:amplified_security-generalized}. The result does not follow merely by taking the limit $s\to\infty$, because for each finite $s$, \Cref{thm:amplified_security-generalized} considers a different protocol, and an attack on the continuous-time protocol need not arise as a limit of attacks on these finite protocols. 
Instead, we show that any attack on the continuous-time protocol gives, for every finite $s$, an attack with at least the same success probability on a suitable set of $s$ challenge times. Hence, given any fixed finite adversarial quantum-register size, by setting $s$ large enough, the lower bound in \Cref{thm:amplified_security-generalized} eventually exceeds the adversarial quantum-register size. 
 Throughout this section, $m$ and $R_H$ satisfy the hypotheses of that theorem.

\paragraph{The continuous-time protocol.} Recall from condition~\ref{cond:slot-timing} that a slot is specified by the time at which the selected verifier's instruction reaches the claimed location. We refer to this as the \emph{slot time}. The instruction's sending time and the response deadline are determined relative to this time, as in the original protocol. In particular, a slot time is not the beginning of the entire transmission and response interval, and the intervals for different slots may overlap.
\begin{itemize}
\item Each round $k$ has a fixed unit-length window $W_k=[a_k,a_k+1)$ of slot times. Every time in $W_k$ specifies a slot from which the verifiers may choose. All verifier inputs for round $k$ finish traveling along their fixed paths, and their round variables are sampled, before $a_k-D$, where $D=\max\{L,R\}$.
\item The verifiers choose the challenge verifier and the requested label exactly as in \Cref{fig:qpv-protocol-multi-round-unified}. They choose $T_k$ uniformly from $W_k$ and use the slot with time $T_k$. The selected verifier sends its instruction so that it reaches the claimed location at $T_k$, and the honest prover immediately returns the requested state. No return instruction is sent for any other slot in the round.
\item After the window ends, the protocol allows enough time for the response and the remaining actions of the round to finish, even when $T_k$ is arbitrarily close to the end of $W_k$. Any fresh input for round $k+1$ also finishes traveling before $a_{k+1}-D$.
\end{itemize}
Thus, each round has infinitely many possible challenge slots, but only one is selected.\footnote{The protocol requires exact sampling and use of real-valued slot times. It is therefore an ideal mathematical model rather than an efficient implementation under \Cref{def:qpv}.}

\paragraph{Adversarial strategies.} For each fixed $\secparam$, one finite circuit describes the complete adversarial strategy. The same circuit is used for every choice of challenge slots and every possible classical record. Earlier messages, outcomes, and arrival times may determine which operations are used and when they are used, but they do not add operations or communication wires. The circuit has finitely many quantum registers, and communication between adversaries remains classical. No upper bound on the finite size of the circuit or its quantum registers is fixed in advance. To define the average success probability, we assume that the success probability is a measurable function of the selected slot times.

We use the success threshold and register-size bound from \Cref{thm:amplified_security-generalized}. Let $\nu(\secparam):=(1-10^{-7})^{\lceil\log_2^2(\secparam)\rceil}$ and, for every finite integer $s\geq9$, let
\begin{equation}
B_s:=\left\lfloor\frac{m}{\lceil\log_2^2(\secparam)\rceil}\right\rfloor R_H\bigl(2\kappa(s-2.5)-1\bigr),
\end{equation}
where $\kappa>0.08$ is the constant from \Cref{thm:amplified_security-generalized}.

We first restrict each round $k$ to any fixed set of $s$ distinct, equally spaced slot times in $W_k$ and choose the challenge slot uniformly from them. Each selected slot uses the sending time, response deadline, and verifier operation in the continuous-time protocol. All other protocol actions remain unchanged.

\begin{lemma}[Security for equally spaced slots]
\label{lem:arbitrary-finite-schedules}
There exists $\secparam_0$, independent of $s$ and the chosen slot times, such that the following holds for every $\secparam\geq\secparam_0$ and every finite integer $s\geq9$. Any finite-circuit adversarial coalition that succeeds in the resulting protocol with probability at least $\nu(\secparam)$ must have total quantum-register size strictly greater than $B_s$.
\end{lemma}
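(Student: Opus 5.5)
The plan is to observe that the restricted protocol is exactly an instance of $\qpva$ from \Cref{fig:qpv-protocol-multi-round-unified}, up to a time-shift and a uniform rescaling of the inter-slot spacing. Once that is established, the lemma follows from \Cref{thm:amplified_security-generalized} together with its inter-slot-delay independence.

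Concretely, I fix $s\geq9$ and equally spaced slot times $a_k\leq T_{k,1}<\cdots<T_{k,s}<a_k+1$ in each window $W_k$, with common spacing $\Delta_k>0$. The continuous-time protocol fixes the sending time and deadline of each slot relative to the slot time. I therefore check that the restricted protocol is slot-based in the sense of \Cref{def:qpv-slot-based}, and that it satisfies protocol rules~\ref{cond:private-verifier-coordination}--\ref{cond:nonoverlap-ordering}. The relevant facts are that inputs finish traveling before $a_k-D$, that at most one return instruction is sent per round, and that all round actions finish before the next replenishment.

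The one wrinkle is that the spacing $\Delta_k$ may differ between rounds, and the gaps between rounds differ from those of the discrete protocol. I would handle this as follows.

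\begin{itemize}
\item \textbf{Inter-slot spacing.} The proof of \Cref{thm:amplified_security-generalized} first applies \Cref{prop:overlap-to-nonoverlap}. Its scheduling argument works round by round, delaying each round by an amount depending only on that round's original spacing. So any attack on the given schedule converts into an attack on a nonoverlapping schedule with a common spacing $\Delta^\star>\max\{\max_k\Delta_k,2D\}$, using the same quantum-register size and at least the same success probability.
\item \textbf{Inter-round gaps.} These are absorbed by adversaries waiting, and by the delays the reduction already applies to actions between rounds.
\end{itemize}

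Steps 1--5 of the proof of \Cref{thm:amplified_security-generalized} use only the nonoverlapping structure, the uniform choice among the $s$ slots, the verifier $V_b$ and the list index, and the LOCC adversary model. They therefore apply verbatim and give register size strictly greater than $B_s$ whenever the success probability is at least $\nu(\secparam)$.

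The main point needing care is uniformity: $\secparam_0$ must not depend on $s$ or on the slot times. I would trace where ``large enough $\secparam$'' enters the proof of \Cref{thm:amplified_security-generalized}. It appears only in the setup step excluding one-sided coalitions, which requires $(5/9)^m<\nu(\secparam)$, and in the requirement that $\lfloor m/\lceil\log_2^2\secparam\rceil\rfloor\geq1$. Both conditions involve only $m(\secparam)$ and $\secparam$. The first uses $(1+1/s)/2\leq5/9$, which holds for every $s\geq9$. Hence one $\secparam_0$, chosen from the growth condition $m=\omega(\log_2^2\secparam)$, works for all $s$ and all schedules.

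The remaining steps of that proof are nonasymptotic. In particular, the inequality in Step 4 holds for every integer $s\geq9$, as the text notes, so these steps introduce no further dependence on $s$. This verification, especially the claim that the overlap reduction tolerates round-dependent spacings and arbitrary gaps between rounds, is where I expect the main obstacle. If it fails directly, I would first rescale each round's timeline separately, which is permitted because the adversary may insert arbitrary waits, and only then invoke the proposition.
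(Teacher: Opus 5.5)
Your proposal is correct and takes the same approach as the paper: its proof just observes that the argument of \Cref{thm:amplified_security-generalized}, including the reduction of \Cref{prop:overlap-to-nonoverlap}, applies verbatim to these slot schedules and never uses that $s$ is polynomially bounded. Your explicit check that the only asymptotic requirements, $(5/9)^m<\nu(\lambda)$ and $\lfloor m/\lceil\log_2^2\lambda\rceil\rfloor\geq1$, depend on $m(\lambda)$ and $\lambda$ alone (via $(1+1/s)/2\leq5/9$ for every $s\geq9$) is what justifies the uniformity of $\lambda_0$, which the paper leaves implicit.
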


\begin{proof}
The proof of \Cref{thm:amplified_security-generalized} applies to these equally spaced slots and gives the bound $B_s$ for every finite $s\geq9$. It does not require $s$ to be polynomially bounded.

\end{proof}

\begin{theorem}[Security in the ideal continuous-time model]
\label{thm:continuous-time-security}
For every $\secparam\geq\secparam_0$, every finite-circuit LOCC adversarial coalition against the continuous-time protocol has success probability smaller than $\nu(\secparam)$, regardless of its finite quantum-register size. Thus the protocol is secure against any finite amount of preshared entanglement, with no bound on that amount fixed in advance.
\end{theorem}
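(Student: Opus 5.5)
We argue by contradiction, reducing any continuous-time attack to an attack on a finite equally spaced schedule and then invoking \Cref{lem:arbitrary-finite-schedules}. Fix $\secparam\geq\secparam_0$ and suppose a finite-circuit coalition $\mathcal A$ against the continuous-time protocol has total quantum-register size $Q<\infty$ and success probability $p\geq\nu(\secparam)$. Since $B_s\to\infty$ as $s\to\infty$ (the coefficient $2\kappa(s-2.5)-1$ grows linearly and $\lfloor m/\lceil\log_2^2\secparam\rceil\rfloor\geq1$ for large $\secparam$ because $m=\omega(\log_2^2\secparam)$), we fix a finite $s\geq9$ with $B_s\geq Q$. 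It then suffices to exhibit one choice of $s$ equally spaced slot times in each window $W_k$ for which the same circuit $\mathcal A$, run against the restricted protocol, succeeds with probability at least $p$. This contradicts \Cref{lem:arbitrary-finite-schedules}.

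\textbf{Averaging step.} For offsets $u=(u_1,\dots,u_m)\in[0,1/s)^m$, consider the grid $G_k(u_k):=\{a_k+u_k+(i-1)/s: i\in[s]\}\subset W_k$. Sampling $u_k$ uniformly from $[0,1/s)$ and then $i$ uniformly from $[s]$ makes $T_k=a_k+u_k+(i-1)/s$ exactly uniform on $W_k$, independently across rounds. Let $p(u)$ be the success probability of $\mathcal A$ in the restricted protocol with grids $G_k(u_k)$. We will argue that $p(u)$ equals the success probability of $\mathcal A$ in the continuous protocol conditioned on $T_k\in G_k(u_k)$ for all $k$, and that these conditionals average over $u$ to $p$. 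Measurability of the success probability in the slot times, which the adversary model assumes, justifies Fubini here. Hence some $u^\ast$ satisfies $p(u^\ast)\geq p\geq\nu(\secparam)$.

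\textbf{Main obstacle.} The identification of $p(u)$ with the conditional success probability needs care. We must check that the view of $\mathcal A$ in the restricted protocol with grid $G_k(u_k)$ and selected slot $i$ is identically distributed to its view in the continuous protocol with $T_k=a_k+u_k+(i-1)/s$. In the restricted protocol the verifiers are idle at every grid point other than the selected one. In the continuous protocol they are idle at every time other than $T_k$. Both protocols therefore emit exactly one return instruction, at the same time, with the same verifier, the same label, the same deadline, and the same between-round actions. Idleness carries no signal, so the physical message sequences coincide. Since $\mathcal A$ is a fixed circuit whose behaviour depends only on received messages and their arrival times, its execution is the same in both.

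Two points remain to be confirmed. First, the restricted protocol should be exactly an instance covered by \Cref{lem:arbitrary-finite-schedules}: equally spaced slot times with spacing $1/s$, so overlapping slots are handled by \Cref{prop:overlap-to-nonoverlap} inside that lemma. Second, the coalition keeps the same registers, so its quantum-register size is $Q\leq B_s$. The contradiction then gives $p<\nu(\secparam)$. Because $Q$ was arbitrary and finite and $s$ was chosen after $Q$, the bound holds for any finite preshared entanglement.
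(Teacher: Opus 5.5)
Your proposal is correct and follows essentially the same route as the paper's proof. You split a uniform slot time into a uniform grid offset in $[0,1/s)$ and a uniform grid index, average to find a fixed offset whose restricted $s$-slot protocol the same circuit attacks with probability at least $p$, and then contradict \Cref{lem:arbitrary-finite-schedules} by taking $s$ with $B_s\geq Q$; that you fix $s$ first while the paper fixes it last is immaterial.

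The only thing to tidy is that ``conditioning on $T_k\in G_k(u_k)$'' conditions on a null event. As in the paper, define $p(u)$ directly as $s^{-m}\sum_{j_1,\ldots,j_m}f(u_1+j_1/s,\ldots,u_m+j_m/s)$ using the pointwise success function $f$, which is exactly what your view-identification argument justifies.
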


\begin{proof}
Fix $\secparam\geq\secparam_0$ and an adversarial coalition whose total quantum-register size is a finite number $Q$. For $t_k\in[0,1)$, let $f(t_1,\ldots,t_m)$ be the coalition's success probability when the selected slot in round $k$ has time $a_k+t_k$, averaged over all remaining randomness. Let $p$ be its success probability when each round's slot time is chosen independently and uniformly from $W_k$.

Fix any finite integer $s\geq9$. The key observation is that a uniform slot time in $W_k$ can be chosen in two steps.
\begin{enumerate}
\item Choose $U_k$ uniformly from $[0,1/s)$. This fixes $s$ equally spaced slots with times $a_k+U_k,a_k+U_k+1/s,\ldots,a_k+U_k+(s-1)/s$.
\item Independently choose $J_k$ uniformly from $\{0,\ldots,s-1\}$ and select the slot with time $a_k+U_k+J_k/s$.
\end{enumerate}
For each fixed $J_k=j$, the selected slot time is uniform in $[a_k+j/s,a_k+(j+1)/s)$. These $s$ intervals partition $W_k$ and are chosen with equal probability, so the selected slot time is uniform in the whole window. Applying this procedure independently in all $m$ rounds gives
\begin{equation}
p=\mathbb E_{U_1,\ldots,U_m}\left[\frac{1}{s^m}\sum_{j_1,\ldots,j_m=0}^{s-1}f\left(U_1+\frac{j_1}{s},\ldots,U_m+\frac{j_m}{s}\right)\right].
\end{equation}
For fixed values of $U_1,\ldots,U_m$, the expression inside the expectation is the coalition's success probability when each round is restricted to the corresponding $s$ slots. Its average is $p$, so some fixed choice $u_1,\ldots,u_m$, with each $u_k\in[0,1/s)$, gives success probability at least $p$. We fix this choice. In round $k$, the selected slot is now uniform among those with times
\begin{equation}
a_k+u_k,\ a_k+u_k+\frac{1}{s},\ \ldots,\ a_k+u_k+\frac{s-1}{s}.
\end{equation}
Only the choice of challenge slot has changed. For each selected slot, the instruction is sent at the same time, the response has the same deadline, and the verifier's operation occurs at the same time as in the continuous-time protocol. Replenishment and all actions between rounds remain unchanged. The same adversarial circuit therefore attacks this protocol with $s$ slots per round, using the same quantum registers and succeeding with probability at least $p$.

Suppose that $p\geq\nu(\secparam)$. By \Cref{lem:arbitrary-finite-schedules}, we must have $Q>B_s$. This argument applies for every finite $s\geq9$, although the chosen slots may differ with $s$. The register size $Q$ remains fixed, whereas $B_s$ grows without bound as $s$ grows. We may therefore choose a finite $s$ for which $B_s\geq Q$, giving a contradiction. Hence $p<\nu(\secparam)$.
\end{proof}

\begin{corollary}[Higher-dimensional continuous-time security]
\label{cor:continuous-time-higher-dimensional}
For every fixed dimension $d\geq2$, the mobile-verifier generalization of the ideal continuous-time protocol has the same security guarantee as \Cref{thm:continuous-time-security}.
\end{corollary}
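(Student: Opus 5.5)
The plan is to combine the generic one-dimensional-to-$d$-dimensional lift of \Cref{thm:generic-reduction} with the continuous-time security of \Cref{thm:continuous-time-security}. We argue by contraposition. Suppose a finite $d$-dimensional LOCC coalition $\widetilde{\mathcal A}$, with finite total quantum-register size $Q$, attacks the mobile-verifier continuous-time protocol with success probability at least $\nu(\secparam)$ for some $\secparam\geq\secparam_0$. After translating and rotating, the claimed location is the origin and the verifiers sit at $(\mp R,\vec0)$.

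We then apply the map $F_\varepsilon(x,\vec y)=x+\varepsilon\|\vec y\|_2^2$ exactly as in the proof of \Cref{thm:generic-reduction}. Because the coalition is finite, we can choose a nonzero $\varepsilon$ with three properties: no pairwise distance among the verifiers, the claimed location, and the adversaries increases; the verifier and claimed locations are fixed; and no adversary is mapped to $0$. This choice depends only on the finitely many locations. It does not depend on the timing structure, so the uncountable set of possible slot times causes no difficulty. We obtain a one-dimensional coalition $\mathcal A$ with the same registers, circuit, and initial state.

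The simulation is the same as before. Classical messages that arrive early are buffered until their original delivery times. Quantum responses are placed in the same outgoing register at the original time and launched with a delay equal to the saved travel time. On-axis adversaries that act on verifier inputs are unchanged, and off-axis adversaries mapped onto the input path ignore the input.

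This simulation is performed pointwise for each realization of the continuous slot times $(t_1,\ldots,t_m)$. For every fixed realization it reproduces every verifier receipt and decision. Hence the conditional success probability $f(t_1,\ldots,t_m)$ of $\mathcal A$ is at least that of $\widetilde{\mathcal A}$. Averaging over the uniform slot times, using the measurability assumption, gives success probability at least $\nu(\secparam)$ against the one-dimensional continuous-time protocol, with the same finite register size $Q$. This contradicts \Cref{thm:continuous-time-security}, which rules out every finite $Q$ for $\secparam\geq\secparam_0$.

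The step we expect to be the main obstacle is checking that the continuous-time protocol satisfies the two structural conditions the reduction uses, so that the earlier proof applies verbatim. The first condition is that inputs travel along the fixed path from $V_0$ to $V_1$ and abort if they are received. The continuous-time protocol does this, since all inputs finish traveling before $a_k-D$. The second condition is that the only prescribed fixed-size responses are tested at the deadline, and that additional responses cause rejection. This also holds because a single slot is selected per round and the round finishes before the next inputs are sent.

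We would state this verification explicitly. We would also note that the finite-circuit adversary model, with timing determined by earlier messages, is preserved under the mapping. The reason is that the mapped coalition uses the same circuit with buffered delays.
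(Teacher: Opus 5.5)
Your proposal is correct and follows essentially the same route as the paper: transfer the $F_\varepsilon$-based lift of \Cref{thm:generic-reduction} to the continuous-time protocol, whose input paths, in-slot timing and message ordering are unchanged, to obtain a one-dimensional coalition with the same register size and success probability, and then invoke \Cref{thm:continuous-time-security}. Your explicit checks of the two structural conditions and of the pointwise-in-slot-time and measurability step spell out what the paper's short proof leaves implicit.
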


\begin{proof}
The continuous-time protocol changes only the choice of challenge slot. The verifier-input paths, the timing of instructions and responses within each slot, and the order of all other messages remain as in \Cref{fig:qpv-protocol-multi-round-unified}. We therefore apply the entire argument from the proof of \Cref{thm:generic-reduction}, including the part that places adversaries away from the verifier line onto that line and adjusts the times at which they act. This gives a one-dimensional adversarial coalition with the same quantum-register size and at least the same success probability. The selected slot times and the response deadlines do not change. The claim follows from \Cref{thm:continuous-time-security}.
\end{proof}

\begin{remark}[Scope and relation to the general impossibility result]
The protocol requires the verifiers to sample and use an exact real-valued slot time. Also, arbitrary entanglement in \Cref{thm:continuous-time-security} means any finite amount, with no bound on that amount fixed in advance. The theorem does not cover an infinite-dimensional state or a strategy that performs infinitely many operations before a deadline. This result does not contradict the general impossibility result of Buhrman et al.~\cite{BCF+14}, which considers finite-dimensional protocols whose classical inputs, including the challenge time, come from fixed finite sets. In our ideal continuous-time protocol, the selected slot time is itself a real-valued input and may be any point of $W_k$. There are therefore uncountably many possible challenge slots, placing this ideal protocol outside the scope of that impossibility result.
\end{remark}
\end{document}